\newtheorem{theo}{Theorem}[section]
\newtheorem{lemma}[theo]{Lemma}
\newtheorem{prop}[theo]{Proposition}
\newtheorem{cor}[theo]{Corollary}
\newtheorem{defi}[theo]{Definition}
\newtheorem{rem}[theo]{Remark}
\newenvironment{proofof}[1]{\begin{proof}[Proof of #1]}{\end{proof}}
\newenvironment{reminder}[1]{\bigskip
	\noindent {\bf Reminder of #1  }\em}{\smallskip}
\newcommand{\R}{\mathbb{R}}
\newcommand{\Ex}{\mathbb{E}}
\newcommand{\distr}{\mathcal{D}}
\newcommand{\alg}{\mathbb{A}}
\newcommand{\poly}{\operatorname*{poly}}
\newcommand{\BQP}{\mathsf{BQP}}
\newcommand{\AM}{\mathsf{AM}}
\newcommand{\MA}{\mathsf{MA}}
\newcommand{\PTIME}{\mathsf{P}}
\newcommand{\NP}{\mathsf{NP}}
\newcommand{\coNP}{\mathsf{coNP}}
\newcommand{\UPP}{\mathsf{UPP}}
\newcommand{\cc}{\mathsf{cc}}
\newcommand{\eps}{\epsilon}
\newcommand{\polylog}{\operatorname*{polylog}}
\renewcommand{\epsilon}{\varepsilon}
\def\ShowAuthNotes{1}
\newcommand{\authnote}[2]{\ \\ \textcolor{red}{\parbox{0.9\linewidth}{[{\footnotesize {\bf #1:} { {#2}}}]}}\newline}
\newcommand{\authnote}[2]{}
\newcommand{\WO}{\widetilde{O}}
\let\svfootnoterule\footnoterule
\renewcommand\footnoterule{\vfill\svfootnoterule}
\newcommand{\SAT}{\textsf{SAT}}
\newcommand{\SETH}{\textsf{SETH}}
\newcommand{\IntOV}{\textsf{$\mathbb{Z}$-OV}}
\newcommand{\Hopcroft}{\IntOV}
\newcommand{\OPT}{\textsf{OPT}}
\newcommand{\WT}{\widetilde}
\newcommand{\OV}{\textsf{OV}}
\newcommand{\ExactIP}{\textsf{Exact-IP}}
\newcommand{\MaxIP}{\textsf{Max-IP}}
\newcommand{\MinIP}{\textsf{Min-IP}}
\newcommand{\MAX}{\textsf{Max}}
\newcommand{\MIN}{\textsf{Min}}
\newcommand{\WMAX}{\widetilde{\MAX}}
\newcommand{\WMIN}{\widetilde{\MIN}}
\newcommand{\MAMaxIP}{\textsf{$\WMAX$-IP}}
\newcommand{\MAMinIP}{\textsf{$\WMIN$-IP}}
\renewcommand{\MAMaxIP}{\textsf{Apx-Max-IP}}
\renewcommand{\MAMinIP}{\textsf{Apx-Min-IP}}
\newcommand{\highlight}[1]{\medskip \noindent \textbf{#1}}
\newcommand{\MAXSAT}{\textsf{MAX-$\SAT$}}
\newcommand{\BCPp}[1][p]{\textsf{Bichrom.-$\ell_#1$-Closest-Pair}}
\newcommand{\FPp}[1][p]{\textsf{$\ell_#1$-Furthest-Pair}}
\newcommand{\BCP}{\textsf{Bichrom.-Closest-Pair}}
\newcommand{\FP}{\textsf{Furthest-Pair}}
\newcommand{\sat}{\textsf{sat}}
\newcommand{\val}{\textsf{val}}
\newcommand{\cla}{\mathcal{C}}
\newcommand{\SATPAIR}{\textsf{-Satisfying-Pair}}
\newcommand{\JaccardIndexPair}{\textsf{Jaccard-Index-Pair}}
\newcommand{\dist}{\textsf{dist}}
\newcommand{\thOV}{\textsf{3-OV}}
\newcommand{\IP}{\textsf{IP}}
\title{An Equivalence Class for Orthogonal Vectors\thanks{Supported by NSF CCF-1741615 (CAREER: Common Links in Algorithms and Complexity). Any opinions, findings and conclusions or recommendations expressed in this material are those of the authors and do not necessarily reflect the views of the National Science Foundation.}}
\author{Lijie Chen\\MIT
	\and Ryan Williams\\MIT}
\date{}
\begin{document}
	\maketitle
	
	\begin{abstract}
		The Orthogonal Vectors problem ($\OV$) asks: \emph{given $n$ vectors in $\{0,1\}^{O(\log n)}$, are two of them orthogonal?} $\OV$ is easily solved in $O(n^2 \log n)$ time, and it is a central problem in fine-grained complexity: dozens of conditional lower bounds are based on the popular hypothesis that $\OV$ cannot be solved in (say) $n^{1.99}$ time. However, unlike the APSP problem, few other problems are known to be non-trivially \emph{equivalent} to $\OV$.
		
		We show $\OV$ is truly-subquadratic equivalent to several fundamental problems, all of which (a priori) look harder than $\OV$. A partial list is given below:
		\begin{enumerate}
			
			\item ($\MinIP/\MaxIP$) Find a red-blue pair of vectors with minimum (respectively, maximum) inner product, among $n$ vectors in $\{0,1\}^{O(\log n)}$.
			
			\item ($\ExactIP$) Find a red-blue pair of vectors with inner product equal to a given target integer, among $n$ vectors in $\{0,1\}^{O(\log n)}$.
			
			\item ($\MAMinIP/\MAMaxIP$) Find a red-blue pair of vectors that is a 100-approximation to the minimum (resp. maximum) inner product, among $n$ vectors in $\{0,1\}^{O(\log n)}$.
			
			\item (Approximate $\BCPp$) Compute a $(1 + \Omega(1))$-approximation to the $\ell_p$-closest red-blue pair (for a constant $p \in [1,2]$), among $n$ points in $\R^d$, $d \le n^{o(1)}$.
			
			\item (Approximate $\FPp$) Compute a $(1 + \Omega(1))$-approximation to the $\ell_p$-furthest pair (for a constant $p \in [1,2]$), among $n$ points in $\R^d$, $d \le n^{o(1)}$.
		\end{enumerate}
		Therefore, quick constant-factor approximations to maximum inner product imply quick \emph{exact} solutions to maximum inner product, in the $O(\log n)$-dimensional setting. Another consequence is that the ability to find vectors with zero inner product suffices for finding vectors with maximum inner product. 
		
		Our equivalence results are robust enough that they continue to hold in the data structure setting. In particular, we show that there is a $\poly(n)$ space, $n^{1-\epsilon}$ query time data structure for \emph{Partial Match} with vectors from $\{0,1\}^{O(\log n)}$ if and only if such a data structure exists for $1+\Omega(1)$ \emph{Approximate Nearest Neighbor Search} in Euclidean space.
		
		To establish the equivalences, we introduce two general frameworks for reductions to $\OV$: one based on $\Sigma_2$ communication protocols, and another based on locality-sensitive hashing families. 
		
		In addition, we obtain an $n^{2 - 1/O(\log c)}$ time algorithm for $\MAMinIP$ with $n$ vectors from $\{0,1\}^{c\log n}$, matching state-of-the-art algorithms for $\OV$ and $\MAMaxIP$. As an application, we obtain a faster algorithm for approximating ``almost solvable'' $\MAXSAT$ instances.
	\end{abstract}

	\section{Introduction}
	
	Fine-grained complexity asks: \emph{what is the ``correct'' exponent in the running time of a given problem?} For a problem known to be solvable in time $t(n)$, can it be solved in time $t(n)^{1-\eps}$, for a constant $\eps > 0$? If not, can we give evidence that such an improvement is impossible? In recent years, based on several conjectures such as the Orthogonal Vectors Conjecture (OVC) (implied by the Strong Exponential Time Hypothesis, a.k.a. SETH\footnote{The Strong Exponential Time Hypothesis (SETH) states that for every $\eps > 0$ there is a $k$ such that $k$-$\SAT$ cannot be solved in $O((2-\eps)^n)$ time~\cite{IP01-SETH}.}), the APSP Conjecture and the $k$-Sum Conjecture, tight conditional polynomial-time lower bounds have been established for problems in $\PTIME$ from many areas of computer science.
	
	In a nutshell, results in the \emph{Fine-Grained Complexity} program begin with the conjecture that it is \emph{hard to improve the runtime exponent} of some problem $\Pi_{\textsf{hard}}$, and show it is also hard to improve the exponent of another problem $\Pi$, by constructing a ``fine-grained'' reduction from $\Pi_{\textsf{hard}}$ to $\Pi$. This is similar to the situation with $\NP$-completeness, where one shows a problem $\Pi$ is ``hard'' by giving a polynomial-time reduction from another $\NP$-complete problem to $\Pi$.
	
	A crucial conceptual difference between the Fine-Grained Complexity program and $\NP$-hardness is that \emph{all of the thousands of known $\NP$-complete problems form an equivalence class}: there is either a polynomial-time algorithm for all of them, or no polynomial-time algorithm for any of them. In contrast, with Fine-Grained Complexity, few equivalence classes are known, especially for those numerous problems whose hardnesses are based on the SETH/OVC (a notable exception is the equivalence class for APSP~\cite{WW10-subcubic,williamssome}; see the related works section for more details). 
    
To give three (out of many examples), it is known that Edit Distance~\cite{BI15}, Frechet Distance~\cite{bringmann2014walking}, and computing the diameter of a sparse graph~\cite{RV13} cannot be done in $n^{2-\delta}$ time for any $\delta > 0$, assuming the following problem is not in $n^{2-\eps}$ time for a universal $\eps > 0$:
    
\begin{framed}
{\bf Orthogonal Vectors} ($\OV$): Given $n$ vectors in $\{0,1\}^d$ where $d = O(\log n)$, are there two vectors with inner product zero?
\end{framed}

However, it is not known if Edit Distance, Frechet Distance, or Diameter are \emph{equivalent} to $\OV$, in any interesting sense. 

Prior work has established an equivalence class for ``moderate-dimensional $\OV$'', where the vector dimension $d = n^{\delta}$ for a constant $\delta > 0$~\cite{gao2017completeness}. In particular, this version of $\OV$ is equivalent to various sparse graph and hypergraph problems. It seems likely that ``moderate-dimensional $\OV$'' is much more difficult to solve than the ``low-dimensional'' setting of $d = O(\log n)$ as defined above, and the SETH already implies that the low-dimensional case is difficult~\cite{Wil05,williams2014finding}. Thus the problem of establishing an equivalence class for ``low-dimensional'' $\OV$ is an interesting one.


\subsection{An Equivalence Class for Sparse Orthogonal Vectors}	
	
	Our first result is an interesting equivalence class for Orthogonal Vectors in the $O(\log n)$-dimensional setting. To formally state our results, we begin with some notation. 
	
	\begin{itemize}
		\item 
		For a problem $\Pi$ on Boolean vectors, we say \emph{$\Pi$ is in truly subquadratic time} if there is an $\eps > 0$ such that for all constant $c$, $\Pi$ is solvable in $O(n^{2-\eps})$ time on $n$ vectors in $c \log n$ dimensions. Note the Orthogonal Vectors Conjecture (OVC) is equivalent to saying ``$\OV$ is not in truly subquadratic time.''
		\item 
		For a problem $\Pi$ on real-valued points, we say  \emph{$\Pi$ can be approximated in truly subquadratic time}, if there is a $\delta > 0$ such that for all $\eps > 0$, a $(1+\eps)$ approximation to $\Pi$ is computable in $O(n^{2-\delta})$ time.
		\item 
		For a problem $\Pi$ with output in $[0,L]$ (for a parameter $L$), we say \emph{$\Pi$ can be additively approximated in truly subquadratic time}, if there is a $\delta > 0$ such that for all $\eps > 0$, an $\epsilon \cdot L$ additive approximation to $\Pi$ is computable in $O(n^{2-\delta})$ time.
	\end{itemize}
	
	\begin{theo}\label{theo:equiv-class-OV}
		The following problems are either all in (or can be approximated in) truly subquadratic time, or none of them are:\footnote{A list of formal definitions of the these problems can be found in Definition~\ref{defi:prob-list}.}	
			
		\begin{enumerate}
			\item ($\OV$) Finding an orthogonal pair among $n$ vectors.
			
			\item ($\MinIP/\MaxIP$) Finding a red-blue pair of vectors with minimum (respectively, maximum) inner product, among $n$ vectors.
			
			\item ($\ExactIP$) Finding a red-blue pair of vectors with inner product exactly equal to a given integer, among $n$ vectors.
			
			\item ($\MAMinIP/\MAMaxIP$) Finding a red-blue pair of vectors that is a 100-approximation to the minimum (resp. maximum) inner product, among $n$ vectors.\footnote{The constant $100$ can be replaced by any fixed constant $\kappa > 1$.}
			
			\item (Approximate Bichrom. $\ell_p$-Closest Pair) Approximating the $\ell_p$-closest red-blue pair (for a constant $p \in [1,2]$), among $n$ points.
			
			\item (Approximate $\ell_p$-Furthest Pair) Approximating the $\ell_p$-furthest pair (for a constant $p \in [1,2]$), among $n$ points.
			
			\item (Approximate Additive $\MaxIP$) Additively approximating the maximum inner product of all red-blue pairs, among $n$ vectors.
			
			\item (Approximate $\JaccardIndexPair$) Additively approximating the maximum Jaccard index\footnote{see Theorem~\ref{defi:prob-list-geo} for a formal definition} between $a \in A$ and $b \in B$, where $A$ and $B$ are two collections of $n$ sets.
			
		\end{enumerate}
	\end{theo}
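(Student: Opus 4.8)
The plan is to prove Theorem~\ref{theo:equiv-class-OV} by building a web of fine‑grained reductions that strongly connects all eight problems through $\OV$: for each other problem $\Pi$ on the list I will exhibit a reduction $\OV \le \Pi$ and a reduction $\Pi \le \OV$, where ``$\le$'' means that a truly‑subquadratic algorithm (in the appropriate exact/approximate sense above) for the target yields one for the source, incurring only $n^{o(1)}$ time overhead and keeping every intermediate vector family in dimension $O(\log n)$ (and every intermediate point set in dimension $n^{o(1)}$). Composing reductions through $\OV$ then gives the full equivalence.

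\smallskip
\noindent\textbf{Step 1 ($\OV \le \Pi$, the ``easy'' directions).} After the standard monochromatic‑to‑bichromatic reduction, $\OV \le \ExactIP$ (target $0$) and $\OV \le \MinIP$ (a minimum inner product of $0$ certifies an orthogonal pair) are immediate. Complementing one color class after weight‑padding each class (pad in disjoint red/blue coordinate blocks, so cross inner products are unchanged while all red, resp.\ blue, vectors attain a common weight $W$) gives $\langle a', \overline{b'}\rangle = W - \langle a,b\rangle$, turning minimum into maximum, so $\OV \le \MaxIP$; and on the Boolean cube $\|a'-\overline{b'}\|_p^p = \text{const} - 2\langle a',\overline{b'}\rangle$ relates these to the $\ell_p$ closest/furthest‑pair and the Jaccard objectives. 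To make the \emph{approximate} variants detect orthogonality I first reduce $\OV$ to a constant‑gap version---``$\exists$ a pair with inner product $0$'' versus ``every pair has inner product $\ge (1-\delta)d'$'', in dimension $d' = O(\log n)$---via a gap amplification (a code/PCP‑style argument, kept logarithmic‑dimensional); composed with the complementation gadget, the maximum inner product is then $d'$ in the yes‑case and at most $\delta d'$ in the no‑case, so a $100$‑approximation, an $\epsilon d'$‑additive approximation, a $(1+\Omega(1))$‑approximate closest/furthest $\ell_p$ pair, and the analogous Jaccard approximation each decide it; a coordinate‑replication gadget ($100$ copies of each coordinate, so $\langle a,b\rangle$ becomes $100\langle a,b\rangle$) similarly yields $\OV \le \MAMinIP$.

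\smallskip
\noindent\textbf{Step 2 ($\Pi \le \OV$ for the inner‑product/Jaccard problems).} These go through the first framework: a $\Sigma_2$ (in particular Merlin--Arthur) communication protocol, with perfect completeness and a constant soundness gap, for the pair‑predicate ``$\langle a,b\rangle \ge t$'' yields a reduction to $\OV$---loop over all Merlin messages, and for each build an $\OV$ instance whose coordinates are indexed (with $O(1)$ multiplicity) by the other player's (challenge) messages, arranging that a red vector for $a$ and a blue vector for $b$ are orthogonal iff the protocol accepts on that Merlin message. Plugging in an efficient such protocol for (gap) inner product makes the number of Merlin messages $n^{o(1)}$ and the resulting $\OV$ dimension $o(\log n)$, so the loop costs $n^{o(1)}\cdot n^{2-\epsilon} = n^{2-\epsilon+o(1)}$; binary‑searching over the $O(\log n)$ thresholds then gives $\MaxIP,\MinIP,\ExactIP \le \OV$, whence $\MAMaxIP$, $\MAMinIP$, additive‑$\MaxIP$, and (grouping by weight class) $\JaccardIndexPair$ reduce to $\OV$ as well, being trivially no harder than their exact counterparts. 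For the geometric problems: given an LSH family for $\ell_p$ ($p\in[1,2]$) of some constant exponent $\rho<1$---which suffices because only a $(1+\Omega(1))$‑approximation is required---concatenate $\Theta(\log n)$ independent sub‑hashes (each into $O(1)$ bits) to map each of the $n$ points in $\R^{n^{o(1)}}$ into $\{0,1\}^{O(\log n)}$, so that ``$\exists$ a close red‑blue pair'' becomes an $O(\log n)$‑dimensional approximate $\MinIP$ (for \BCPp) resp.\ approximate $\MaxIP$ (for \FPp) question, reduced to $\OV$ as above; repeating the hash experiment $n^{o(1)}$ times (amortizing the $\rho$‑overhead against the subquadratic savings) finds the close pair with high probability, and applying the same reduction to a single query gives the data‑structure equivalence stated in the introduction.

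\smallskip
\noindent\textbf{Step 3 (the main obstacle).} Two quantitative engines carry the proof and are where essentially all the difficulty lies. First, one must realize the $\Sigma_2$/Merlin--Arthur protocol for (gap) inner product so tightly that it has $n^{o(1)}$ existential messages \emph{and} only $O(\log\log n)$ challenge bits---so that the induced $\OV$ instances genuinely live in $O(\log n)$ dimensions---together with perfect completeness and a usable soundness gap; this is exactly what makes even the \emph{exact} inner‑product problems reduce to $\OV$. Second, the gap amplification turning exact $\OV$ into a constant‑multiplicative‑gap $\OV$ must stay in $O(\log n)$ dimensions, since every ``$\OV \le$ approximate‑$\Pi$'' reduction depends on it, and it must be tuned so that a single universal constant $\kappa>1$ (the ``$100$'', and the implicit constant in $1+\Omega(1)$) propagates consistently around the whole web. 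The remaining ingredients---monochromatic$\leftrightarrow$bichromatic conversions, complementation with weight‑padding, the $\|x-y\|_p^p \leftrightarrow \langle x,y\rangle$ identities, and JL/discretization steps for the geometric instances---are routine, but each must be rechecked to preserve ``dimension $O(\log n)$'' (resp.\ $n^{o(1)}$) and ``$n^{o(1)}$ time overhead'', because the definition of truly‑subquadratic time is sensitive to precisely these parameters.
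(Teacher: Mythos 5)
Your high-level decomposition is the same as the paper's: show all the exact Boolean problems reduce to $\OV$ via communication protocols, show $\OV$ reduces to the geometric/approximate problems, and show the geometric/approximate problems reduce back to $\OV$ via locality-sensitive hashing. But two of your three legs contain gaps or confusions worth flagging.

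First, the direction $\OV \le \Pi$ for the \emph{approximate} variants ($\MAMaxIP$, additive $\MaxIP$, approximate $\BCPp$, $\FPp$, Jaccard) is precisely the hard part, and your proposal does not actually supply it. The paper obtains it by citing Theorem~4.1 of~\cite{Rub18BetterMA} (Lemma~\ref{lm:OV-to-BCPp-FPp}), which is the distributed PCP / $\MA$-protocol reduction using AG codes. You describe it as ``reduce $\OV$ to a constant-gap version where yes-instances have an orthogonal pair and no-instances have all inner products $\ge (1-\delta)d'$, via a code/PCP-style argument kept logarithmic-dimensional.'' Taken literally, that target gap (one pair at IP $0$ versus \emph{every} pair at IP $\ge (1-\delta)d'$) is not what the distributed PCP gives, and there is no known $O(\log n)$-dimensional combinatorial blow-up achieving it; what the $\MA$ protocol actually yields is a family of $n^{o(1)}$ instances with a \emph{small additive} gap in the inner products (on the order of $\Omega(1/\exp(c)) \cdot d$ for dimension $c\log n$), which then suffices for additive $\MaxIP$ and, via the $\ell_p$ embedding, for $(1+\Omega(1))$-approximate $\BCPp$/$\FPp$. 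You flag this step as a ``main obstacle'' in your Step~3, which is good, but the proposal as written does not resolve it, and resolving it is essentially all of~\cite{Rub18BetterMA}. Note also that $\OV \le \MAMinIP$ and $\OV \le \MinIP$ are trivial (a constant-factor approximation of $0$ is $0$) — no coordinate-replication gadget is needed, and replication would not help a multiplicative approximation anyway.

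Second, your description of the protocol framework mixes up $\Sigma_2^\cc$ with Merlin--Arthur. The paper's Theorem~\ref{theo:Sigma2-to-OV} requires a \emph{deterministic} $\exists\forall$ protocol: Merlin commits, Megan universally challenges, Alice/Bob verify exactly. There is no ``constant soundness gap.'' The protocol the paper actually uses for $\ExactIP$ (Proposition~\ref{prop:Sigma2cc-for-IP}) is elementary: split the $c\log n$ coordinates into $\eps\log n$ blocks, have Merlin announce all block inner products, Megan pick a block index, Bob reveal his block, Alice check. No codes, no randomness. A perfect-completeness $\MA$ protocol does give a $\Sigma_2^\cc$ protocol (with Megan playing the coins), so your framing is not fatally wrong, but it obscures how simple this leg is. Your parameter accounting also slips: the resulting $\OV$ dimension is $2^{m_2+\ell}$, which for the paper's choices is $\Theta(\eps\log n \cdot 2^{c/\eps}) = O(\log n)$, not $o(\log n)$. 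Finally, the LSH leg is fine (it matches Theorem~\ref{theo:LSH-to-MaxIP-general} / Corollary~\ref{cor:LSH-to-MaxIP-metric}), though you have the polarity swapped — closest pair reduces to additive $\MaxIP$ and furthest pair to additive $\MinIP$, not the reverse — and the reduction goes through one Chernoff-concentrated instance, not $n^{o(1)}$ repetitions tied to the LSH exponent $\rho$.
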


	For approximate additive $\MaxIP$, $L$ (the additive approximation parameter) is simply the dimensions of the vectors, while for approximate $\JaccardIndexPair$, $L$ is $1$. For $\Pi$ among the first four problems listed above, we use the notation $\Pi_{n,d}$ to denote $\Pi$ with $n$ vectors from $\{0,1\}^{d}$. \footnote{In the paper we will consider red-blue version for all the above problems, and $\Pi_{n,d}$ denotes $\Pi$ with two sets of $n$ vectors from $\{0,1\}^d$.} For the last four problems, we assume the dimensions (or the size of the sets) and the bit complexity of the points are $n^{o(1)}$ throughout the paper.
	
	Prior work showed $\OV$ is equivalent to Dominating Pair\footnote{Given two sets $A,B$ of vectors from $\mathbb{R}^{O(\log n)}$, find $(a,b) \in A \times B$ such that $b$ \emph{dominates} $a$ (that is, $b_i  > a_i$ for all $i$).}~\cite{Chan17a} and other simple set problems~\cite{BorassiCH16}; our results add several interesting new members into the equivalence class. All problems listed above were already known to be $\OV$-hard~\cite{Wil05,AW15,Rub18BetterMA}. Our main contribution here is to show that \textbf{\emph{they can all be reduced back to $\OV$}}. For example, detecting an orthogonal \emph{Boolean} pair ($\OV$) is equivalent to approximating the distance between two sets of points in $\R^{n^{o(1)}}$ ($\BCP$)! 
	
	In previous works~\cite{gao2017completeness,ABDN2018MoreCons}, several general techniques are given for constructing reductions to $\OV$. These papers focus on the ``moderate-dimensional'' setting, and their reductions can not be used directly in the ``sparse'' $O(\log n)$ dimensional setting here.	
	
	\paragraph*{Our Techniques: Two Reduction Frameworks for $\OV$.} In order to construct reductions to $O(\log n)$ dimensional $\OV$, we propose the following two general frameworks.
	
	\begin{itemize}
		\item \textbf{$\Sigma_2^\cc$ Protocols.} Inspired by previous works on the connections between communication complexity and fine-grained complexity~\cite{ARW17-proceedings,Rub18BetterMA,karthik2017parameterized,abboud2018fast,chen2018hardness,CGLRR18Meets}, we draw another connection along this line, showing that an efficient $\Sigma_2^\cc$ protocol\footnote{see Definition~\ref{defi:Sigma-2-communication-protocol} for a formal definition} for a function $F$ implies a reduction from a related problem to $\OV$. We use this technique to establish the equivalences among the first four problems in Theorem~\ref{theo:equiv-class-OV}.
		
		\item \textbf{Locality-sensitive Hashing Families (LSH).} To show equivalences between $\OV$ and the last four approximation problems, we apply known tools from \emph{locality-sensitive hashing}. In particular, we show that for any metric admitting an efficient LSH family, finding the closest bichromatic pair or the furthest pair w.r.t. this metric can be reduced to $\MAMaxIP$, which can in turn be reduced to $\OV$.
	\end{itemize}
	
	We remark that there are no non-trivial lower bounds known against $\Sigma_2^\cc$ protocols~\cite{GPW18CCLandscape}, which suggests that $\Sigma_2^\cc$ protocols \textbf{\em could be very powerful}, and the first approach (Theorem~\ref{theo:Sigma2-to-OV}) may be applicable to many other problems. This is not the case for $\MA^\cc$ protocols which were used in several previous works~\cite{ARW17-proceedings,Rub18BetterMA,karthik2017parameterized,chen2018hardness}: for example, there is an essentially tight $\Omega(\sqrt{n})$ $\MA^\cc$ lower bound for Set-Disjointness~\cite{Klauck2003MALowB,AW09-algebrization,chen2018hardness}. These two frameworks are discussed in Section~\ref{sec:intro-framework} in detail.	
	
	\paragraph*{Equivalence Between Partial Match and Approximate Nearest Neighbor Search.} Our reductions are robust enough that they also hold in the data structure setting. In particular, consider the following two fundamental data structure problems:
	
	\begin{itemize}
		\item \textbf{Partial Match:} Preprocess a database $\mathcal{D}$ of $n$ points in $\{0,1\}^d$ such that, for all query of the form $q \in \{0,1,\star\}^{d}$, either report a point $x \in D$ matching all non-$\star$ characters in $q$ or report that no $x$ exists.
		
		
		\item \textbf{Approximate Nearest Neighbor Search (NNS) in $\ell_p$ space:} Preprocess a database $\mathcal{D}$ of $n$ points from $\mathbb{R}^{m}$ such that, for all query point $x \in \mathbb{R}^m$, one can find a point $y \in \mathcal{D}$ such that $\|x-y\|_{p} \le (1+\epsilon) \cdot \min_{z \in \mathcal{D}} \| x - z \|_p$.
		
	\end{itemize}
	\begin{rem}
	We remark that Partial Match is known to be equivalent to an online version of $\OV$~\cite{abboud2015more} (see also Section~\ref{sec:data-structure}), and NNS in $\ell_p$ space is simply the online version of $\BCPp[p]$.
	\end{rem}
	
	Partial Match has been studied extensively for decades (see e.g. Rivest's PhD thesis~\cite{rivest1974analysis}). However, the algorithmic progress beyond trivial solutions (building a look-up table of size $2^{\Omega(d)}$, or trying all $n$ points on each single query) have been quite limited. It is generally believed that it is intractable when $d$ is large enough. Many unconditional lower bounds are known in the cell-probe model~\cite{MiltersenNSW98,BorodinOR99,JayramKKR04,PanigrahyTW08,PatrascuT09}, but the gap between the best data structures~\cite{CharikarIP02,ColeGL04} and known lower bounds remains very large. 
	
	Approximate Nearest Neighbor Search has a wide range of applications in computing, including machine learning, computer vision, databases and others (see~\cite{AndoniI08,Moraleda08} for an overview).
Tremendous research effort has been devoted to this problem (see e.g. the recent survey of~\cite{andoni2018approximate} and Razenshteyn's
	PhD thesis~\cite{Razenshteyn17}). Yet all known algorithms exhibit a query time of at least $n^{1 - O(\epsilon)}$ when the approximation ratio is $1+\epsilon$, approaching the brute-force query time $n$ when $\epsilon$ goes to $0$. 
	
	In general, whether there is a \emph{polynomial} space, $n^{1-\delta}$ query time data structure for Partial Match for all $d = O(\log n)$, or Approximate NNS for all constant approximation ratio $> 1$ are two long-standing open questions.\footnote{Under $\SETH$, it is shown that there is no such data structure with \emph{polynomial pre-processing time}~\cite{AhlePR016,Wil05,Rub18BetterMA}.} We show these two questions are \textbf{\emph{equivalent}}.

	\begin{theo}\label{theo:eq-partial-match-NNS}
		The following are equivalent:		
		\begin{itemize}
			\item There is a $\delta > 0$ such that for all constant $c$, there is a data structure for Partial Match with string length $d = c \log n$ that uses $\poly(n)$ space and allows $n^{1-\delta}$ query time.
			
			\item There is a $\delta > 0$ such that for all $\epsilon > 0$, there is an data structure for Approximate NNS in $\ell_p$ with approximation ratio $(1+\epsilon)$ that uses $\poly(n)$ space and allows $n^{1-\delta}$ query time, for some constant $p \in [1,2]$.
		\end{itemize}		
	\end{theo}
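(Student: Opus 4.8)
The plan is to revisit the proof of Theorem~\ref{theo:equiv-class-OV} and observe that every reduction used there is \emph{online-preserving}: it converts a data structure for one problem using $S(n)$ space and $T(n)$ query time into a data structure for the next problem using $\poly(S(n^{1+o(1)}))$ space and $n^{o(1)}\cdot T(n^{1+o(1)})$ query time. Two facts then cut Theorem~\ref{theo:eq-partial-match-NNS} down to a special case of this. First, Partial Match with strings of length $c\log n$ is equivalent, as a data structure problem, to the online version of $\OV_{n,c'\log n}$~\cite{abboud2015more} (see Section~\ref{sec:data-structure}). Second, $(1+\epsilon)$-approximate Nearest Neighbor Search in $\ell_p$ is, by definition, the online version of $(1+\epsilon)$-approximate $\BCPp[p]$, as noted in the remark. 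Hence it suffices to show that online $\OV$ and online approximate $\BCPp$ admit $\poly(n)$-space, $n^{1-\delta}$-query-time data structures under the same hypotheses, and this follows once the cycle of reductions $\OV \leftrightarrow \MAMaxIP \leftrightarrow \text{approximate }\BCPp$ underlying Theorem~\ref{theo:equiv-class-OV} is made online-preserving in both directions.

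I would handle the LSH-based link first, since locality-sensitive hashing is natively a data structure tool. The reduction from approximate $\BCPp$ (and from approximate $\FPp$) to $\MAMaxIP$ sketched in Section~\ref{sec:intro-framework} groups the input points by their LSH codes and runs a $\MAMaxIP$ routine on the Boolean vectors induced inside each group; in the online world one instead preprocesses the database points into their LSH buckets, builds a $\MAMaxIP$ data structure over the induced vectors in each bucket, and answers a query point by hashing it and querying the structure for its bucket. The number of hash repetitions and of buckets is $n^{o(1)}$, so space grows by a polynomial factor and query time by an $n^{o(1)}$ factor; the constant-factor ($100$-)approximation regime of $\MAMaxIP$ is exactly what the LSH analysis needs in order to certify a $(1+\Omega(1))$ gap for $\BCPp$/NNS, with the ``for all $\epsilon$'' quantifier on the NNS side absorbed into the choice of LSH parameters. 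The opposite direction $\MAMaxIP \to \BCPp$ (needed to close the cycle) is the elementary padding that turns large Boolean inner product into small $\ell_2$ distance, and is manifestly online-preserving.

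The $\Sigma_2^{\cc}$-based link among $\OV$, $\MaxIP$, $\ExactIP$, and $\MAMaxIP$ (Theorem~\ref{theo:Sigma2-to-OV}) is the crux. Since this reduction is assembled from a communication protocol, it is \emph{bipartite}: the gadget vector(s) produced for a red input $a$ depend only on $a$ together with the $O(\log n)$-bit messages of the protocol (ranging over $n^{o(1)}$ possibilities) and the public randomness (contributing an $n^{o(1)}$ factor to the dimension), and symmetrically for a blue input $b$. Consequently, from an online $\OV$ data structure one builds an online data structure for $\MaxIP$, $\MAMaxIP$, or $\ExactIP$ by preprocessing each database (say blue) vector into its $n^{o(1)}$ gadget vectors over all protocol messages and inserting all of them --- a database of size $n^{1+o(1)}$ --- into the $\OV$ structure, and, on a query (red) vector, forming the corresponding query gadgets, issuing $n^{o(1)}$ queries, and combining the answers according to the protocol's acceptance rule. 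Space stays polynomial and query time becomes $n^{o(1)}\cdot T(n^{1+o(1)}) = n^{1-\delta+o(1)}$. Running this in the reverse direction, together with the cheap reductions $\OV \to \MAMinIP$ and $\MAMinIP \leftrightarrow \MAMaxIP$, closes the loop both ways.

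The main obstacle is the composition bookkeeping rather than any single step: one must check that \emph{each} reduction in the chain is genuinely bipartite, so that database and query sides can be separated; that no reduction spawns more than $n^{o(1)}$ independent sub-instances or sub-queries, so that chaining a constant number of them keeps the query time at $n^{1-\delta'}$ for some $\delta'>0$; that space stays polynomial through the composition; and that the approximation parameters and the ``for all constant $c$''/``for all $\epsilon>0$'' quantifiers on the two sides of Theorem~\ref{theo:eq-partial-match-NNS} line up once they have passed through the LSH step (a $(1+\epsilon)$ ratio for NNS corresponding, via a constant-factor $\MAMaxIP$, to $\OV$ in some $c(\epsilon)\log n$ dimensions, which the ``for all $c$'' hypothesis supplies). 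One also has to confirm that the Partial Match $\leftrightarrow$ online $\OV$ equivalence of~\cite{abboud2015more} preserves $\poly(n)$ space and $n^{1-\delta}$ query time; it does. None of these steps is deep, but making them all line up simultaneously is where the work lies.
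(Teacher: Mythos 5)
Your proposal follows essentially the same route as the paper. The paper introduces Online $\OV$ and Online $\MaxIP$ as intermediaries (Theorem~\ref{theo:eq-OnlineOV-OnlineMaxIP-NNS}), uses the Partial Match $\leftrightarrow$ Online $\OV$ equivalence of~\cite{abboud2015more}, observes that NNS is the online version of $\BCPp[p]$, and hinges on exactly the two observations you make: that the $\Sigma_2^{\cc}$-based reduction is ``bipartite'' (the gadget vectors for each side depend only on that side's input together with the protocol messages, captured in Lemma~\ref{lm:specific-construction-1} and Lemma~\ref{lm:specific-construction-2}), and that the LSH-based reduction to $\MaxIP$ is natively a data structure construction. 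However, there are two places where your bookkeeping is off. First, the number of Merlin messages in the $\Sigma_2^{\cc}$ protocol for $\IP_{c\log n, k}$ with the group parameter set to $\eps$ is $n^{\Theta(\eps\log(c/\eps))}$, a \emph{polynomial} in $n$ for any fixed constant $\eps>0$, not $n^{o(1)}$ (pushing $\eps\to 0$ blows the reduced $\OV$ dimension $2^{O(c/\eps)}\log n$ past $O(\log n)$). So the reduction genuinely spawns $n^{\Theta(1)}$ sub-queries, and the argument is not that each step incurs only $n^{o(1)}$ overhead; rather, one must tune $\eps$ so that (number of sub-queries) $\times$ (per-query time) $= n^{\delta/2}\cdot n^{1-\delta}\cdot\polylog(n) = n^{1-\delta/2+o(1)}$, which is exactly the balancing the paper performs. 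Second, inserting all gadget vectors $g_i(y)$ across indices $i$ into a \emph{single} Online $\OV$ structure and querying with each $f_j(x)$ is unsound as stated: Lemma~\ref{lm:specific-construction-1} only controls $\langle f_i(x), g_i(y)\rangle$ for \emph{matching} indices, and a spurious zero $\langle f_j(x), g_{i'}(y)\rangle=0$ with $j\neq i'$ (corresponding to the two sides simulating the protocol with inconsistent Merlin messages) is not excluded. The paper avoids this by building a separate Online $\OV$ structure per index $i$ (and per target $k$), querying structure $i$ only with $f_i(\cdot)$; if you prefer a single structure, you would need to additionally append an $O(\log n)$-bit equality gadget encoding the index, a step your write-up omits. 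With these two corrections, your plan is the paper's.
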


	
	\paragraph*{Tighter Connection Between $\MaxIP$, Bichrom. $\ell_p$-Closest Pair and $\ell_p$-Furthest Pair.} For a subset of problems in Theorem~\ref{theo:equiv-class-OV}, we can establish even tighter reductions.
	
	The state-of-the-art algorithm for $(1+\eps)$ approximation to $\BCPp$ runs in $n^{2 - \WO(\eps^{1/3})}$ time, and for $\MaxIP_{n,c\log n}$, the best running time $n^{2 - \WO(1/\sqrt{c})}$. Both algorithms are presented in~\cite{alman2016polynomial}, and relied on probabilistic threshold functions. 
	
	Comparing to the $n^{2-1/O(\log c)}$ time algorithm for $\OV_{n,c\log n}$~\cite{abboud2015more,chan2016deterministic}, the dependence on $c$ or $\eps$ in these two algorithms are much worse, rendering them useless when $\eps^{-1}$ or $c$ are $\log^{\omega(1)} n$. So it is natural to ask whether the dependence can be improved to at least sub-polynomial in $\eps$ and $c$, i.e. $n^{2 - 1/c^{o(1)}}$ or $n^{2 - \eps^{o(1)}}$.
	
	We show that a modest improvement on the running time dependence on $\eps$ or $c$ for any of the following problems directly implies similar improvements for other problems as well.
	
	\begin{theo}\label{theo:tighter-reduction}
		The following are equivalent:
		\begin{itemize}
			\item An $\eps \cdot d$ additive approximation to $\MaxIP_{n,d}$ is computable in $n^{2 - \eps^{o(1)}}$ time.
			\item $\MaxIP_{n,c\log n}$ is solvable in $n^{2 - 1/c^{o(1)}}$ time.
			\item $\ExactIP_{n,c\log n}$ is solvable in $n^{2 - 1/c^{o(1)}}$ time.
			\item A $(1+\eps)$ approximation to $\BCPp[p]$ is computable in $n^{2 - \eps^{o(1)}}$ time (for a constant $p \in [1,2]$).
			\item A $(1+\eps)$ approximation to $\FPp$ is computable in $n^{2 - \eps^{o(1)}}$ time (for a constant $p \in [1,2]$).
		\end{itemize}
	\end{theo}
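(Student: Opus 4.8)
The plan is to prove Theorem~\ref{theo:tighter-reduction} by exhibiting a cycle of fine-grained reductions among the five problems, each of which is \emph{parameter-faithful}: it runs in $n^{1+o(1)}$ time together with a single call to the next problem on an instance with $n^{1+o(1)}$ vectors (or points), and it maps the complexity parameter of one problem ($c$, or $1/\eps$) to a value that is at most polynomial in the parameter of the next. Polynomial distortion is exactly the resolution at which the bounds in the theorem are invariant: if $p\mapsto\poly(p)$ then $p^{o(1)}\mapsto(\poly(p))^{o(1)}=p^{o(1)}$, so $n^{2-1/c^{o(1)}}$ and $n^{2-\eps^{o(1)}}$ are preserved (a $\polylog(n)$ blow-up in dimension would \emph{not} be acceptable, since then the reduced instance's ``$c$'' would depend on $n$ and the speedup exponent would degrade to $o(1)$). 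Most of the ingredients are the reductions already behind Theorem~\ref{theo:equiv-class-OV} and come out of the two frameworks developed in the paper — the $\Sigma_2^\cc$-protocol-to-$\OV$ reduction of Theorem~\ref{theo:Sigma2-to-OV} and the locality-sensitive-hashing reductions; the new work is (i) a handful of direct reductions and (ii) verifying that every step of the cycle is parameter-faithful in the above sense.

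Several links are routine. To pass between an $\eps\cdot d$ additive approximation of $\MaxIP_{n,d}$ and exact $\MaxIP_{n,c\log n}$: in one direction, hash the $d$ coordinates onto a uniformly random subset of size $k=\Theta(\eps^{-2}\log n)=\Theta(\eps^{-2})\cdot\log n$; by a Hoeffding bound and a union bound over the $n^2$ pairs, the $\MaxIP$ of the restricted instance, rescaled by $d/k$, is within $\eps d$ of the true value, so an exact call to $\MaxIP_{n,\Theta(\eps^{-2})\log n}$ suffices, and $(\eps^{-2})^{o(1)}=\eps^{-o(1)}$ gives the claimed time; in the other direction (exact $\MaxIP_{n,c\log n}$ from additive $\MaxIP$) one binary-searches the value of $\MaxIP$ using $O(\log(c\log n))$ gapped instances, handled as in the next paragraph. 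To pass between exact $\MaxIP_{n,c\log n}$ and $\ExactIP_{n,c\log n}$: a linear scan over the target $t\in\{0,\dots,c\log n\}$ reduces $\ExactIP$ to $O(c\log n)$ exact-$\MaxIP$ calls with no change of dimension, and the reverse is one of the gapped reductions. Finally, to connect the inner-product problems with $\BCPp[p]$ and $\FPp$: after padding every vector to a common Hamming weight $w$ (a constant-factor dimension blow-up via complement coordinates), $\{0,1\}^{w}$ vectors satisfy $\|a-b\|_2^2=2(w-\langle a,b\rangle)$ and $\|a-b\|_1=2(w-\langle a,b\rangle)$, so the $\ell_1$- or $\ell_2$-closest (resp. furthest) red-blue pair is the max- (resp. min-) inner-product pair, and a $(1+\eps)$ multiplicative distance guarantee becomes a $(1\pm\Theta(\eps\cdot w/(w-\langle a,b\rangle)))$ control on the inner product; the LSH framework is invoked for the converse direction, with the collision exponent $\rho$ tuned so that the resulting $\MAMaxIP$/$\MaxIP$ instance has dimension $\poly(1/\eps)\cdot\log n$ (using that for every constant $p\in[1,2]$ there is an LSH family for $\ell_p$ with $\rho\le 1-\Omega(\eps)$), and standard low-distortion embeddings transfer among the $\ell_p$ norms with $p\in[1,2]$.

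The substantive steps — and the main obstacle — are those that must amplify a \emph{unit-size} gap in inner product (``$\langle a,b\rangle\ge t$ versus $\langle a,b\rangle\le t-1$'', which is what arises when reducing exact $\MaxIP$ or $\ExactIP$ to any of the \emph{approximation} problems) into a \emph{constant-fraction} relative gap in Hamming/$\ell_p$ distance or in rescaled inner product. Generic amplification gadgets are too lossy here: tensoring to realize a polynomial in $\langle a,b\rangle$ costs $d^{\Omega(1)}$ extra dimensions, and an off-the-shelf probabilistic threshold polynomial costs $2^{\widetilde{\Omega}(\sqrt{d})}$, either of which injects a $\polylog(n)$ factor into the dimension and therefore destroys the $n^{2-1/c^{o(1)}}$ form. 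The way around this is to route every such amplification through the $\Sigma_2^\cc$-protocol-to-$\OV$ reduction of Theorem~\ref{theo:Sigma2-to-OV}: one builds a $\Sigma_2^\cc$ protocol for the relevant gapped predicate in which the universal message and the communication are only $O(\log c)$ bits long (the existential message may be longer, up to $o(\log n)$ bits, which only inflates the vector count of the target instance to $n^{1+o(1)}$), so that the reduction outputs an $\OV$ — hence $\MaxIP$ — instance on $n^{1+o(1)}$ vectors in $\poly(c)\cdot\log n$ dimensions; composing with the LSH framework carries the same guarantees over to $\BCPp[p]$ and $\FPp$, with ``$c$'' replaced by $\poly(1/\eps)$. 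Constructing these tight $\Sigma_2^\cc$ protocols for each gapped predicate in the cycle, and checking that the LSH-based reductions scale with $1/\eps$ as claimed, is where essentially all of the difficulty lies; granting that, the invariance of $p^{o(1)}$ under polynomial distortion closes the cycle and yields the stated equivalences.
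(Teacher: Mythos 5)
You correctly identify the crucial invariant (polynomial distortion of the complexity parameter) and the crux of the matter: amplifying a unit-size inner-product gap into a constant-fraction geometric gap using at most $\poly(c)\cdot\log n$ dimensions and $n^{1+o(1)}$ vectors. But your proposed route through Theorem~\ref{theo:Sigma2-to-OV} does not work for this step, and it is not what the paper does. In that framework the output $\OV$ dimension is $2^{m_2+\ell}$ where $\ell$ is the Alice--Bob communication, and for the grouping protocol of Proposition~\ref{prop:Sigma2cc-for-IP} one always has $\ell \ge$ (group size) while $m_1 \propto$ (number of groups)$\cdot\log(\text{group size})\cdot\log n$. To get $2^{m_2+\ell}=\poly(c)\cdot\log n$ you would need group size $O(\log c)$, hence $\Omega(c\log n/\log c)$ groups, hence $m_1 = \Omega((c\log\log c/\log c)\log n)$, which gives $2^{m_1}= n^{\Omega(1)}$ instances. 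Conversely, taking $m_1=o(\log n)$ (few groups) forces group size $\omega(\log c)$ and hence $\exp(\omega(\log c))$ output dimension. These two requirements are incompatible, and indeed the paper explicitly flags that its Lemma~\ref{lm:ExactIP-to-OV} blows the dimension up exponentially in $c$; that exponential blowup is intrinsic to applying Theorem~\ref{theo:Sigma2-to-OV} to $\IP_{d,m}$ and is exactly why the paper does \emph{not} use the $\Sigma_2^\cc$ framework for the tighter reduction.

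What the paper actually does for the $\ExactIP\to$ additive-$\MaxIP$ step is to modify Rubinstein's $\MA$ protocol for Inner Product based on AG codes (Lemma~\ref{lm:MA-protocol-Rub18}): Merlin additionally supplies, for every choice of randomness $r$, the claimed degree-$2$ polynomial $W(x,y,r)\in\mathbb{Z}[x]$ equal to $\langle\vec U(x,r),\vec V(y,r)\rangle$ before reduction mod $\Pirred$. Verifying Alice's acceptance then amounts to checking three coefficient identities of the form ``inner product of two length-$3T$ vectors over $\{0,\dots,q-1\}$ equals a fixed integer,'' which the encoding trick of Lemma~\ref{lm:encoding-trick} converts into a \emph{Boolean} inner-product condition with only $O(q^6)=\poly(c/\eps)$ dimensions per $r$, rather than the $\exp(T)$ cost of enumerating all vectors in $\mathbb{F}_{q^2}^T$. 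Concatenating over the $2^R=O(\eps\log n)$ choices of $r$ gives a $\MaxIP$ instance with an $\Omega((\eps/c)^6)\cdot d$ additive gap and $n^{O(\eps\log(c/\eps))}$ instances, which is Lemma~\ref{lm:ExactIP-to-additive-MaxIP}. Your proposal never invokes the $\MA$ protocol, the AG-code structure, or the coefficient-encoding observation, which is where essentially all the content of the theorem lives. (As a smaller point, your ``linear scan over the target $t$'' reduces $\MaxIP$ to $\ExactIP$, not the reverse; the reverse direction in the paper goes through the squaring gadget of Lemma~\ref{lm:ExactIP-to-MinIP} and Lemma~\ref{lm:MinIP-to-MaxIP}, which costs an $O(d^2)$ dimension blowup, still polynomial but not ``no change of dimension.'') The LSH-based half of your proposal (Theorem~\ref{theo:tighter-reduction-2}, tying additive $\MaxIP$ to $\BCPp[p]$ and $\FPp$) is on the right track and matches the paper's use of Theorem~\ref{theo:LSH-to-MaxIP-general} with Lemma~\ref{lm:LSH-ell-p}.
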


	In~\cite{Rub18BetterMA} (Theorem~4.1), it is implicitly shown that $\ExactIP_{n, c\log n}$ can be reduced to $(1 + 1/\exp(c))$ approximating $\BCPp$. This suffices for the case when $c$ is a constant (which is needed for Theorem~\ref{theo:equiv-class-OV}), but falls short of proving the above tighter connections. 
	
	In a nutshell, \cite{Rub18BetterMA}'s reduction applies a very efficient $\MA$ protocol for Set-Disjointness using AG-codes, and it uses ``brute-force'' gadgets to simulate an inner product between two short vectors in $\mathbb{F}_{q^2}$.	We improve~\cite{Rub18BetterMA}'s reduction by carefully modifying its $\MA$ protocol, and replacing its brute-force gadgets by a more efficient one. Informally, our theorem shows $\ExactIP_{n, c\log n}$ can be reduced to $(1 + 1/\poly(c))$ approximating $\BCP$ (see Lemma~\ref{lm:ExactIP-to-additive-MaxIP} and Lemma~\ref{lm:prev-work-tt}), which is an exponential improvement over the old reduction.

	
		
	\paragraph*{Equivalence Results in the Moderate Dimensional Setting.} Theorem~\ref{theo:equiv-class-OV} establishes an equivalence class for the sparse $O(\log n)$ dimensional setting. It is natural to ask whether the equivalence continues to hold in the moderate dimensional case as well. 
	
	Unfortunately, an unusual (and interesting) property of our reduction used in Theorem~\ref{theo:equiv-class-OV} is that it blows up $c$ (the constant before $\log n$) exponentially, and creates multiple instances. That is, an $\ExactIP$ instance with $c\log n$ dimensions is reduced to \emph{many} $\OV$ instances with $\exp(c) \log n$ dimensions (see the proof of Lemma~\ref{lm:ExactIP-to-OV}). This renders the reduction useless in the moderate-dimensional setting, where $c$ could be as large as $n^\delta$.
	
	Still, using different techniques, we obtain some additional equivalence results in the moderate dimensional setting. For a problem $\Pi$ on Boolean vectors, we say that \emph{moderate dimensional $\Pi$ is in truly subquadratic time}, if there are two constants $\eps,\delta > 0$ such that $\Pi$ is solvable in $n^{2-\eps}$ time on $n$ vectors with $n^\delta$ dimensions.
	
	\begin{theo}\label{theo:equiv-OV-MAMinIP-MD}
		Moderate dimensional $\OV$ is in truly subquadratic time if and only if moderate dimensional $\MAMinIP$ is.
	\end{theo}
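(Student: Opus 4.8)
Observe first that $\OV$ is essentially a special case of $\MAMinIP$. Given an $\OV$ instance, run the assumed $\MAMinIP$ algorithm; it returns a red-blue pair $(a,b)$ with $\langle a,b\rangle \le 100\cdot\mathrm{OPT}$, where $\mathrm{OPT}$ is the minimum inner product over all red-blue pairs. If an orthogonal pair exists then $\mathrm{OPT}=0$, so the returned pair satisfies $\langle a,b\rangle\le 0$ and is itself orthogonal; conversely, if the returned pair is not orthogonal then $\mathrm{OPT}>0$ and no orthogonal pair exists. Since this transformation changes neither the number of vectors nor the dimension, a truly subquadratic algorithm for moderate dimensional $\MAMinIP$ immediately yields one for moderate dimensional $\OV$.

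\textbf{The hard direction: setup.} Assume moderate dimensional $\OV$ is in truly subquadratic time, say solvable in $n^{2-\eps}$ time on $n$ vectors in $\{0,1\}^{n^\delta}$ for constants $\eps,\delta>0$; we must produce a truly subquadratic constant-factor approximation for $\MinIP$ on $n$ vectors in $\{0,1\}^{d}$ with $d\le n^{\delta'}$. Reduce to a decision task: since $\mathrm{OPT}\in\{0,1,\dots,d\}$, it suffices to (i) decide whether $\mathrm{OPT}=0$, which is exactly $\OV$, and (ii) for each threshold $t$ ranging over a geometric sequence $1,\kappa,\kappa^2,\dots$ up to $d$ — only $O_\kappa(\log n)$ values — solve the promise problem $\mathcal{T}_t$: distinguish ``some red-blue pair has inner product $\le t$'' from ``every red-blue pair has inner product $>\kappa t$'', for a suitable constant $\kappa$. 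The least $t$ for which $\mathcal{T}_t$ accepts is then a $\kappa$-approximation to $\mathrm{OPT}$ (and a witnessing pair can be recovered by an analogous search), so it remains to reduce each $\mathcal{T}_t$ to moderate dimensional $\OV$.

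\textbf{The hard direction: reducing $\mathcal{T}_t$ to $\OV$.} The plan is: (a) when $t$ is a constant, apply the ``$(t{+}1)$-subset'' collision gadget — replace each vector $x$ by $\hat{x}$ indexed by the $(t{+}1)$-subsets of $[d]$, with $\hat{x}_S=1$ iff $S\subseteq\mathrm{supp}(x)$, so that $\langle\hat{a},\hat{b}\rangle=\binom{|\mathrm{supp}(a)\cap\mathrm{supp}(b)|}{t+1}$ vanishes exactly when $\langle a,b\rangle\le t$; since $\binom{d}{t+1}\le d^{t+1}=n^{O(\delta')}$, this is a moderate dimensional $\OV$ instance; (b) when $t$ is large, first \emph{subsample} the coordinates, keeping each independently with probability $p\approx 1/t$, so the surviving dimension is $\approx pd\le n^{\delta'}$ and the inner product of a pair with $m=|\mathrm{supp}(a)\cap\mathrm{supp}(b)|$ becomes $\mathrm{Bin}(m,p)$, which for a ``good'' pair ($m\le t$) has mean $O(1)$; then apply step (a) to the subsampled instance with a constant threshold. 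One repeats this with a small number of independent subsamples and combines the $\OV$ answers; since all derived instances are moderate dimensional and there are only $\polylog(n)$ of them, the total running time stays $n^{2-\eps+o(1)}$. (If a deterministic reduction is wanted, the random coordinate subsets can be replaced by explicit perfect hash families.)

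\textbf{The main obstacle.} The crux is making step (b) achieve a \emph{constant} approximation factor $\kappa$. Subsampling is lossy: a good pair becomes low-inner-product with only constant probability, while a bad pair (inner product $>\kappa t$) still survives with probability $\approx e^{-\Theta(\kappa)}$, and since there can be $\Theta(n^2)$ bad pairs, a union bound over them forces $e^{-\Theta(\kappa)}<n^{-2}$, i.e. $\kappa=\Omega(\log n)$ — only a logarithmic-factor approximation. Thus one cannot afford to run $\OV$ on a single subsampled instance; instead one must fold several independent subsamples into one structured $\OV$ instance that keeps the (essentially unique) good pair distinguishable from the many bad pairs while keeping both the dimension and the number of $\OV$ calls polynomial — for instance by concatenating $R$ subsamples and tuning the collision-gadget threshold, and amplifying via a repetition-and-search scheme rather than a naive OR over the $\OV$ outcomes. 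Making all these parameters fit simultaneously (polynomial dimension, polynomially many $\OV$ calls, and a constant gap $\kappa$) is the technically delicate heart of the argument, and is where I expect the bulk of the work to lie.
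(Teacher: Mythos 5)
Your easy direction and your overall strategy for the hard direction (binary/geometric search over thresholds, reduce the threshold decision problem to $\OV$ via coordinate subsampling plus an orthogonality gadget) match the paper's plan, and you have correctly put your finger on the genuine difficulty: a single random subsample gives only a constant gap between ``good'' and ``bad'' pairs, and union-bounding over the $\Theta(n^2)$ bad pairs would force the approximation factor to $\Omega(\log n)$. However, you explicitly defer resolving this, and the deferred step is precisely the substance of the theorem; as written this is not a proof but an outline with a hole at the critical spot.

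What the paper does to close the hole is a specific two-level amplification, and it is worth spelling out because your sketch of ``concatenating $R$ subsamples and tuning the collision-gadget threshold'' is close but underdetermined. First, take $N = \Theta(1/\eps)$ independent random projections $T_1,\dots,T_N$ (each a sequence of $d/\tau$ random coordinates), so that for a fixed pair $(x,y)$ the event ``more than $0.2N$ of the $\langle x_{|T_i},y_{|T_i}\rangle$ are zero'' distinguishes $\langle x,y\rangle\le\tau$ from $\langle x,y\rangle\ge 2\tau$ with per-pair error $2^{-\Theta(1/\eps)}$. The crucial encoding step (which you gesture at but do not carry out) is that the event ``at least $t=0.8N$ of the $N$ projections have a nonzero coordinate'' can itself be written as an orthogonality condition: enumerate all $t$-tuples $((i_1,j_1),\dots,(i_t,j_t))\in([N]\times[d/\tau])^t$ with distinct $i_k$'s, and build a Boolean vector $\phi(x)$ indexed by these tuples, with $\phi(x)_\ell=1$ iff $x$ has a $1$ in all the indicated positions. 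Then $\langle\phi(x),\phi(y)\rangle=0$ iff fewer than $0.8N$ projections are hit, i.e.\ iff at least $0.2N$ are orthogonal. This costs dimension $(d/\tau)^{O(N)} = (d/\tau)^{O(1/\eps)}$, which is exactly the moderate-dimensional budget: taking $d = n^{\delta}$ with $\delta$ small enough relative to $\eps$ keeps this at $n^{\delta'}$ for the target $\delta'$. Second, take $m=\Theta(\eps\log n)$ i.i.d.\ copies $\phi_1,\dots,\phi_m$ of this micro-map. Since each copy errs with probability $2^{-\Theta(1/\eps)}$, a concentration bound of the form ``$cm$ i.i.d.\ Bernoulli($\eta$) variables exceed $cm/2$ with probability $\le\eta^{\Omega(m)}$'' (Lemma~\ref{lm:bound} in the paper) shows that the majority of the $m$ copies is correct with probability $1-n^{-3}$, which survives the union bound over all $n^2$ pairs. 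Finally, ``majority of copies orthogonal'' is again encoded as a disjunction of $\OV$ instances by enumerating all $S\subseteq[m]$ with $|S|>m/2$ and, for each $S$, concatenating the $\phi_i$'s with $i\in S$; this gives $2^{O(m)}=n^{O(\eps)}$ many $\OV$ instances of moderate dimension. Choosing $\eps$ and $\delta$ as functions of the assumed $\OV$ exponent then makes everything truly subquadratic. Your $(t{+}1)$-subset gadget for constant thresholds is a perfectly reasonable alternative to the paper's tuple-enumeration gadget, but the two-stage amplification — micro error $2^{-1/\eps}$ at dimension cost $(\cdot)^{O(1/\eps)}$, then macro majority over $\Theta(\eps\log n)$ copies at instance-count cost $n^{O(\eps)}$, with the quantitative concentration bound that makes the exponents cancel — is the missing mechanism, and without it the argument does not go through.
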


	\begin{theo}\label{theo:equiv-MaxIP-MinIP-ExactIP-MD}
		For moderate dimensional $\MaxIP$, $\MinIP$, and $\ExactIP$, either all of them are in truly subquadratic time, or none of them are.
	\end{theo}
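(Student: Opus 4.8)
Proof plan for Theorem~\ref{theo:equiv-MaxIP-MinIP-ExactIP-MD}.

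\medskip\noindent The plan is to establish a cycle of truly-subquadratic-preserving reductions, $\MaxIP \equiv \MinIP$, then $\ExactIP \le \MinIP$, and finally $\MinIP \le \ExactIP$, all in the moderate-dimensional regime, using only elementary gadgets rather than the $\Sigma_2^{\cc}$ machinery of Theorem~\ref{theo:equiv-class-OV}. One structural point is worth isolating at the outset: the statement ``moderate dimensional $\Pi$ is in truly subquadratic time'' is \emph{existential} in the dimension exponent $\delta$, so a reduction may blow the dimension up by any fixed polynomial and incur an $n^{\delta}$ multiplicative overhead, provided we are free to choose \emph{how small} a power of $n$ the source instance's dimension is. Concretely, if the target problem runs in $n^{2-\eps_0}$ time at dimension $n^{\delta_0}$, I will always instantiate the reduction on source instances of dimension $n^{\delta_1}$ for a tiny constant $\delta_1 = \delta_1(\eps_0,\delta_0) > 0$; this keeps the blown-up dimension below $n^{\delta_0}$ and the overhead below $n^{\eps_0/2}$, and hence still yields truly subquadratic time at the (positive) dimension exponent $\delta_1$.

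\medskip\noindent The workhorse preprocessing step is \emph{weight normalization}: given $A,B\subseteq\{0,1\}^d$, pad into $\{0,1\}^{3d}$ with one coordinate block devoted to fixing the weights of $A$-vectors and a disjoint block for $B$-vectors, so that every padded $a'$ and $b'$ has Hamming weight exactly $d$ while $\langle a',b'\rangle=\langle a,b\rangle$ for corresponding pairs. On a weight-normalized instance, complementing the $A$-side gives $\langle \overline{a'},b'\rangle = d-\langle a',b'\rangle$, i.e. it reflects the inner product affinely. This immediately gives $\MaxIP\equiv\MinIP$: weight-normalize and replace each $A$-vector by its complement; a maximum-inner-product pair of the original instance is exactly a minimum-inner-product pair of the new one (with value $d-\langle a,b\rangle$), and symmetrically. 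The dimension grows by a factor of $3$, so the ``moderate'' property is preserved.

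\medskip\noindent For \emph{$\ExactIP\le\MinIP$}, weight-normalize to common weight $w=d$ (if the target $t>d$ the answer is trivially ``no''), and for each $a\in A$, $b\in B$ set $\Phi(a):=(a')^{\otimes 2}\circ(\overline{a'})^{\times 2t}$ and $\Psi(b):=(b')^{\otimes 2}\circ(b')^{\times 2t}$, where $\otimes 2$ is the tensor square (dimension $(3d)^2$), $\times 2t$ denotes $2t$ concatenated copies, and $\circ$ is concatenation; all entries stay in $\{0,1\}$. Then $\langle\Phi(a),\Psi(b)\rangle = \langle a,b\rangle^2 + 2t\bigl(d-\langle a,b\rangle\bigr) = (\langle a,b\rangle-t)^2 + t(2d-t)$, which over integer inner products $\langle a,b\rangle\in\{0,\dots,d\}$ is minimized precisely when $\langle a,b\rangle=t$. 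Hence $\ExactIP(A,B,t)$ holds iff $\MinIP(\Phi(A),\Psi(B))=t(2d-t)$; the number of vectors is unchanged and the dimension becomes $O(d^2)$. For \emph{$\MinIP\le\ExactIP$}, weight-normalize so inner products lie in $\{0,\dots,d\}$ and binary-search for the minimum value $v^\ast$; the only nontrivial query, ``is there a pair with $\langle a,b\rangle\le v$?'', is answered by a single $\ExactIP$ call on $A_v := \{\,a\circ\mathbf{1}^{i}\mathbf{0}^{v-i} : a\in A,\ 0\le i\le v\,\}$ and $B_v := \{\,b\circ\mathbf{1}^{v} : b\in B\,\}$ with target $v$: since $\langle a\circ\mathbf{1}^{i}\mathbf{0}^{v-i},\,b\circ\mathbf{1}^{v}\rangle = \langle a,b\rangle+i$ and $\langle a,b\rangle\ge 0$, a pair summing to exactly $v$ exists iff some original pair has $\langle a,b\rangle\le v$. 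This blows $|A|$ up by a factor $\le d+1$ (solved by splitting into $d+1$ balanced $\ExactIP$ calls) and the dimension by an additive $d$, and uses $O(\log d)$ rounds — all absorbed once the source dimension is chosen small, per the opening remark.

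\medskip\noindent Chaining these gives $\ExactIP\le\MinIP\equiv\MaxIP$ and $\MinIP\le\ExactIP$, hence the three problems are truly-subquadratic-equivalent in the moderate-dimensional setting. I expect the main obstacle to be the $\ExactIP\to\MinIP$ direction: a single bilinear inner product cannot be ``peaked'' at the target $t$, so one has to first normalize weights (to turn complementation into a clean affine reflection of the inner product) and then spend a quadratic-dimensional tensor to realize $(\langle a,b\rangle-t)^2$ as a genuine Boolean inner product; a secondary care point is verifying that all the polynomial dimension blow-ups and $n^{\delta}$-factor overheads are harmless, which is exactly where the existential-in-$\delta$ form of the definition is used.
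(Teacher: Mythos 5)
Your proposal is correct and follows essentially the same route as the paper's proof: the $\MaxIP\equiv\MinIP$ complementation trick is the paper's Lemma~\ref{lm:MinIP-to-MaxIP} (the paper uses the bit-level gadget $\varphi_x(0)=(1,0),\varphi_x(1)=(0,1),\varphi_y(0)=(1,1),\varphi_y(1)=(1,0)$ rather than weight normalization, but the effect $\langle\cdot,\cdot\rangle\mapsto d-\langle\cdot,\cdot\rangle$ is identical), and your $\ExactIP\le\MinIP$ construction via $\langle\Phi(a),\Psi(b)\rangle=(\langle a,b\rangle-t)^2+t(2d-t)$ is precisely the paper's Lemma~\ref{lm:ExactIP-to-MinIP}. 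The only cosmetic difference is in $\MinIP\le\ExactIP$: the paper simply enumerates all $d+1$ candidate inner-product values $m\in\{0,\dots,d\}$ and makes one $\ExactIP$ call per value, which is more direct than your binary search over threshold queries realized by the $A_v,B_v$ padding gadget; both incur an $n^{\delta}$-factor overhead in the number of calls and rely on choosing the source dimension exponent $\delta'$ small relative to the target's $(\eps,\delta)$, exactly as you observe at the outset.
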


	To show moderate dimensional $\OV$ and $\MAMinIP$ are equivalent, we use a sophisticated reduction which is partially inspired by the classical Goldwasser-Sipser $\AM$ protocol for approximate counting~\cite{GoldwasserS89} (see the proof of Lemma~\ref{lm:reduction-MAMinIP-to-OV} for details). For $\MaxIP$, $\MinIP$ and $\ExactIP$, we apply some folklore encoding tricks.

	It is an interesting open question that whether these two separate equivalence classes can be merged into one. In particular, \emph{is moderate dimensional $\OV$ equivalent to moderate dimensional $\MaxIP$?}
	
	An immediate corollary of Theorem~\ref{theo:equiv-OV-MAMinIP-MD} is that it adds $\MAMinIP$ as a new member to the equivalence class of moderate dimensional $\OV$ established in~\cite{gao2017completeness}.
	

	\subsection{New Algorithms for $\MAMinIP$ and $\MAMaxIP$}
	
	It was recently shown in~\cite{chen2018hardness} that $\MAMaxIP$ can be solved in $n^{2 - 1 / O(\log c)}$ time, while the best known algorithm for solving $\MAMinIP$ just applies the $n^{2 - 1/\widetilde{O}(\sqrt{c})}$ time algorithm for $\MinIP$~\cite{alman2016polynomial}. We show that in fact we can derive an algorithm with similar running time for $\MAMinIP$ as well. 
	
	\begin{theo}\label{theo:fast-algo-MAMinIP}
		There are $n^{2 - 1/O(\log c)}$ time randomized algorithms for $\MAMinIP_{n,c \log n}$ and $\MAMaxIP_{n,c \log n}$.
	\end{theo}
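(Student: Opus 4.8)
The $\MAMaxIP_{n,c\log n}$ bound is exactly the algorithm of~\cite{chen2018hardness}, so the task is to obtain a matching $n^{2-1/O(\log c)}$ bound for $\MAMinIP$ by an analogous polynomial-method algorithm. The first step is a search-to-decision reduction via a geometric sweep over thresholds. For each $\tau\in\{0,1,2,4,\dots,2^{\lceil\log_2 d\rceil}\}$ we invoke a subroutine $R(\tau)$ that either returns a red-blue pair $(a,b)$ with $\langle a,b\rangle\le C\tau$ (for an absolute constant $C$) or certifies that every red-blue pair has inner product strictly larger than $\tau$; the case $\tau=0$ is just the search version of $\OV$. Let $\tau^\star$ be the first threshold in the list for which $R(\tau)$ returns a pair (one exists, since every inner product is at most $d$). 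If $\tau^\star>0$ then the previous call certified $\mathrm{OPT}>\tau^\star/2$, so the returned pair is a $2C$-approximation to $\mathrm{OPT}$; running the sweep over a finer progression of ratio $1+\eta$ together with a correspondingly smaller gap constant in $R$ drives the approximation factor down to any fixed constant $>1$, which is all that $\MAMinIP$ requires. The sweep only costs an extra $O(\log d)$ factor, and since $R$ need only return a witnessing \emph{pair}, we may assume a decision-only version of $R$ and recover witnesses by the standard halving self-reduction at a $\polylog(n)$ overhead.

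It then remains to implement $R(\tau)$ in $n^{2-1/O(\log c)}$ time, in two regimes. When $\tau$ is below a small cutoff $\tau_0=O(1)$, we decide it exactly: a pair has inner product at most $\tau$ if and only if, for some coordinate subset $S$ with $|S|\le\tau$, the two vectors restricted to the complement of $S$ are orthogonal; hence we run the $n^{2-1/O(\log c)}$-time search-$\OV$ algorithm of~\cite{abboud2015more,chan2016deterministic} on each of the $\binom{d}{\le\tau_0}=n^{o(1)}$ restricted instances, and a negative answer on all of them certifies the NO case. For $\tau>\tau_0$ we first randomly subsample the $d$ coordinates, keeping each independently with probability $p=\Theta(\tfrac{\log n}{\tau})$; by a Chernoff bound and a union bound over all $n^2$ pairs, the surviving instance has the form: distinguish ``some pair has inner product $\le\theta$'' from ``all pairs have inner product $>\kappa'\theta$'', for $\theta=\Theta(\log n)$ and an absolute constant $\kappa'>1$, on $n$ vectors of dimension $\Theta(\tfrac{d\log n}{\tau})\le d$. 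This residual gapped-threshold problem is where the polynomial method enters: partition the red (resp.\ blue) vectors into blocks of size $s=n^{\Theta(1/\log c)}$, and for each block-pair evaluate a low-degree probabilistic polynomial computing the OR over its $s^2$ pairs of the predicate $[\langle a,b\rangle\le\theta]$; the matrix of these values over all $(n/s)^2$ block-pairs is computed in batch by fast rectangular matrix multiplication, after which active block-pairs are searched directly. This is precisely the structure of the $\MAMaxIP$ algorithm, the only change being that the single-pair predicate is ``$\le\theta$'' rather than ``$\ge\theta$''.

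The main obstacle is the one that already powers the $\MAMaxIP$ algorithm: bounding the degree of the probabilistic polynomial for the gapped ``$\langle a,b\rangle\le\theta$'' predicate (equivalently, the dimension blow-up of the implicit reduction to $\OV$) well enough that the block encoding has dimension only $\binom{d}{\le\deg}=n^{o(1/\log c)}$, so that the matrix-multiplication step runs in $(n/s)^2\cdot n^{o(1)}=n^{2-1/O(\log c)}$ time. This is exactly where the constant multiplicative gap is indispensable --- it is the gap between $\MAMinIP$ and exact $\MinIP$, the latter having only the $n^{2-1/\WO(\sqrt c)}$ algorithm of~\cite{alman2016polynomial} --- and the required degree bound is the $\MinIP$-analogue of the bound used for $\Gapmajority$-type functions in~\cite{chen2018hardness}; I expect it to go through with at most cosmetic changes, since ``$\langle a,b\rangle\le\theta$ with promised gap to $\kappa'\theta$'' and ``$\langle a,b\rangle\ge\theta$ with promised gap down to $\theta/\kappa'$'' are equivalent gapped-threshold functions. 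The remaining points are routine: the Chernoff/union-bound analysis of the subsampling step (which also forces $\theta=\Theta(\log n)$, hence the two-regime split at $\tau_0$), and the $\polylog(n)$-overhead witness recovery.
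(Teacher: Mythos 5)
The high-level structure -- a search-to-decision sweep over thresholds $\tau$, followed by a polynomial-method algorithm for a gapped-threshold predicate -- matches the paper. But there are two genuine gaps in the way you fill this in.

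\textbf{The intermediate-$\tau$ regime is broken.} You split into $\tau\le\tau_0=O(1)$ (brute force over coordinate subsets) and $\tau>\tau_0$ (subsample each coordinate with probability $p=\Theta(\log n/\tau)$, then run the polynomial method). But for $\tau_0<\tau<\Theta(\log n)$, the subsampling probability $p$ exceeds $1$; your Chernoff argument, which needs $\Theta(\log n)$ survivors per pair to union-bound over $n^2$ pairs, requires $\tau=\Omega(\log n)$. Raising the cutoff to $\tau_0=\Theta(\log n)$ is not free either: the brute-force enumeration then costs $\binom{c\log n}{\le\tau_0}=n^{\Theta(1)}$, and controlling this constant against the $1/O(\log c)$ savings becomes a delicate parameter balance you do not address. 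The paper avoids the whole issue: its key lemma (Lemma~\ref{lm:polynomial-for-fgap}) builds, for every $\tau\le d/2$, a probabilistic polynomial of degree $O(\log\eps^{-1})$ for $\fgap_{d,\tau}$ directly. The ``subsampling'' you want is folded into the polynomial as a random micro-gadget: pick $k=d/\tau$ uniform random coordinates $T$ and a random $z\in\{0,1\}^k$, set $P_\micro(x,y)=\sum_i z_i (x_{|T})_i(y_{|T})_i$ over $\mathbb{F}_2$. This accepts with probability $\ge 0.43$ when $\langle x,y\rangle\ge2\tau$ and $\le 0.375$ when $\langle x,y\rangle\le\tau$, giving a constant gap for \emph{every} $\tau$, with no cutoff, no regime split, and no union bound over pairs. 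The final polynomial is a degree-$m$ symmetric threshold applied to $m=O(\log\eps^{-1})$ independent copies of $P_\micro$; one plug into Theorem~\ref{theo:AWY15} finishes.

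\textbf{The degree bound is asserted, not established.} You state that the gapped ``$\le\theta$'' predicate has a low-degree polynomial because it is the $\MinIP$ analogue of a bound in~\cite{chen2018hardness} and ``should go through with cosmetic changes.'' The negation observation (a gapped-$\le$ predicate is $1$ minus a gapped-$\ge$ predicate) is correct and would transfer a polynomial degree bound between the two; but \cite{chen2018hardness}'s $\MAMaxIP$ algorithm is \emph{deterministic} and, as the paper's remark notes, uses a different technique than the probabilistic-polynomial one you need here. You would in any case have to exhibit the probabilistic polynomial and verify the $\poly\binom{d}{\le t}$ sampling time; the paper does exactly this and its construction (random-projection micro-gadget plus threshold amplification) is simpler than the $\Gapmajority$-style machinery you gesture at. So the core technical step of the theorem is essentially missing from your write-up. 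The surrounding reduction (binary/linear sweep over $\tau$, the observation that a $\GapMinIp$ oracle gives a $2$-approximation, the block/rectangular-matrix-multiplication framework) is all fine and matches the paper.
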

	
	\begin{rem}
		Our new algorithm works equally well for $\MAMaxIP$. Hence, we provide a different $n^{2 - 1 / O(\log c)}$ time algorithm for $\MAMaxIP$ than~\cite{chen2018hardness}. One caveat here is that our algorithms are \emph{randomized}, while the algorithms in~\cite{chen2018hardness} are \emph{deterministic}.
	\end{rem}
	
	The algorithms are based on the polynomial method: we construct a low-degree probabilistic polynomial over $\mathbb{F}_2$ for functions closely related to $\MAMinIP$ and $\MAMaxIP$, and the rest follows from the framework of~\cite{abboud2015more}. 
	
	\highlight{Application: A Fast Algorithm for Approximating ``Almost Solvable'' $\MAXSAT$ Instances.} Here we give an application of our $\MAMinIP$ algorithm. For a $\MAXSAT$ instance $\varphi$ with $m$ clauses, we denote $\OPT(\varphi)$ to be the maximum number of the clauses that can be satisfied, and $\sat(\varphi) := \OPT(\varphi) / m$. 
	
	\begin{theo}\label{theo:fast-algo-MAXSAT-eps}
		Let $\varphi$ be a $\MAXSAT$ instance on $n$ variables with $m$ clauses, and $\eps = 1 - \sat(\varphi)$. There is a $2^{n (1 - 1 / O(\log \eps^{-1}) )}$ time algorithm to find an assignment $x$ satisfying at least $(1 - 2 \eps) \cdot m$ clauses\footnote{$(1 - 2\eps)$ can be replaced by $(1 - \kappa \eps)$ for any constant $\kappa > 1$.}.
	\end{theo}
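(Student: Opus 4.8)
The plan is to combine the classical ``split-and-list'' technique with the fast $\MAMinIP$ algorithm of Theorem~\ref{theo:fast-algo-MAMinIP}, after first sub-sampling the clauses of $\varphi$ so that the resulting inner-product instance lives in small dimension. Concretely, partition the $n$ variables into halves $X_1,X_2$ of size $n/2$, and let $S$ be a multiset of $s$ clauses of $\varphi$ (the value of $s$ is fixed below). For each of the $2^{n/2}$ partial assignments $a$ to $X_1$, form a red vector $u_a\in\{0,1\}^s$ whose $j$-th coordinate is $1$ iff no literal of the $j$-th clause of $S$ on a variable of $X_1$ is satisfied by $a$; symmetrically form blue vectors $v_b\in\{0,1\}^s$ for the partial assignments $b$ to $X_2$. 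Then $\langle u_a,v_b\rangle$ is exactly the number of clauses of $S$ that the joint assignment $(a,b)$ leaves unsatisfied, so the assignment violating the fewest clauses of $S$ corresponds to a minimum-inner-product pair among these $N:=2^{n/2}$ red and $N$ blue vectors. Running the $\MAMinIP$ algorithm with a suitable constant ratio $\kappa$ on this instance of $\MAMinIP_{N,(2s/n)\log N}$ (whose dimension is $s$) takes, by Theorem~\ref{theo:fast-algo-MAMinIP}, time $N^{2-1/O(\log(s/n))}$ and returns a pair $(a,b)$ violating at most $\kappa$ times the minimum number of clauses of $S$.

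Next I would set $s=\Theta(n/\eps)$, with a large enough hidden constant (and simply take $S$ to be all clauses, so $s=m$, once $\Theta(n/\eps)$ exceeds $m$); then $s/n=\Theta(\eps^{-1})$, so the running time becomes $N^{2-1/O(\log\eps^{-1})}=2^{n(1-1/O(\log\eps^{-1}))}$, as required. For correctness I would apply Chernoff bounds over the random choice of $S$. The true optimum $x^\star$ violates $\eps m$ clauses of $\varphi$, hence at most $(1+o(1))\eps s$ clauses of $S$ with probability $1-2^{-\Omega(n)}$ (using $\eps s=\Omega(n)$), so the returned pair violates at most $\kappa(1+o(1))\eps s$ clauses of $S$. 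Conversely, every assignment that violates more than $2\eps m$ clauses of $\varphi$ violates more than $(2-o(1))\eps s$ clauses of $S$ except with probability $2^{-\Omega(\eps s)}=2^{-\Omega(n)}$, which survives a union bound over all $\le 2^n$ assignments. Choosing $\kappa$ close enough to $1$ makes $\kappa(1+o(1))<2$, so the two bounds are incompatible and the returned assignment must violate at most $2\eps m$ clauses of $\varphi$; pushing $\kappa$ and the slack terms further gives the $(1-\kappa'\eps)$ variant for any constant $\kappa'>1$ promised in the footnote.

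Since the algorithm needs $\eps$ in order to set $s$, I would instead run the above for a geometric sequence of guesses $\eps'=(1+\eta)^{-j}$, $j=1,2,\dots$, for a small constant $\eta$, evaluate in $O(nm)$ time the exact fraction of clauses satisfied by each returned assignment, and output the best one; a stopping rule that halts the first time a returned assignment is much worse than its current guess ensures we only ever run with guesses $\eps'=\Theta(\eps)$, so only $O_\eta(\log\eps^{-1})$ guesses are made and each costs at most $2^{n(1-1/O(\log\eps^{-1}))}$ (a final run using all $m$ clauses handles the degenerate case $\eps=0$, where the budget is essentially $2^n$). I expect the main difficulty to be exactly this balancing act: the sample size $s$ must be $\Omega(n/\eps)$ for the union bound over $2^n$ assignments to go through, yet $s$ also controls the dimension and hence the $1/O(\log\eps^{-1})$ savings, so one has to check that the $\MAMinIP$ approximation ratio, the Chernoff slack, and the guessing overhead can all be absorbed simultaneously without either weakening the $(1-2\eps)$ guarantee or inflating the exponent to $1-1/O(\log m)$.
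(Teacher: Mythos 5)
Your plan matches the paper's proof almost exactly: split the $n$ variables into two halves, sub-sample clauses to reduce the dimension, encode partial assignments as $0$-$1$ vectors so that the inner product counts violated clauses, and apply the $n^{2-1/O(\log c)}$-time $\MAMinIP$ algorithm (Theorem~\ref{theo:fast-algo-MAMinIP}) with a tight constant ratio $\kappa$. The only substantive difference is in the sparsification step. The paper samples $M = \Theta(\eps^{-2} n)$ clauses and uses an additive Chernoff bound to get the uniform guarantee $|\val(\varphi,x)-\val(\psi,x)| \le \eps/3$ for all $2^n$ assignments $x$, so every assignment's value is preserved additively and the analysis is a two-line sandwich. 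You instead sample $s = \Theta(n/\eps)$ clauses and run a multiplicative Chernoff argument only at the two relevant thresholds (the optimum at $\eps m$ and the rejection threshold at $2\eps m$); this saves a factor $\eps^{-1}$ in sample size, but since $\log(\eps^{-2}) = \Theta(\log(\eps^{-1}))$ both choices yield the same exponent $1-1/O(\log \eps^{-1})$, so the savings does not improve the final bound. Your multiplicative argument is correct, though you should replace the ``$(1+o(1))$'' and ``$(2-o(1))$'' slack terms with explicit small constants controlled by the hidden constant in $s$: with $s = Cn/\eps$ the lower-tail bound gives $2^{-\Omega(\delta^2 C n)}$, which beats the $2^n$ union bound once $C$ is large relative to $\delta^{-2}$, and then any $\kappa<2$ can be absorbed.

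The one point you add that the paper does not discuss is the algorithm's dependence on $\eps$, which is a priori unknown. Your geometric guessing with a stopping rule is a reasonable fix, though it needs a little care: when the guess $\eps'$ is too large, $s'$ is too small and the Chernoff bound for $x^\star$ no longer gives a $2^{-\Omega(n)}$ failure probability, so the returned assignment can be arbitrary; the safe version is to run the guesses from large $\eps'$ downward and stop the first time the measured fraction of violated clauses of the returned assignment is at most a fixed constant multiple of $\eps'$, which certifies $\eps' = \Omega(\eps)$ and also caps the total work at $O(\log \eps^{-1})$ times the final run. The paper simply treats $\eps$ as given, which is consistent with how the theorem is applied (e.g.\ the gap-promise corollary that follows it), so your extra machinery is optional but not wrong.
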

	
	That is, when $\varphi$ is ``almost solvable'' ($\sat(\varphi)$ is very close to $1$), we have a fast algorithm to compute an approximate solution $x$, which is only a ``little'' worse than the optimal solution.
	The following corollary is immediate.
	
	\begin{cor}
		Let $\varphi$ be a $\MAXSAT$ instance on $n$ variables and $\eps \in (0,1/10)$. Given the promise that either $\sat(\varphi) \ge 1 - \eps$ or $\sat(\varphi) < 1 - 2\eps$, there is a $2^{n (1 - 1/O(\log \eps^{-1}))}$ time algorithm for deciding which is the case.
	\end{cor}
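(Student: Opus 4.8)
The plan is to use the \emph{search} algorithm of Theorem~\ref{theo:fast-algo-MAXSAT-eps} essentially as a black box, invoked with the promise parameter $\eps$ supplied externally as the target accuracy (rather than the unknown quantity $1-\sat(\varphi)$), and then to decide the promise problem simply by counting satisfied clauses in the returned assignment. So this is a short reduction, and the only genuine point to check is the running-time bookkeeping described below.

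Concretely, I would run the algorithm of Theorem~\ref{theo:fast-algo-MAXSAT-eps} on $\varphi$ with target parameter $\eps$, obtaining an assignment $x$ in $2^{n(1-1/O(\log\eps^{-1}))}$ time; then, in $\poly(n,m)$ time, compute $s$, the number of clauses satisfied by $x$; finally, answer ``$\sat(\varphi)\ge 1-\eps$'' if $s\ge (1-2\eps)m$ and ``$\sat(\varphi)<1-2\eps$'' otherwise. For correctness, note that exactly one of the promise cases holds. If $\sat(\varphi)\ge 1-\eps$, then with $\eps':=1-\sat(\varphi)\le\eps$, Theorem~\ref{theo:fast-algo-MAXSAT-eps} guarantees $x$ satisfies at least $(1-2\eps')m\ge(1-2\eps)m$ clauses, so $s\ge(1-2\eps)m$ and the answer is correct. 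If instead $\sat(\varphi)<1-2\eps$, then $\OPT(\varphi)<(1-2\eps)m$, so \emph{every} assignment — in particular $x$ — satisfies fewer than $(1-2\eps)m$ clauses, giving $s<(1-2\eps)m$ and again the correct answer. The total running time is dominated by the call to Theorem~\ref{theo:fast-algo-MAXSAT-eps}, i.e.\ $2^{n(1-1/O(\log\eps^{-1}))}$.

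The one step requiring care — and the reason the corollary is not completely verbatim — is that the algorithm of Theorem~\ref{theo:fast-algo-MAXSAT-eps} must be run with the \emph{a priori} target $\eps$ rather than the instance-dependent gap $1-\sat(\varphi)$: if one literally plugged in the true gap, the running-time bound would degrade toward $2^n$ on instances that are fully (or nearly fully) satisfiable. Inspecting the proof of that theorem, however, $\eps$ enters only as the precision requested from the underlying $\MAMinIP$ subroutine, so supplying $\eps$ directly preserves the stated running time $2^{n(1-1/O(\log\eps^{-1}))}$ and, whenever $\sat(\varphi)\ge 1-\eps$, still returns an assignment satisfying at least $(1-2\eps)m$ clauses (a smaller true gap only helps). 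With that observation in hand, the corollary follows immediately from the case analysis above.
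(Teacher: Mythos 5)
Your proposal is correct and takes exactly the approach the paper intends (the paper merely remarks that the corollary ``is immediate'' from Theorem~\ref{theo:fast-algo-MAXSAT-eps} and gives no further proof). Your extra observation --- that the theorem's algorithm must be invoked with the \emph{externally given} $\eps$ rather than the unknown true gap $1-\sat(\varphi)$, and that a smaller true gap only helps both the sparsification step ($M=O(\eps^{-2}n)$) and the $\MAMinIP$ call --- is a legitimate and worthwhile clarification of something the paper glosses over, and your case analysis for correctness (using that every assignment satisfies fewer than $(1-2\eps)m$ clauses in the no-case) is exactly right.
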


	The best known previous algorithm for the above problem requires at least $2^{n (1- \eps^{1/3})}$ time~\cite{alman2016polynomial}, in which case the dependence on $\eps$ is exponentially worse than our new algorithm. In particular, it fails to give any improvement when $\eps < 1 / n^{3}$, while our algorithm is faster than brute-force even if $\eps = 1/2^{n^{0.99}}$.

	\subsection{Techniques: Two General Frameworks for Establishing $\OV$ Equivalence}\label{sec:intro-framework}
	
	
	In the following we discuss two general frameworks for reductions to $\OV$.	To state our results formally, we first define the $F\SATPAIR$ problem for a problem $F$.\footnote{This notation is borrowed from~\cite{AbboudHWW16}, which studied the Satisfying Pair problem for Branching Programs.}
	
	\begin{defi}[\cite{AbboudHWW16}]\label{defi:F-SatPair}
		Let $F: \{0,1\}^{d} \times \{0,1\}^{d} \to \{0,1\}$, $F\SATPAIR_{n}$ is the problem: given two sets $A$ and $B$ of $n$ vectors from $\{0,1\}^d$, determine whether there is a pair $(a,b) \in A \times B$ such that $F(a,b) = 1$.
	\end{defi}
	\begin{rem}
		For example, let $F_{\OV}$ be the function checking whether two vectors from $\{0,1\}^d$ are orthogonal. Then, $F_{\OV}\SATPAIR_n$ is simply $\OV_{n,d}$.
	\end{rem}
	 
	\subsubsection{$\Sigma_2$ Communication Protocols and Reductions to Orthogonal Vectors}
	
	Our first framework is based on $\Sigma_2$ communication protocols ($\Sigma_2^\cc$ protocols). We begin with a formal definition of such protocols.
	
	\newcommand{\rand}{\textsf{rand}}
	
	\begin{defi}[$\Sigma_2^\cc$ Protocol~\cite{BFS86CC}]\label{defi:Sigma-2-communication-protocol}
		Let $F : \mathcal{X} \times \mathcal{Y} \to \{0,1\}$ be a function. A $\Sigma_2^\cc$ protocol $\Pi$ for $F$ is specified as follows:
		\begin{itemize}
			\item There are two players, Alice holds input $x \in \mathcal{X}$ and Bob holds input $y \in \mathcal{Y}$.
			
			\item There are two provers Merlin and Megan. 
						
			\item Merlin sends a string $a \in \{0,1\}^{m_1}$ and Megan sends a string $b \in \{0,1\}^{m_2}$ (which are functions of both $x$ and $y$) to both Alice and Bob. Then Alice and Bob communicate $\ell$ bits with each other, and Alice decides whether to accept or reject the pair $(a,b)$. 
			\item $F(x,y) = 1$ if and only if there exists a string $a$ from Merlin, such that for all strings $b$ from Megan, Alice accepts $(a,b)$ after communications with Bob.
		\end{itemize}
		
		We say the protocol $\Pi$ is computationally-efficient, if both Alice and Bob's response functions can be computed in polynomial time with respect to their input length.
	\end{defi}

	We show that for any function $F$, if $F$ admits a certain efficient $\Sigma_2^\cc$ protocol, then $F\SATPAIR$ can be efficiently reduced to $\OV$. Formally, we have:

	\begin{theo}\label{theo:Sigma2-to-OV}
		Let $F : \{0,1\}^{d} \times \{0,1\}^{d} \to \{0,1\}$ and $n \in \mathbb{N}$, suppose $F$ has a computationally-efficient $\Sigma_2^\cc$ protocol, in which Merlin sends $m_1$ bits, Megan sends $m_2$ bits, and Alice and Bob communicate $\ell$ bits. 		Then there is a reduction from every $F\SATPAIR_{n}$ instance $I$ to $\OV_{n,2^{(m_2+\ell)}}$ instances $J_1,J_2,\dotsc,J_{2^{m_1}}$, such that $I$ is a yes instance if and only if there is a $j$ such that $J_j$ is a yes instance. The reduction takes $n \cdot 2^{O(m_1 + m_2 + \ell)} \cdot \poly(d)$ time.
	\end{theo}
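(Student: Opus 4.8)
The plan is to push Merlin's existential quantifier outside the Satisfying-Pair quantifier, and then to recognize the remaining ``$\exists (x,y)\ \forall b$'' predicate as an orthogonality condition coming from the combinatorial-rectangle (protocol-tree) structure of the Alice--Bob exchange. Concretely, by definition of the $\Sigma_2^\cc$ protocol, $F(x,y)=1$ iff there is $a\in\{0,1\}^{m_1}$ such that for every $b\in\{0,1\}^{m_2}$ Alice accepts $(a,b)$ after the $\ell$-bit conversation with Bob. Hence $F\SATPAIR_n$ on $(A,B)$ is a yes-instance iff $\exists\,a\in\{0,1\}^{m_1}\ \exists\,(x,y)\in A\times B\ \forall\,b\in\{0,1\}^{m_2}$ Alice accepts $(x,y,a,b)$. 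So I would output one $\OV$ instance $J_a$ for each of the $2^{m_1}$ strings $a$, where $J_a$ encodes the question ``is there $(x,y)\in A\times B$ such that for all $b$ Alice accepts $(x,y,a,b)$?''; the OR of the $J_a$'s is then exactly the answer to $F\SATPAIR_n$, which gives the claimed structure with $2^{m_1}$ sub-instances.

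Next I would construct the vectors for a fixed $a$. For a fixed public pair $(a,b)$, the Alice--Bob protocol is an ordinary deterministic $\ell$-bit protocol, so its protocol tree has at most $2^{\ell}$ leaves; call a leaf $\tau$ \emph{$A$-consistent with $x$} if at every Alice-owned node on the root-to-$\tau$ path Alice's (efficiently computable) response bit agrees with $\tau$, and \emph{$B$-consistent with $y$} analogously for Bob's nodes. For every $(x,y)$ there is exactly one leaf that is both $A$-consistent with $x$ and $B$-consistent with $y$ — the transcript of the actual run — and Alice's accept/reject verdict at that leaf is a function $\mathrm{acc}_A(x,\tau,a,b)$ of her view. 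Therefore ``there exists $b$ with Alice rejecting $(x,y,a,b)$'' holds iff there exist $b$ and a leaf $\tau$ with $\tau$ $A$-consistent with $x$, $\tau$ $B$-consistent with $y$, and $\mathrm{acc}_A(x,\tau,a,b)=0$. I index coordinates by pairs $(b,\tau)$ with $b\in\{0,1\}^{m_2}$ and $\tau$ a leaf, and set, writing $[\cdot]$ for the $0/1$ indicator,
\[
u_x[(b,\tau)] := [\tau \text{ is } A\text{-consistent with } x]\cdot[\mathrm{acc}_A(x,\tau,a,b)=0],\qquad v_y[(b,\tau)] := [\tau \text{ is } B\text{-consistent with } y].
\]
Then $\langle u_x,v_y\rangle=0$ iff no $(b,\tau)$ witnesses a rejection, i.e.\ iff for all $b$ Alice accepts $(x,y,a,b)$; so the bichromatic $\OV$ instance $J_a$ on $\{u_x\}_{x\in A}$ and $\{v_y\}_{y\in B}$ is a yes-instance iff $\exists(x,y)\ \forall b$ Alice accepts. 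Each coordinate set has size at most $2^{m_2}\cdot 2^{\ell}=2^{m_2+\ell}$, and we may pad to exactly this dimension, so each $J_a$ is an $\OV_{n,2^{m_2+\ell}}$ instance, matching the statement.

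For the running-time bookkeeping, computing a single entry $u_x[(b,\tau)]$ or $v_y[(b,\tau)]$ amounts to walking down the length-$\le \ell$ path $\tau$ and invoking the polynomial-time response functions $\ell$ times, costing $\poly(d+m_1+m_2+\ell)\le \poly(d)\cdot 2^{O(m_1+m_2+\ell)}$ time per entry; summing over the $n\cdot 2^{m_2+\ell}$ entries of each of the $2^{m_1}$ instances gives total time $n\cdot 2^{O(m_1+m_2+\ell)}\cdot\poly(d)$, as required.

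The step I expect to be the main obstacle is the second one: verifying that splitting the joint predicate ``Alice accepts $(x,y,a,b)$'' into an $x$-part and a $y$-part is faithful. One has to check carefully that $A$-consistency with $x$ together with $B$-consistency with $y$ pin down a unique leaf (the true transcript), that Alice's verdict may legitimately be folded into the Alice vector even though it depends on $x$, and that the inner product over the $(b,\tau)$ coordinates is nonzero \emph{exactly} when some $b$ causes a rejection rather than over- or under-counting runs. Everything else — the quantifier swap, the leaf-count bound $2^\ell$, and the timing — is routine.
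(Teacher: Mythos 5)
Your proof is correct and is essentially the same construction as the paper's: for each Merlin message $a$ you build one $\OV$ instance whose coordinates are indexed by pairs (Megan message $b$, transcript/leaf $\tau$), and the key observation is that for fixed $(a,b)$ exactly one leaf is simultaneously consistent with $x$ and with $y$, so the inner product vanishes precisely when Alice accepts for every $b$. The one cosmetic difference is that you attach the ``Alice rejects'' indicator only to the $x$-side vector $u_x$, whereas the paper attaches it to both $R_x$ and $R_y$; the paper's variant implicitly relies on the standard WLOG convention that Alice's verdict is determined by the transcript alone (e.g.\ she announces it as the last bit), while your variant sidesteps that convention, which is marginally cleaner but does not change the argument. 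The worry you flag at the end about ``faithfulness'' of the split is exactly the combinatorial-rectangle property of deterministic protocols and is handled correctly by your uniqueness observation, so no gap remains.
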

	
	
	\paragraph*{Applications.} We use Theorem~\ref{theo:Sigma2-to-OV} to establish the equivalence between $\OV$, $\MinIP$ / $\MaxIP$, $\MAMaxIP$ / $\MAMinIP$ and $\ExactIP$. Previous works have established that $\OV$ can be reduced to all these problems, and that these problems can be reduced to $\ExactIP$. So it suffices for us to construct a reduction from $\ExactIP$ to $\OV$. Let the $\IP_{d,m} : \{0,1\}^{d} \times \{0,1\}^{d} \to \{0,1\}$ be the function that checks whether $\langle x, y \rangle = m$,  $\ExactIP$ is $\IP_{d,m}\SATPAIR$, so we can apply Theorem~\ref{theo:Sigma2-to-OV} with an efficient $\Sigma_2^\cc$ protocol for $\IP_{d,m}$. More applications can be found in the full version of the paper.
	
	\subsubsection{Locality-sensitive Hashing (LSH) Families and Reductions to Additive Approximation to $\MaxIP$}
	\newcommand{\Anew}{A_{\textsf{new}}}
	\newcommand{\Bnew}{B_{\textsf{new}}}
	
	To establish equivalence between $\OV$ and other approximation problems, we make use of a connection with LSH families. We begin with a generalized definition of an LSH family for a partial function. In the following, let $\mathcal{X}$ be an arbitrary set.
	
	\begin{defi}\label{defi:LSH-general}
		Let $f : \mathcal{X} \times \mathcal{X} \to \{0,1,\bot\}$\footnote{$f(x,y) = \bot$ means $f$ is ``undefined'' on $(x,y)$.}.
		We say $f$ admits a $(p_1,p_2)$-sensitive LSH family, if there is a family $\mathcal{F}$ of functions $h : \mathcal{X} \to \mathcal{S}$, such that for any $x,y \in \mathcal{X}$, a uniformly random function $h \in \mathcal{F}$ satisfies:
		\begin{itemize}
			\item If $f(x,y) = 1$, then $h(x) = h(y)$ with probability at least $p_1$.
			\item If $f(x,y) = 0$, then $h(x) = h(y)$ with probability at most $p_2$.
		\end{itemize}
		
		In addition, we require that $h$ can be efficiently drawn from $\mathcal{F}$, and $h(p)$ can be efficiently computed.\footnote{Being efficient here means the running time is polynomial in the bit complexity of the input.}
	\end{defi}
	
	The usual LSH families for a metric space are special cases of the above generalized definition.
	
	\begin{defi}\label{defi:LSH-metric}
		For a function $\dist : \mathcal{X} \times \mathcal{X} \to \mathbb{R}_{\ge 0}$, we say $\dist$ admits an LSH family, if for all $\eps > 0$ and real $R > 0$, there are two reals $p_1 = p_1(\eps)$ and $p_2 = p_2(\eps)$ such that the function $f^\dist_{R,(1+\eps)R} : \mathcal{X} \times \mathcal{X} \to \{0,1,\bot\}$ defined as
		\[
			f^\dist_{R,(1+\eps)R}(x,y) = \begin{cases}
			1 &\qquad \text{$\dist(x,y) \le R$,}\\
			0 &\qquad \text{$\dist(x,y) \ge (1+\eps) \cdot R$,}\\
			\bot &\qquad \text{otherwise,}
			\end{cases}
		\]
		admits a $(p_1,p_2)$-sensitive LSH family and $p_1 > p_2$.
	\end{defi}
	
	In particular, we show that an LSH family for a function implies a reduction to additively approximating $\MaxIP$, which can in turn be reduced to $\OV$. To formally state our reduction, we need to define $\mathcal{F}\SATPAIR$ for a partial function $\mathcal{F}$.
	
	\begin{defi}\label{defi:F-Sat-Pair-partial}		
		For a partial function $\mathcal{F} : \mathcal{X} \times \mathcal{X} \to \{0,1,\bot\}$, $\mathcal{F}\SATPAIR_n$ is the problem: given two sets $A,B \subseteq \mathcal{X}$ of size $n$, distinguish between the two cases:
		\begin{itemize}
			\item There is an $(x,y) \in A \times B$ such that $\mathcal{F}(x,y) = 1$.
			\item For all $(x,y) \in A \times B$, $\mathcal{F}(x,y) = 0$.
		\end{itemize}
	\end{defi}
	\begin{rem}
		Let $\mathcal{X}$ be $\R^{d}$, and set $\mathcal{F}(x,y) = 1$ for $\|x-y\| \le R$, $\mathcal{F}(x,y) = 0$ for $\|x-y\| \ge (1+\eps) \cdot R$ and undefined otherwise. Then $\mathcal{F}\SATPAIR$ distinguishes between the cases that the minimum distance between $A$ and $B$ is $\le R$ and $\ge (1+\eps) \cdot R$, which is the decision version of $(1+\eps)$-approximation to $\BCP$.
	\end{rem}

	Now we are ready to state our general reduction.

	\begin{theo}\label{theo:LSH-to-MaxIP-general}
		Suppose $f : \mathcal{X} \times \mathcal{X} \to \{0,1,\bot\}$ admits a $(p_1,p_2)$-sensitive LSH family. Let $\eps = p_1 - p_2$.
		
		Then there is a randomized reduction from $f\SATPAIR_{n}$ to computing an $\eps/8 \cdot d$ additive approximation to $\MaxIP_{n,d}$ with $d = O(\eps^{-2} \log n)$, which succeeds with probability at least $1 - 1/n$.		
	\end{theo}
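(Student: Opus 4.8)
The plan is to turn the LSH family $\mathcal{F}$ into a small collection of hash functions whose "collision pattern" encodes an inner product that separates the two cases of $f\SATPAIR_n$. First I would draw $k = \Theta(\eps^{-2}\log n)$ independent hash functions $h_1,\dots,h_k$ from $\mathcal{F}$. For each $x \in A$ and each $b \in B$, I want a Boolean vector whose coordinates record, for each $i$, whether $h_i(x) = h_i(b)$; the trouble is that equality of hash values is not itself an inner product of fixed Boolean vectors, since the hash range $\mathcal{S}$ may be large and the reduction must produce the $A$-side and $B$-side vectors independently (this is exactly what makes it usable in the data-structure setting). The standard fix is to replace each $h_i$ by a fresh random function $g_i : \mathcal{S} \to \{0,1\}^{t}$ for a suitable $t = O(\log n)$ (or a pairwise-independent hash into $\{0,1\}^t$): then for $x,y$ we have $g_i(h_i(x)) = g_i(h_i(y))$ whenever $h_i(x) = h_i(y)$, and the probability of a spurious collision when $h_i(x) \ne h_i(y)$ is $2^{-t} \le 1/\poly(n)$, so it contributes a negligible additive error over all $O(n^2)$ pairs. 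Concatenating, I obtain for each $i$ a bit-string, and I encode "the $i$-th blocks agree" as a single coordinate of an inner product by the usual trick: append to the $A$-vector the indicator of $g_i(h_i(x))$ (a one-hot vector of length $2^t$, i.e.\ $O(\poly n)$ bits — but we can be cleverer) and likewise for $B$. To keep the dimension $O(\eps^{-2}\log n)$ rather than $\poly(n)$, instead of one-hot encoding I would use a single random bit: let $r_i$ be a uniformly random function $\mathcal{S} \to \{0,1\}$, set the $i$-th coordinate of the $A$-vector to $r_i(h_i(x))$ and of the $B$-vector to $r_i(h_i(x))$ as well — no, this does not detect equality. The clean route is: use $O(\log n)$ independent random bits $r_{i,1},\dots,r_{i,s}$ per index $i$ with $s = O(\log n)$, put $(r_{i,1}(h_i(x)),\dots)$ on side $A$ and $(r_{i,1}(h_i(b)),\dots)$ on side $B$, and observe that these two length-$s$ blocks are identical iff $h_i(x)=h_i(b)$ except with probability $2^{-s}$; then "block $i$ agrees" is computed by checking block equality, which is an AND of XNORs, expressible after a further $\poly(n)$-size gadget — again blowing up dimension.

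Given these tensions, the approach I actually expect to carry out mirrors the $\Sigma_2^\cc$-to-$\OV$ philosophy but directly: think of the quantity $Z(x,b) := \#\{\, i : h_i(x) = h_i(b)\,\}$. By the LSH guarantee and a Chernoff bound over the $k$ independent draws, with probability $\ge 1 - 1/n$ simultaneously over all pairs, $Z(x,b) \ge (p_1 - \eps/4)k$ whenever $f(x,b)=1$ and $Z(x,b) \le (p_2 + \eps/4)k$ whenever $f(x,b)=0$; since $p_1 - p_2 = \eps$, these thresholds differ by $\eps/2 \cdot k$. So the reduction succeeds if I can realize $Z(x,b)$, up to a small additive slack, as $\langle u(x), v(b)\rangle$ for Boolean vectors of dimension $O(\eps^{-2}\log n) = O(k)$. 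Here I use the random-hash trick at the granularity of the whole hash value: pick independent uniformly random maps $\phi_i : \mathcal{S} \to \{0,1\}^{q}$ with $q = \Theta(\log n)$, and let $u(x)$ be the concatenation over $i$ of the one-hot-in-$\{0,1\}^{2^q}$... this is still too big. The genuinely dimension-efficient encoding is the arithmetic one: set the $i$-th coordinate of $u(x)$ to $\phi_i(h_i(x)) \in \{0,1\}$ for a single random bit $\phi_i$, and the $i$-th coordinate of $v(b)$ to $1 - \phi_i(h_i(b))$... no. I will instead use the following: for each $i$ introduce $q = \Theta(\log n)$ coordinates; on the $A$-side put the bits of $\phi_i(h_i(x))$, on the $B$-side put the bits of $\phi_i(h_i(b))$ but negated in a paired fashion so that a matched pair of coordinates contributes $1$ to the inner product exactly when the bits agree; summing over the $q$ bits, an index with $h_i(x)=h_i(b)$ contributes exactly $q$, while an index with $h_i(x)\ne h_i(b)$ contributes a $\mathrm{Binomial}(q,1/2)$ amount, concentrated around $q/2$. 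Thus $\langle u(x),v(b)\rangle = q\cdot Z(x,b) + (\text{sum of } k - Z \text{ near-}q/2 \text{ terms})$, and after subtracting the (known, input-independent in expectation) baseline and rescaling, an $\eps/8$-relative additive approximation to $\MaxIP$ on these $d = qk = O(\eps^{-2}\log n)$-dimensional vectors pins down $\max_{x,b} Z(x,b)$ to within $\eps/8 \cdot k$ plus the Binomial fluctuation $O(\sqrt{qk}\cdot\sqrt{\log n})$, which is $o(\eps k)$ for the chosen $q,k$ — enough to distinguish the two cases.

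Concretely the steps are: (1) sample $h_1,\dots,h_k \in \mathcal{F}$ and random bit-maps $\phi_{i,j}$, $i\le k$, $j\le q$; (2) build $u(x)$ for $x\in A$ and $v(b)$ for $b\in B$, each in $\{0,1\}^{qk}$, independently per point, in $\poly$ time; (3) invoke the assumed $\eps/8\cdot d$ additive approximation to $\MaxIP_{n,qk}$; (4) compare the returned value against the threshold $\tfrac12\bigl((p_1+p_2)\bigr)qk + \tfrac12 qk\cdot(\text{baseline})$ — i.e.\ the midpoint between the two regimes — and output "$\exists$" iff it lies above; (5) bound the failure probability by a union bound over the $\le n^2$ pairs using Chernoff for the LSH concentration and Hoeffding for the Binomial gadget terms, choosing $k = C\eps^{-2}\log n$ and $q = C\log n$ with $C$ large enough to drive the total below $1/n$. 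The main obstacle is purely the gadget design in step (2): getting the inner product to track $Z(x,b)$ linearly while keeping the dimension at $O(\eps^{-2}\log n)$ and keeping the two sides' encodings independent of each other (so the reduction ports to the data-structure setting). Once the gadget is fixed, the probabilistic analysis is routine Chernoff/Hoeffding plus a union bound, and the reduction from additive $\MaxIP$ to $\OV$ is already available from the earlier parts of the paper, so no further work is needed there.
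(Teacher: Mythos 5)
Your high-level plan is essentially the one the paper follows: hash with $k = \Theta(\eps^{-2}\log n)$ independent LSH functions, turn ``hash values agree'' into a Boolean inner-product gadget, and let Chernoff plus a union bound over the $n^2$ pairs separate the two regimes. But there is a concrete error in the final parameter choice that breaks the claimed dimension bound.

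You set $q = C\log n$ bits per LSH index and $k = C\eps^{-2}\log n$ hash functions, and then assert ``$d = qk = O(\eps^{-2}\log n)$.'' That product is actually $\Theta(\eps^{-2}\log^2 n)$, which overshoots the theorem's $O(\eps^{-2}\log n)$ by a $\log n$ factor. This is not cosmetic: when composed with the reduction from additive-approximate $\MaxIP$ to $\OV$, a $\log^2 n$ dimension takes you outside the $O(\log n)$-dimensional regime that the whole equivalence class lives in. The $q = \Theta(\log n)$ choice is a leftover from your abandoned ``suppress spurious collisions to $1/\poly(n)$'' attempt; once you switched to the arithmetic encoding, spurious collisions are no longer a failure event to be suppressed but simply a $\mathrm{Bernoulli}(1/2)$ noise term to be concentrated, and concentration does not require $q$ large. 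In fact your own variance estimate shows this: the Binomial fluctuation in $Z$-units is $O(\sqrt{k\log n / q})$, and demanding this be $o(\eps k)$ with $k = C\eps^{-2}\log n$ gives $q = \Omega(1)$, not $q = \Omega(\log n)$.

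The paper's proof is exactly your gadget with $q = 1$: after drawing $h_i$, compose it with a fresh random map $\varphi_i : \mathcal{S} \to \{(0,1),(1,0)\}$ (a single random bit written in the standard two-coordinate equality encoding), so each index contributes two coordinates and $\langle g_i(x), g_i(y)\rangle = 1$ with probability $1$ if $h_i(x)=h_i(y)$ and with probability exactly $1/2$ otherwise. The collision-probability gap $\eps = p_1 - p_2$ becomes an inner-product-probability gap of $\eps/2$, and one clean Chernoff bound over the $N = \Theta(\eps^{-2}\log n)$ i.i.d.\ blocks handles both sources of randomness (the LSH draw and the random bit) at once, giving $d = 2N = O(\eps^{-2}\log n)$ as required. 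If you replace your $q = C\log n$ with $q = 1$ and fold the two concentration steps into one Chernoff bound, your argument lines up with the paper's.
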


	From Theorem~\ref{theo:LSH-to-MaxIP-general}, reductions from \textsf{Bichrom.-$\ell_2$-Closest-Pair} and $\FP$ to $\OV$ follows:

	\begin{cor}\label{cor:LSH-to-MaxIP-metric}		
		For a distance function $\dist : \mathcal{X} \times \mathcal{X} \to \mathbb{R}_{\ge 0}$ which admits an LSH family, $\BCP_{n,\dist}$ and $\FP_{n,\dist}$ can be approximated in truly subquadratic time if $\OV$ is in truly subquadratic time.
	\end{cor}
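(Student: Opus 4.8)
The plan is to chain two ingredients already established: the LSH-to-inner-product reduction of Theorem~\ref{theo:LSH-to-MaxIP-general}, and the fact (part of Theorem~\ref{theo:equiv-class-OV}) that additive-approximate $\MaxIP$, and exact $\MinIP$, lie in the equivalence class of $\OV$. Fix a constant $\eps>0$; we must compute a $(1+\eps)$-approximation in $O(n^{2-\delta})$ time for some $\delta>0$ not depending on $\eps$. First I would reduce the full $(1+\eps)$-approximation task, for $\BCP_{n,\dist}$ and for $\FP_{n,\dist}$ alike, to its decision version at a threshold $R$: distinguish ``some red--blue pair has $\dist\le R$'' from ``every red--blue pair has $\dist\ge(1+\eps)R$'' (and symmetrically for the furthest pair). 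Under the paper's standing assumption that all coordinates have bit-complexity $n^{o(1)}$, the nonzero values of $\dist$ span a ratio of $2^{n^{o(1)}}$ (the coincident-point case handled separately), so a geometric scan over $R$ uses only $n^{o(1)}$ thresholds for each fixed constant $\eps$, and the overall running time is $n^{o(1)}$ times the per-threshold cost.

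For $\BCP$, the remark following Definition~\ref{defi:F-Sat-Pair-partial} identifies this decision problem with $f^{\dist}_{R,(1+\eps)R}\SATPAIR_n$. Since $\dist$ admits an LSH family (Definition~\ref{defi:LSH-metric}), $f^{\dist}_{R,(1+\eps)R}$ has a $(p_1,p_2)$-sensitive LSH family with gap $\gamma:=p_1-p_2>0$, a constant depending only on $\eps$; Theorem~\ref{theo:LSH-to-MaxIP-general} then reduces it, with failure probability $\le 1/n$, to a $\gamma/8\cdot d$ additive approximation of $\MaxIP_{n,d}$ with $d=O(\gamma^{-2}\log n)=n^{o(1)}$. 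By Theorem~\ref{theo:equiv-class-OV} applied to additive-approximate $\MaxIP$, if $\OV$ is in truly subquadratic time then there is a $\delta>0$, \emph{independent of $\eps$}, such that a $(\gamma/8)\cdot d$ additive approximation of $\MaxIP_{n,d}$ is computable in $O(n^{2-\delta})$ time. Composing over the $n^{o(1)}$ thresholds and union-bounding failures yields a $(1+\eps)$-approximation to $\BCP_{n,\dist}$ in $n^{2-\delta+o(1)}=O(n^{2-\delta/2})$ time with probability $1-o(1)$; since $\delta$ is uniform over $\eps$ this is precisely ``approximable in truly subquadratic time''.

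The $\FP$ case is where I expect the one genuine subtlety, and it is conceptual rather than computational: ``furthest pair'' asks for the reverse of an LSH --- we would want \emph{far} points to collide --- but the definition only supplies hashes under which \emph{close} points collide, and there is no generic way to manufacture such an ``anti-LSH'' (collision probability is monotone decreasing in distance, so no combination of the given hashes reverses it). The fix is to keep the same embedding and switch the objective from maximum to minimum inner product. Running the embedding inside the proof of Theorem~\ref{theo:LSH-to-MaxIP-general} with $f=f^{\dist}_{R,(1+\eps)R}$: in the YES case of the furthest-pair decision problem the witnessing pair is ``far'' and hence has \emph{small} embedding inner product, while in the NO case every pair is ``close'' and has \emph{large} embedding inner product; so with the same parameters this decision problem reduces to an $\Omega(\gamma)\cdot d$ additive approximation of $\MinIP_{n,d}$. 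An exact $\MinIP$ algorithm is in particular such an approximation, and $\MinIP$ lies in the equivalence class of Theorem~\ref{theo:equiv-class-OV}, so under the $\OV$ hypothesis this step runs in $O(n^{2-\delta})$ time with a uniform $\delta$ (alternatively, since every embedded vector has the same Hamming weight, complementing the blue vectors turns minimum into maximum inner product and lets one invoke additive-approximate $\MaxIP$ directly). Finishing as in the $\BCP$ case --- threshold scan, union bound --- gives the claim for $\FP_{n,\dist}$ too.

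The only thing to watch in the bookkeeping is that the final exponent $\delta$ stays uniform over $\eps$: this is automatic, since the notion ``approximable in truly subquadratic time'' in Theorem~\ref{theo:equiv-class-OV} already quantifies $\delta$ before the approximation parameter, and the only $\eps$-dependence we introduce sits in the (still $n^{o(1)}$) dimension $d=O(\gamma(\eps)^{-2}\log n)$ and the (still $\Omega(1)$) additive error ratio $\gamma(\eps)/8$, both of which that definition tolerates. Thus the main obstacle is entirely the $\FP$ direction mismatch, resolved by the max-to-min swap described above; everything else is routine composition.
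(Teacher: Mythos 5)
Your proof is correct and follows essentially the same route as the paper: a search-to-decision reduction, then Theorem~\ref{theo:LSH-to-MaxIP-general} to land in additive-approximate $\MaxIP$ (or $\MinIP$ for the furthest pair), and finally the already-established membership of these problems in the $\OV$ equivalence class. The paper dismisses the $\FP$ case with ``analogous, with $\MinIP$ in place of $\MaxIP$''; you spell out the underlying reason (collision probability is monotone decreasing in distance, so one keeps the same embedding and swaps max for min), which is exactly the intended argument.
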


	\paragraph*{Applications.} We use Theorem~\ref{theo:LSH-to-MaxIP-general} and Corollary~\ref{cor:LSH-to-MaxIP-metric} to establish the equivalence between $\OV$ and all approximation problems listed in Theorem~\ref{theo:equiv-class-OV}. In particular, the $\ell_p$ metric and Jaccard Index admit efficient LSH families via $p$-stable distributions and the minHash method, which implies that they can be reduced to $\OV$ by Theorem~\ref{theo:LSH-to-MaxIP-general}.
	
	\subsection{Related Works}
	
	\paragraph{Equivalence Classes in Fine-Grained Complexity.} It is known that the All-Pairs Shortest Paths problem is sub-cubic time equivalent to many other problems~\cite{WW10-subcubic,backurs2016tight,abboud2015subcubic,lincoln2018tight}. A partial list includes: Negative Triangle, Triangle Listing, Shortest Cycle, 2nd Shortest Path, Max Subarray, Graph Median, Graph Radius and Wiener Index (see~\cite{williamssome} for more details on the APSP equivalence class). 
	
	In~\cite{gao2017completeness}, it is shown that ``moderate-dimensional'' $\OV$ (i.e., $\OV$ with $n^\delta$ dimensions for some $\delta > 0$) is equivalent to High-dimension Sparse $\OV$, High-dimension $2$-Set Cover, and High-dimension Sperner Family. It is also shown that for every $(k+1)$-quantifier first-order property, its model-checking problem can be reduced to Sparse $k$-$\OV$. In~\cite{CGLRR18Meets}, an equivalence class for $\textsf{Closest-LCS-Pair}$\footnote{ \textsf{Closest-LCS-Pair} is: given two sets $A,B$ of strings, compute $\max_{(a,b) \in A \times B} \textsf{LCS}(a,b)$.} is established, in particular, it shows $\textsf{Closest-LCS-Pair}$ and its (constant factor) approximate version are equivalent. In~\cite{CyganMWW17}, the authors present an equivalence class for $(\min,+)$-convolution, including some variants of the classical knapsack problem and problems related to subadditive sequences.
	
	\paragraph{Faster-Than-Brute-Force Algorithms for Problems in the Equivalence Class.} Most of the problems listed in Theorem~\ref{theo:equiv-class-OV} have  algorithms with some non-trivial speed-up depending on $c$ (when the dimension is $c \log n$) or $\eps$ (when the approximation ratio is $1+\eps$). Table~\ref{tab:speed-up} gives the state-of-the-art runtime bounds for these problems.
		
		\begin{table}[H]
			\begin{center}
				\begin{tabular}{|c|c|}
					\hline 
					Problem & $n^{2 - \delta}$ time, $\delta = f(c)$ or $f(\eps)$ \\ 
					\hline 
					$\OV$ & $1/O(\log c)$~\cite{abboud2015more,chan2016deterministic} \\ 
					$\MinIP$ \& $\MaxIP$ & $1/\widetilde{O}(\sqrt{c})$~\cite{alman2016polynomial}\\ 
					$\ExactIP$ &  $1/\widetilde{O}(c)$~\cite{AW15} \\ 
					$\MAMaxIP$ & $1/O(\log c)$~\cite{chen2018hardness} [This paper]  \\ 
					$\MAMinIP$ & $1/O(\log c)$ [This paper] \\ 
					$\textsf{B.-$\ell_2$-Closest-Pair}$ & $\widetilde{O}(\eps^{1/3})$~\cite{alman2016polynomial} \\
					$\FPp[p]$ & $\widetilde{O}(\eps^{1/3})$~\cite{alman2016polynomial}\tablefootnote{\cite{alman2016polynomial} only discussed $\BCPp[p]$ when $p \in \{1,2\}$, but one can observe that their algorithm in fact works equally well with $\BCPp[p]$ and $\FPp[p]$ for $p \in [1,2]$.} \\
					\hline 
				\end{tabular} 
			\caption{The best known running-time exponents for the problems shown (in this paper) to be equivalent to $\OV$.}\label{tab:speed-up}
			\end{center}
		\end{table}
	
	\paragraph*{Fine-Grained Complexity and Communication Complexity.}
	The connection between communication complexity and Fine-Grained Complexity dates back at least to~\cite{PW10}, in which it is shown that a sub-linear, computational efficient protocol for $3$-party Number-On-Forehead Set-Disjointness problem would refute SETH. The work of~\cite{ARW17-proceedings} shows hardness for approximate version for a host of important problems in $\PTIME$, using the $\WO(\sqrt{n})$ $\MA$ communication protocol for Set-Disjointness~\cite{AW09-algebrization}.
	
	Using Algebraic Geometry codes,~\cite{Rub18BetterMA} obtains a better $\MA$ protocol, which in turn improves the efficiency of the previous ``distributed PCP'' construction of~\cite{ARW17-proceedings}. He then shows $n^{2-o(1)}$-time hardness for $1+o(1)$-approximations to Bichromatic Closest Pair and $o(d)$-additive approximations to $\MaxIP_{n,d}$ with this new technique. \cite{karthik2017parameterized} use the Distributed PCP framework to derive inapproximability results for $k$-Dominating Set under various assumptions. In particular, building on the techniques of~\cite{Rub18BetterMA}, it is shown that under SETH, $k$-Dominating Set has no $(\log n)^{1/\poly(k,e(\eps))}$ approximation in $n^{k-\eps}$ time\footnote{where $e$ is a certain function from $ \mathbb{R}^{+} \to \mathbb{N}$}. 	
	
	\cite{abboud2018fast} make use of the $\widetilde{O}(\log n)$ $\textsf{IP}$ communication protocol for Set-Disjointness in~\cite{AW09-algebrization}, and shows a fast deterministic approximation algorithm to Longest Common Subsequence has interesting circuit lower bound consequences. Making use of the $\textsf{IP}$ communication protocol for low-space computation, \cite{CGLRR18Meets} establish an equivalence class for $\textsf{Closest-LCS-Pair}$.
	
	\cite{chen2018hardness} establishes a connection between hardness of the furthest pair problem in low dimensional Euclidean space and $\NP \cdot \UPP$ communication protocols for Set-Disjointness. He also shows the $\BQP$ communication protocol for Set-Disjointness~\cite{BCW98-quantum_communication} can be used to derive an inapproximability result for $\{-1,1\}\text{-}\MaxIP$.\footnote{the variant of $\MaxIP$ with vectors from $\{-1,1\}^d$ instead of $\{0,1\}^d$}
	

	
	
	\newcommand{\posR}{\mathbb{R}^{+}}
	
	\newcommand{\psirevx}{\psi^{x}_\textsf{rev}}
	\newcommand{\psirevy}{\psi^{y}_\textsf{rev}}
	\section{Preliminaries}
	
	In this paper, we use $\posR$ to denote the set of all positive reals. For notational convenience, we first give the formal definitions of the problem we study in this paper.
	
	\subsection{Problem List}\label{sec:problem-list}
	
	\begin{defi}[Boolean Vector Problem List]\label{defi:prob-list} For $n,d \in \mathbb{N}$, we define several problems. For all of them, the input is the same: we are given sets $A$ and $B$ of $n$ vectors from $\{0,1\}^d$. 		
		\begin{enumerate}
			\item $\OV_{n,d}$\footnote{Note that we consider the red-blue version of $\OV$ in this paper for convenience, and it is equivalent to the original monochromatic version.}: \emph{Given $A, B \subseteq \{0,1\}^d$ with $|A|=|B|=n$, determine whether there exists $(a,b) \in A \times B$ such that $a \cdot b = 0$.} 
			
			\item $\ExactIP_{n,d}$: \emph{Given $A, B$ as before, and an integer $0 \le m \le d$, determine whether there exists $(a,b) \in A \times B$ such that $a \cdot b = m$.}
			
			\item $\MaxIP_{n,d}$: \emph{Given $A, B$ as before, compute} \[\MAX(A,B) := \max_{a \in A, b\in B} a \cdot b.\]
			
			\item $\MinIP_{n,d}$: \emph{Given $A, B$ as before, compute} \[\MIN(A,B) := \min_{a \in A, b\in B} a \cdot b.\]
			
			\item $\MAMaxIP_{n,d}$: \emph{Given $A, B$ as before, output a number $\WMAX(A,B) \in [\MAX(A,B) /2, \MAX(A,B)]$.}
			
			
			\item $\MAMinIP_{n,d}$: \emph{Given $A, B$ as before, output a number $\WMIN(A,B) \in [\MIN(A,B),2 \cdot \MIN(A,B)]$.}
		\end{enumerate}
		
	\end{defi}

	\begin{rem}
	The constant factor $2$ in the definitions of $\MAMinIP$ and $\MAMaxIP$ is only chosen for convenience, it can be replaced by any constant $\kappa > 1$ (such as $1.001$, or $100$).
	\end{rem}

	\begin{defi}[Other Problems]\label{defi:prob-list-geo}
		We define the following problems.
		
		\begin{enumerate}
			\item $\BCPp[p]_n$: For a fixed real $p \in [1,2]$, given two sets $A,B$ of $n$ points in $\mathbb{R}^{d}$ where $d = n^{o(1)}$, compute $\min_{(a,b) \in A \times B} \|a - b\|_p$.
			
			\item $\FPp[p]_n$: For a fixed real $p \in [1,2]$, given a set $A$ of $n$ points in $\mathbb{R}^{d}$ where $d = n^{o(1)}$, compute $\max_{(a,b) \in A \times A} \|a - b\|_p$.
			
			\item $\JaccardIndexPair_n$: Given $A,B$ as two collections of $n$ sets of size $n^{o(1)}$, compute $\max_{(S,T) \in A \times B} J(S,T)$, where $J(S,T) := \frac{|S \cap T|}{|S \cup T|}$.
			
		\end{enumerate}
	\end{defi}
	
	\subsection{Locality-sensitive Hashing}\label{sec:LSH}
	
	In this paper we apply some well-known results from the theory of \emph{locality-sensitive hashing} (LSH) (See~\cite{wang2014hashing,andoni2018approximate} for excellent recent references on LSH families and their applications).

	\paragraph{$\ell_p$ Norm.} From the theory of $p$-stable distributions, LSH families for $\ell_p$ norm when $p \in [1,2]$ have been constructed.
	
	\begin{lemma}[\cite{DIIM04LSH}]\label{lm:LSH-ell-p}
		For a constant $p \in [1,2]$, the $\ell_p$ distance $\dist_p(x,y) := \|x-y\|_p$ admits a LSH family. Moreover, for all real $\eps \in (0,0.1)$ and real $R > 0$, $f^{\dist_p}_{R,(1+\eps) R}$ admits a $(p_1,p_2)$-sensitive LSH family, such that $p_1 - p_2 \ge \Omega(\eps)$.
	\end{lemma}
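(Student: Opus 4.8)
The plan is to invoke the classical $p$-stable LSH construction of Datar, Immorlica, Indyk and Mirrokni. Recall that for $p \in [1,2]$ there is a $p$-stable distribution $\mathcal{D}_p$: if $Z_1, \dotsc, Z_d \sim \mathcal{D}_p$ are i.i.d., then for any vector $v \in \R^d$ the random variable $\sum_i v_i Z_i$ is distributed as $\|v\|_p \cdot Z$ with $Z \sim \mathcal{D}_p$. (For $p = 1$ this is the Cauchy distribution, for $p = 2$ the Gaussian.) The hash family $\mathcal{F}$ is parametrized by a width $w > 0$, a random vector $\mathbf{a}$ with i.i.d.\ $\mathcal{D}_p$ entries, and a random shift $b$ drawn uniformly from $[0,w)$; the hash function is $h_{\mathbf{a},b}(x) = \lfloor (\langle \mathbf{a}, x\rangle + b)/w \rfloor$. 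First I would record that this $h$ is efficiently sampleable and evaluable (sampling $d$ stable variates and an arithmetic computation), which discharges the efficiency requirement in Definition~\ref{defi:LSH-general}.

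Next I would analyze the collision probability. For two points $x,y$ with $\|x-y\|_p = r$, the inner products $\langle \mathbf{a}, x\rangle$ and $\langle \mathbf{a}, y\rangle$ differ by a sample of $r \cdot Z$; a standard computation (carried out in~\cite{DIIM04LSH}) shows that
\[
  \Pr[h_{\mathbf{a},b}(x) = h_{\mathbf{a},b}(y)] = \int_0^w \frac{1}{r} \varphi_p\!\left(\frac{t}{r}\right)\left(1 - \frac{t}{w}\right)\,\d t =: P(r),
\]
where $\varphi_p$ is the density of $|Z|$. The key qualitative facts are that $P(r)$ is strictly decreasing in $r$ and continuous. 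So for a threshold $R$ I set $p_1 := P(R)$ and $p_2 := P((1+\eps)R)$, giving $p_1 > p_2$ and immediately verifying the two conditions of the $(p_1,p_2)$-sensitive LSH family for $f^{\dist_p}_{R,(1+\eps)R}$: if $\dist_p(x,y) \le R$ then monotonicity gives collision probability $\ge P(R) = p_1$, and if $\dist_p(x,y) \ge (1+\eps)R$ then it is $\le P((1+\eps)R) = p_2$. This already establishes that $\dist_p$ admits an LSH family in the sense of Definition~\ref{defi:LSH-metric}.

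The part that needs a little care is the quantitative gap $p_1 - p_2 \ge \Omega(\eps)$. I would fix the width $w$ to a universal constant (independent of $R$ and $\eps$; it may depend on $p$), and—after rescaling so that $R = 1$—bound the gap $P(1) - P(1+\eps)$ below by $\eps$ times a lower bound on $-P'(r)$ uniformly over $r \in [1, 1+\eps] \subseteq [1, 1.1]$. Since $P$ is a smooth function of $r$ whose derivative on the compact interval $[1,1.1]$ is a strictly negative constant depending only on $p$ and the chosen $w$, the mean value theorem yields $P(1) - P(1+\eps) \ge \eps \cdot \min_{r \in [1,1.1]}(-P'(r)) = \Omega_p(\eps)$; because $p$ ranges over the compact interval $[1,2]$, this $\Omega$ can be taken uniform in $p$ as well. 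This last scaling-and-derivative estimate is the main obstacle in the sense that it is the only place where one must look inside the DIIM collision integral rather than cite it as a black box; everything else is bookkeeping. Alternatively, one can cite~\cite{DIIM04LSH} directly, where precisely the bound $\rho = \log(1/p_1)/\log(1/p_2) < 1$ with the requisite quantitative separation for small $\eps$ is established, and simply translate their $\rho$-bound into the additive gap $p_1 - p_2 = \Omega(\eps)$ via $p_1, p_2 = \Theta(1)$ for constant width.
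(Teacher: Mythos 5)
Your proposal is correct and matches the intended source: the paper states this lemma as a citation to~\cite{DIIM04LSH} without giving a proof, and your reconstruction — the $p$-stable hash family $h_{\mathbf{a},b}(x) = \lfloor(\langle\mathbf{a},x\rangle + b)/w\rfloor$, the collision-probability integral $P(r)$, its strict monotonicity, and the mean-value-theorem argument for the additive gap after fixing $w$ and rescaling to $R=1$ — is exactly what one extracts from that reference. You also correctly identify the one spot where a black-box citation does not suffice: \cite{DIIM04LSH} states $\rho$-bounds rather than additive gaps, so the bound $p_1 - p_2 \ge \Omega(\eps)$ does require either the derivative estimate on $[1,1.1]$ or the translation from the $\rho$-bound using $p_1,p_2 = \Theta(1)$, both of which you handle correctly (the claimed uniformity over $p \in [1,2]$ is unnecessary, since $p$ is a fixed constant in the lemma, but harmless).
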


	\paragraph{Jaccard Index.} For two sets $A,B$, recall that their Jaccard index is defined as $J(A,B) := \frac{ |A \cap B|}{|A \cup B|}$. It is well-known that this measure admits a LSH family by the MinHash method.
	
	\begin{lemma}[\cite{broder1997resemblance}]\label{lm:LSH-Jaccard-Index}
		Let $0 \le p_2 < p_1 \le 1$ be two reals, and $f$ be the function on two sets such that $f(A,B) = 1$ when $J(A,B) \ge p_1$, $f(A,B) = 0$ when $J(A,B) \le p_2$ and undefined otherwise. $f$ admits a $(p_1,p_2)$-sensitive LSH family.
	\end{lemma}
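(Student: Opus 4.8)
The plan is to recall and verify the classical MinHash construction of Broder. Fix the ground universe $U$ from which all input sets are drawn, and let $\mathcal{X}$ be the collection of finite subsets of $U$. For a permutation $\pi$ of $U$, define $h_\pi(S) := \operatorname{argmin}_{x \in S} \pi(x)$, the unique element of $S$ receiving the smallest rank under $\pi$, and let $\mathcal{F} = \{h_\pi\}$ with $\pi$ drawn uniformly at random; the codomain $\mathcal{S}$ in Definition~\ref{defi:LSH-general} is taken to be $U$ itself. The heart of the proof is the collision identity
\[
\Pr_{\pi}\bigl[h_\pi(A) = h_\pi(B)\bigr] = \frac{|A \cap B|}{|A \cup B|} = J(A,B)
\]
for all $A,B \in \mathcal{X}$. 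Granting this, the lemma is immediate: if $J(A,B) \ge p_1$ the collision probability is at least $p_1$, and if $J(A,B) \le p_2$ it is at most $p_2$, so $\mathcal{F}$ is $(p_1,p_2)$-sensitive for $f$.

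To establish the identity, I would note that $h_\pi(A)$ and $h_\pi(B)$ depend only on the total order $\pi$ induces on the finite set $A \cup B$, and that this induced order is uniformly random over all orderings of $A \cup B$. Let $z$ be the $\pi$-minimal element of $A \cup B$; by symmetry $z$ is uniformly distributed over $A \cup B$. If $z \in A \cap B$, then $z$ is simultaneously the minimizer of $A$ and of $B$, so $h_\pi(A) = h_\pi(B) = z$; conversely, if $z$ lies in (say) $A \setminus B$, then $h_\pi(A) = z \notin B$, so $h_\pi(A) \ne h_\pi(B)$. Hence the collision event is exactly the event $z \in A \cap B$, which has probability $|A\cap B|/|A\cup B|$.

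The remaining point is the efficiency clause of Definition~\ref{defi:LSH-general}: a truly uniform permutation of a large universe cannot be sampled or stored in polynomial time. I would handle this in the standard way, by replacing $\pi$ with an assignment of independent uniform reals $r(x) \in [0,1]$ (represented on a fine enough grid that all relevant values are distinct with high probability, ties broken arbitrarily) and setting $h(S) := \operatorname{argmin}_{x \in S} r(x)$; since the inputs live in sets of size $n^{o(1)}$, only polynomially many values $r(x)$ are ever queried, so $h$ can be drawn and evaluated in polynomial time — alternatively one invokes a min-wise independent hash family. The same symmetry argument gives the collision probability $J(A,B)$ up to a negligible correction from ties. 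The only genuinely delicate issue is this representation/efficiency matter; the probabilistic core is the one-line symmetry argument above.
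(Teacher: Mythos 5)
The paper does not prove this lemma; it cites it directly to Broder~\cite{broder1997resemblance} as a known fact. Your reconstruction is exactly the classical MinHash argument: define $h_\pi(S)$ as the $\pi$-minimal element of $S$, observe that $h_\pi(A)=h_\pi(B)$ iff the $\pi$-minimum of $A\cup B$ lands in $A\cap B$, and conclude the collision probability equals $J(A,B)$ by symmetry. The sensitivity bounds then follow immediately, and you are right that the only subtlety is the efficiency clause, which is handled in the standard way (independent uniform labels on the small relevant universe, or a min-wise independent family). Your proof is correct and is the same proof the cited reference gives.
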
	

	\section{General Reduction Frameworks with $\Sigma_2$ Communication Protocols and LSH Families}

In this section we present two general reduction frameworks for showing equivalence to $\OV$.

\subsection{$\Sigma_2$ Communication Protocols and Reductions to $\OV$}

We first show that an efficient $\Sigma_2^\cc$ protocol for a function $f$ implies a reduction from $f\SATPAIR$ to $\OV$.

\begin{reminder}{Theorem~\ref{theo:Sigma2-to-OV}}
	Let $F : \{0,1\}^{d} \times \{0,1\}^{d} \to \{0,1\}$ and $n \in \mathbb{N}$, suppose $F$ has a computationally-efficient $\Sigma_2^\cc$ protocol, in which Merlin sends $m_1$ bits, Megan sends $m_2$ bits, and Alice and Bob communicate $\ell$ bits. 		Then there is a reduction from every $F\SATPAIR_{n}$ instance $I$ to $\OV_{n,2^{(m_2+\ell)}}$ instances $J_1,J_2,\dotsc,J_{2^{m_1}}$, such that $I$ is a yes instance if and only if there is a $j$ such that $J_j$ is a yes instance. The reduction takes $n \cdot 2^{O(m_1 + m_2 + \ell)} \cdot \poly(d)$ time.
\end{reminder}

\begin{proofof}{Theorem~\ref{theo:Sigma2-to-OV}}
	\newcommand{\proofMerlin}{s_{\sf Merlin}}
	\newcommand{\proofMegan}{s_{Megan}}
	
	Let $F$ be the given function and $\Pi$ be its $\Sigma_2$ protocol. Fix $a \in \{0,1\}^{m_1}$ and $b \in \{0,1\}^{m_2}$ as the proofs from Merlin and Megan. Let $w_1,w_2,\dotsc,w_{2^\ell}$ be an enumeration of all possible communication transcripts between Alice and Bob (note they communicate $\ell$ bits). We define two binary vectors $R_x(a,b), R_y(a,b) \in \{0,1\}^{2^\ell}$ as follows: for all $a,b$, $R_x(a,b)_i = 1$ ($R_y(a,b)_i = 1$) if and only if the transcript $w_i$ is consistent with Alice's input $x$ (Bob's input $y$), and $w_i$ makes Alice reject. Note that since the transcript is uniquely determined by $x$, $y$, $a$ and $b$, only one $w_i$ is consistent with both $x$ and $y$ given the pair $(a,b)$. It follows that $ \langle R_x(a,b), R_y(a,b) \rangle = 0$ if and only if Alice accepts the pair $(a,b)$. 
	
	Now, suppose we are given an $F\SATPAIR_{n}$ instance $I$ with sets $A$ and $B$ of $n$ vectors from $\{0,1\}^{d}$. We first enumerate Merlin's possible string $a \in \{0,1\}^{m_1}$, and use $R_{x}(a,\cdot)$ to denote the string obtained by concatenating all $R_{x}(a,b)$'s for $b \in \{0,1\}^{m_2}$. $R_{y}(a,\cdot)$ is defined similarly. For each $a$, let $A_{a}$ be the set of $R_x(a,\cdot) \in \{0,1\}^{m_2 + \ell}$ for all $x \in A$, and $B_{a}$ be the set of $R_{y}(a,\cdot) \in \{0,1\}^{m_2 + \ell}$ for all $y \in B$.
	
	We claim $I$ is a yes instance if and only if some pair $(A_a,B_a)$ is a yes instance for $\OV$.
	
	\begin{itemize}
		\item Suppose $I$ is a yes instance. Then there is an $(x,y) \in A \times B$ such that $F(x,y) = 1$. By the definition of $\Sigma_2^\cc$ protocols and our constructions, there is an $a \in \{0,1\}^{m_1}$ such that for all $b \in \{0,1\}^{m_2}$ we have $\langle R_x(a,b), R_y(a,b)\rangle = 0$. Hence, for such an $a$, $\langle R_x(a,\cdot), R_y(a,\cdot)\rangle = 0$, and therefore $(A_a,B_a)$ is a yes instance for $\OV$.
		
		\item Suppose $I$ is a no instance. Then for all $(x,y) \in A \times B$, $F(x,y) = 0$. Hence, for all $a \in \{0,1\}^{m_1}$ and all $(x,y) \in A \times B$, we have $\langle R_x(a,\cdot), R_y(a,\cdot)\rangle \ne 0$, which means all $(A_a,B_a)$'s are no instances for $\OV$.
	\end{itemize}

	Finally, since $\Pi$ is computationally-efficient, the above reduction takes $O(n \cdot 2^{O(m_1 + m_2 + \ell)} \cdot \poly(d))$ time, which completes the proof.
\end{proofof}

\subsection{LSH Families and Reductions to Additive Approximate $\MaxIP$}

Next, we show that an efficient LSH family implies a reduction to additively approximating $\MaxIP$.

\begin{reminder}{Theorem~\ref{theo:LSH-to-MaxIP-general}}
	Suppose $f : \mathcal{X} \times \mathcal{X} \to \{0,1,\bot\}$ admits a $(p_1,p_2)$-sensitive LSH family. Let $\eps = p_1 - p_2$.
	
	Then there is a randomized reduction from $f\SATPAIR_{n}$ to computing an $\eps/8 \cdot d$ additive approximation to $\MaxIP_{n,d}$ with $d = O(\eps^{-2} \log n)$, which succeeds with probability at least $1 - 1/n$.		
\end{reminder}

\begin{proof}	
	Let $\mathcal{F}$ be the corresponding $(p_1,p_2)$-sensitive LSH family, and $\mathcal{S}$ be the co-domain for hash functions from $\mathcal{F}$. Consider the following process: draw $h$ from $\mathcal{F}$ uniformly at random, then map each item in $\mathcal{S}$ independently to the string $(0,1)$ or $(1,0)$, each with probability $0.5$. Let this map be $\varphi$. Composing $h$ and $\varphi$, we obtain a function $g(x) = \varphi(h(x))$ such that:
	\begin{itemize}
		\item If $f(x,y) = 1$, then $\langle g(x), g(y) \rangle = 1$ with probability at least $p_1 + (1-p_1) /2 \ge \frac{1}{2} + \frac{1}{2} \cdot p_1$.
		\item If $f(x,y) = 0$, then $\langle g(x), g(y) \rangle = 1$ with probability at most $p_2 + (1-p_2) /2  \le \frac{1}{2} + \frac{1}{2} \cdot p_2$.
	\end{itemize}
	
	Repeat the above process for $N = c \log n$ times, independently drawing functions $g_1,g_2,\dotsc,g_{N}$, where $c$ is a parameter to be specified later. We set our reduction $w(x)$ to be the concatenation of all $g_i(x)$'s. Let $\tau_1 = \frac{1}{2} + \frac{1}{2} \cdot (p_1 - \eps/4)$ and $\tau_2 = \frac{1}{2} + \frac{1}{2} \cdot (p_2 + \eps/4)$. By a simple Chernoff bound, there is a real $c_1 = \Theta(\eps^{2})$ such that
	\begin{itemize}
		\item If $f(x,y) = 1$, then $\langle w(x), w(y) \rangle > \tau_1 \cdot N $ with probability at least $1 - 2^{c_1 \cdot N}$.
		\item If $f(x,y) = 0$, then $\langle w(x), w(y) \rangle < \tau_2 \cdot N$ with probability at least $1 - 2^{c_1 \cdot N}$.
	\end{itemize}
	
	Set $c := 3/c_1$, and let $\Anew$ (respectively, $\Bnew$) be the set of $w(a)$'s for all $a \in A$ (the set of $w(b)$'s for all $b \in B$). It follows that with probability at least $1 - 1 / n$, if there is an $(x,y) \in A \times B$ with $f(x,y) = 1$ then $\MAX(\Anew,\Bnew) > \tau_1 \cdot N$, and if $f(x,y) = 0$ for all $(x,y) \in A \times B$, then $\MAX(\Anew,\Bnew) < \tau_2 \cdot N$. Observe this reduction satisfies the desired approximation property. 
\end{proof}
	\section{An Equivalence Class for Orthogonal Vectors}

	In this section we apply our two general frameworks to prove Theorem~\ref{theo:equiv-class-OV}. 

	\subsection{Equivalence Between Boolean Vectors Problem}	
	
	We first show that all Boolean vectors problems listed in Theorem~\ref{theo:equiv-class-OV} can be trivially reduced to $\ExactIP$, and $\OV$ can be reduced to all of them.
	
	\begin{lemma}\label{lm:trivial-or-known}
		The following holds:
		\begin{itemize}
			\item If $\ExactIP$ is in truly subquadratic time, then so are $\OV$, $\MAMinIP$ ($\MAMaxIP$) and $\MaxIP$ ($\MinIP$).
			\item If any of $\MAMinIP$ ($\MAMaxIP$), $\MaxIP$ ($\MinIP$) and $\ExactIP$ is in truly subquadratic time, then so is $\OV$.
		\end{itemize}
	\end{lemma}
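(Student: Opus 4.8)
The plan is to check that every reduction asserted in the lemma is either a one‑line gadget or already available in the literature: this statement does no real work, and the substantive half of the equivalence (reducing $\ExactIP$ \emph{back} to $\OV$) is handled separately elsewhere in the paper.

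\emph{First bullet: everything reduces to $\ExactIP$.} I would first observe that $\OV_{n,d}$ is literally $\ExactIP_{n,d}$ with target $m = 0$. To compute $\MAX(A,B)$ (resp.\ $\MIN(A,B)$) I would query an $\ExactIP_{n,d}$ oracle on the same instance once for each of the $d+1$ possible targets $m \in \{0,\dots,d\}$ and return the largest (resp.\ smallest) achieved value; since $d = O(\log n)$ this only multiplies the running time by $O(\log n)$, which is absorbed into ``truly subquadratic'' (an $n^{2-\eps}$ algorithm becomes $n^{2-\eps/2}$). Finally $\MAMaxIP$ and $\MAMinIP$ are relaxations of $\MaxIP$ and $\MinIP$ — an exact value is a legal $2$-approximation — so a truly-subquadratic algorithm for the exact versions immediately gives one for the approximate versions.

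\emph{Second bullet: each of the five problems is $\OV$-hard.} Here I would exhibit, for each problem, a reduction \emph{from} $\OV$ that preserves truly-subquadratic time. Four of these are immediate and blow up neither $n$ nor $d$ by more than a factor of two: for $\ExactIP$ take target $0$; for $\MinIP$, $\OV$ is answered by testing whether $\MIN(A,B) = 0$; for $\MAMinIP$, any number in $[\MIN(A,B),\,2\,\MIN(A,B)]$ is $0$ iff $\MIN(A,B)=0$, so the returned $2$-approximation already decides $\OV$ (the latter two leave the instance untouched). For $\MaxIP$ I would use the standard complementation gadget: replace the $i$-th coordinate of $a \in A$ by the pair $(1-a_i,\,a_i)$ and the $i$-th coordinate of $b \in B$ by the pair $(1,\,1-b_i)$, producing $\phi(a),\psi(b)\in\{0,1\}^{2d}$ with $\langle \phi(a),\psi(b)\rangle = \sum_{i=1}^{d}(1-a_i b_i) = d - \langle a,b\rangle$; hence $\MAX(\phi(A),\psi(B)) = d - \MIN(A,B)$, which equals $d$ precisely when an orthogonal pair exists, and the dimension only doubles.

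\emph{The one step that is not an elementary gadget} is $\OV \le \MAMaxIP$: deciding $\OV$ from a mere \emph{factor-$2$} approximation to maximum inner product. The complementation gadget alone gives only a $d/(d-1)$ multiplicative gap, and the obvious fix of tensoring to amplify the gap would inflate the dimension to $\polylog(n)$ or worse rather than keeping it $O(\log n)$, so a genuinely different argument is needed. I would not reprove it: the $\OV$-hardness of constant-factor approximate $\MaxIP$ in the $O(\log n)$-dimensional setting is already established~\cite{Wil05,AW15,Rub18BetterMA}, and I would simply invoke that result. This is the only place I expect any difficulty, and the difficulty is entirely outsourced to prior work.
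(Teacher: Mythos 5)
Your proof is correct and follows the same decomposition as the paper's own (very terse) argument: trivial oracle reductions to $\ExactIP$ for the first bullet, and for the second bullet the complementation gadget for $\MaxIP$ (this is exactly the construction the paper uses in its Lemma~\ref{lm:MinIP-to-MaxIP} and attributes to~\cite{Wil05}), the observation that $\OV$ asks whether $\MIN(A,B)=0$ for $\MinIP$/$\MAMinIP$, and an appeal to Theorem~4.1 of~\cite{Rub18BetterMA} for the one genuinely nontrivial reduction $\OV \le \MAMaxIP$. The only cosmetic difference is that for $\OV \le \ExactIP$ in the first bullet you use the direct target $m=0$ rather than going through $\MaxIP$ as the paper does; your route is slightly more elementary.
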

	\begin{proof}
		For the first item,	$\MAMinIP$ ($\MAMaxIP$) and $\MaxIP$ ($\MinIP$) can all be trivially reduced to $\ExactIP$, and $\OV$ can be reduced to $\MaxIP$ by~\cite{Wil05}.
		
		For the second item, the case of $\MAMaxIP$ follows from Theorem~4.1 in~\cite{Rub18BetterMA}, and it is easy to see that $\OV$ can be trivially reduced to $\MinIP$ or $\MAMinIP$ ($\OV$ is equivalent to asking whether the minimum inner product is zero).
	\end{proof}
        
    
    Therefore, all we need is a reduction from $\ExactIP$ to $\OV$. We provide it by constructing a good $\Sigma_2$ communication protocol, and applying Theorem~\ref{theo:Sigma2-to-OV}.

	\begin{lemma}\label{lm:ExactIP-to-OV}
		If $\OV$ is in truly subquadratic time, then so is $\ExactIP$.
	\end{lemma}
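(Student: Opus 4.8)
The plan is to route everything through the $\Sigma_2^{\cc}$-to-$\OV$ framework of Theorem~\ref{theo:Sigma2-to-OV}. By Lemma~\ref{lm:trivial-or-known} it suffices to give a truly-subquadratic-preserving reduction from $\ExactIP$ to $\OV$, and since $\ExactIP_{n,d}$ with target value $m$ is exactly $\IP_{d,m}\SATPAIR_{n}$ for the fixed Boolean function $\IP_{d,m}$ (with $\IP_{d,m}(x,y)=1$ iff $\langle x,y\rangle = m$), Theorem~\ref{theo:Sigma2-to-OV} reduces the job to a single task: designing a computationally-efficient $\Sigma_2^{\cc}$ protocol for $\IP_{d,m}$ whose parameters $m_1$ (Merlin's length), $m_2$ (Megan's length), and $\ell$ (Alice--Bob communication) are all small.

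First I would design the protocol. The naive $\Sigma_2^{\cc}$ protocol — Merlin sends a size-$m$ set $S\subseteq\supp(x)\cap\supp(y)$ certifying $\langle x,y\rangle\ge m$, Megan sends a size-$(m{+}1)$ set trying to refute $\langle x,y\rangle\le m$, and Alice and Bob exchange $O(1)$ bits — has $m_1,m_2=\Theta(d)$, so Theorem~\ref{theo:Sigma2-to-OV} would spawn $2^{\Theta(d)}=n^{\Theta(c)}$ instances, far too many. The fix is to offload almost all of Merlin's work onto the universal player. Partition $[d]$ into $k$ blocks $B_1,\dots,B_k$ of size $s:=d/k$. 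Merlin sends only the vector of claimed block inner products $(v_1,\dots,v_k)$ with $v_j\in\{0,\dots,s\}$, i.e.\ $m_1=O(k\log s)$ bits; Alice rejects immediately unless all $v_j$ are in range and $\sum_j v_j=m$. Megan sends one block index $j\in[k]$, i.e.\ $m_2=\lceil\log k\rceil$ bits. Then Bob sends Alice the restriction $y|_{B_j}$ ($\ell=s$ bits) and Alice accepts iff $\langle x|_{B_j},y|_{B_j}\rangle=v_j$; a malformed Merlin message is rejected and a malformed Megan message is accepted. Using $\langle x,y\rangle=\sum_j\langle x|_{B_j},y|_{B_j}\rangle$: on a yes-instance the honest declaration $v_j:=\langle x|_{B_j},y|_{B_j}\rangle$ is well-formed, sums to $m$, and survives every challenge; on a no-instance any in-range declaration summing to $m$ must disagree with the truth on some block $j^\ast$, which Megan then challenges. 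Hence $\exists a\,\forall b$ Alice accepts iff $\langle x,y\rangle=m$, and both parties' response functions are clearly polynomial-time.

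Next I would tune $k$. Let $\eps>0$ be the exponent saving witnessing that $\OV$ is in truly subquadratic time, and fix a constant $c$, so $d=c\log n$. Choosing the block size $s$ to be a sufficiently large constant (depending only on $c$ and $\eps$) makes $m_1=O((d/s)\log s)\le \tfrac{\eps}{10}\log n$, while $\ell=s=O(1)$ and $m_2=O(\log\log n)$, so $2^{m_2+\ell}=O(\log n)$. Theorem~\ref{theo:Sigma2-to-OV} then turns an $\ExactIP_{n,c\log n}$ instance into $2^{m_1}\le n^{\eps/10}$ instances of $\OV_{n,O(\log n)}$ in time $n\cdot 2^{O(m_1+m_2+\ell)}\poly(d)=n^{1+O(\eps)}$. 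Solving each $\OV$ instance in $n^{2-\eps}$ time and taking the disjunction gives an $n^{2-\Omega(\eps)}$-time algorithm for $\ExactIP_{n,c\log n}$; as $c$ was arbitrary, $\ExactIP$ is in truly subquadratic time.

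I expect the protocol-design step to be the crux. One needs a $\Sigma_2^{\cc}$ protocol in which Merlin's length (which controls the \emph{number} of $\OV$ instances, hence must be $O(\log n)$ with a tiny constant) and the quantity $m_2+\ell$ (which controls the \emph{dimension} of those instances, hence must be $O(\log\log n)$) are small \emph{simultaneously}. These requirements pull against each other — finer blocks shrink $\ell$ but enlarge $m_1$ — and it is precisely the universal quantifier, Megan's block challenge, that breaks the tension: it lets Alice and Bob certify the entire inner product while inspecting only one short block, so Merlin never has to transmit a $\Theta(d)$-bit witness. A minor but genuinely necessary point is the careful handling of ill-formed prover messages, so that the $\exists a\,\forall b$ acceptance condition of Definition~\ref{defi:Sigma-2-communication-protocol} holds exactly on both yes- and no-instances.
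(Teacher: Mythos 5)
Your proposal is correct and takes essentially the same route as the paper: you design the identical $\Sigma_2^\cc$ protocol for $\IP_{d,m}$ (Merlin declares per-block inner products, Megan challenges one block, Bob sends his half of that block — this is exactly the paper's Proposition~\ref{prop:Sigma2cc-for-IP}), and plug it into Theorem~\ref{theo:Sigma2-to-OV} with the same block-size tuning, differing only in bookkeeping and notation. The observation that ill-formed Megan messages must be treated as accepting is a genuine technicality the paper leaves implicit, but it does not change the argument.
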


	\begin{prop}\label{prop:Sigma2cc-for-IP}
		Let $\IP_{n,k} : \{0,1\}^{n} \times \{0,1\}^n \to \{0,1\}$ be the function that checks whether $\langle x,y \rangle = k$. For all $n, k \in \mathbb{Z}^{+}$, and a parameter $1 \le \ell \le n$, there is a $\Sigma_2^\cc$ computationally-efficient protocol for $\IP_{n,k}$ in which Merlin sends $\ell \cdot \lceil \log (\lceil n/\ell \rceil + 1) \rceil$ bits, Megan sends $\lceil \log \ell \rceil$ bits and Alice and Bob communicate $\lceil n/\ell \rceil$ bits.
	\end{prop}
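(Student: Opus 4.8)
The plan is to build a block-decomposition $\Sigma_2^{\cc}$ protocol in which Merlin certifies the inner product block-by-block and Megan challenges a single block. First I would partition the coordinate set $[n]$ into $\ell$ consecutive blocks $B_1,\dots,B_\ell$, each of size at most $\lceil n/\ell\rceil$, and for each block set $\sigma_j := \langle x|_{B_j}, y|_{B_j}\rangle$, the contribution of $B_j$ to the inner product, so that $\langle x,y\rangle = \sum_{j=1}^\ell \sigma_j$ and each $\sigma_j \in \{0,1,\dots,\lceil n/\ell\rceil\}$.

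The protocol itself: Merlin sends claimed values $\widetilde\sigma_1,\dots,\widetilde\sigma_\ell$, writing each with $\lceil \log(\lceil n/\ell\rceil+1)\rceil$ bits, for a total of $m_1 = \ell\cdot\lceil\log(\lceil n/\ell\rceil+1)\rceil$ bits; Megan sends an index $j\in[\ell]$ using $m_2 = \lceil\log\ell\rceil$ bits; then Bob sends $y|_{B_j}$ (padded to $\lceil n/\ell\rceil$ bits) to Alice, who accepts iff both $\sum_{j'} \widetilde\sigma_{j'} = k$ and $\langle x|_{B_j}, y|_{B_j}\rangle = \widetilde\sigma_j$. Alice's and Bob's response functions are just coordinate restrictions, an inner product, and a comparison, so the protocol is computationally efficient, and the message and communication lengths match the statement exactly.

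Next I would verify the two halves of the $\Sigma_2^{\cc}$ acceptance condition. For completeness, when $\langle x,y\rangle = k$, Merlin sends the honest values $\widetilde\sigma_j = \sigma_j$; then the first check holds since $\sum_j \sigma_j = k$, and the second holds for every block Megan could name, so Alice accepts against all of Megan's messages. For soundness, fix $\langle x,y\rangle \ne k$ and any Merlin message; if $\sum_{j'}\widetilde\sigma_{j'}\ne k$ Alice rejects outright, and otherwise $\sum_{j'}\widetilde\sigma_{j'} = k \ne \sum_{j'}\sigma_{j'}$ forces $\widetilde\sigma_{j^*}\ne\sigma_{j^*}$ for some $j^*$, so Megan names $j^*$ and the second check fails. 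Hence $\IP_{n,k}(x,y)=1$ exactly when some Merlin message makes Alice accept against every Megan message.

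The construction is essentially routine; the only things to watch are padding each $\widetilde\sigma_j$ and the transcript $y|_{B_j}$ up to the worst-case block size so the stated length bounds hold exactly, and getting the $\exists a\,\forall b$ quantifier order right in the soundness argument — the key point being that any Merlin lie that still sums to $k$ must be wrong on some individual block, which is precisely what a single-block challenge from Megan can expose.
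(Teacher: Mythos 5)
Your proof is correct and is essentially identical to the paper's: both partition the coordinates into $\ell$ blocks, have Merlin certify the per-block inner products, have Megan challenge a single block, and have Bob reveal his block so Alice can verify. You are slightly more careful about the padding/non-divisibility details, which the paper waves away with ``assume $\ell$ divides $n$.''
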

	\begin{proof}
		We assume $\ell$ divides $n$ for simplicity. Let $x,y$ be the inputs of Alice and Bob, respectively. We partition $x$ into $\ell$ equally-sized groups of length $n/\ell$, let them be $x_1,x_2,\dotsc,x_\ell$. Similarly, we partition $y$ into groups $y_1,y_2,\dotsc,y_\ell$. Clearly, $\langle x,y \rangle = \sum_{i=1}^\ell \langle x_i,y_i \rangle$.
		
		Merlin's message is a vector $\psi \in \{0,1,\dotsc,n/\ell\}^{\ell}$, where $\psi_i$ is intended to be $\langle x_i,y_i \rangle$.
		
		Alice rejects immediately if $\sum_{i=1}^{\ell} \psi_i \ne k$, regardless of Megan's message. Otherwise, Megan's message is an index $i$ in $[\ell]$. Bob sends $y_i$ to Alice, and Alice accepts if and only if $\langle x_i,y_i \rangle = \psi_i$.		
		
		We argue the protocol correctly decides $\IP_{n,k}$.
		If $\langle x,y \rangle = k$, it is easy to see that for the correct $\psi$, Alice accepts all messages from Megan (and Bob). When $\langle x,y \rangle \ne k$, for all $\psi$ such that $\sum_{i=1}^{\ell} \psi_i = k$ (otherwise Alice always rejects), there must be an $i$ such that $\langle x_i,y_i \rangle \ne \psi_i$, which means Alice rejects on the pair $\psi$ and $i$. Finally, it is easy to see that the protocol satisfies the requirements of computational efficiency, which completes the proof.
	\end{proof}

	Now we are ready to prove Lemma~\ref{lm:ExactIP-to-OV}.
    		
	\begin{proofof}{Lemma~\ref{lm:ExactIP-to-OV}}
				
		Suppose there is a universal constant $\delta > 0$ such that for all constants $c'$, $\OV_{n,c'\log n}$ can be solved in $n^{2 - \delta}$ time. Let $c$ be an arbitrary constant.
		
		Observe that an $\ExactIP_{n, c\log n}$ instance with target integer $m$, is simply a $\IP_{c\log n,m}\SATPAIR_n$ instance. Set $\ell := \eps \cdot \log n$ for an $\eps > 0$ to be specified later. By Proposition~\ref{prop:Sigma2cc-for-IP}, there is a $\Sigma_2^\cc$ protocol for $\IP_{c \log n,m}$ such that Merlin sends $\eps \cdot \log (c/\eps) \cdot \log n$ bits, Megan sends $\log (\epsilon \log n)$ bits and Alice and Bob communicate $c/\epsilon$ bits.
		
		By Theorem~\ref{theo:Sigma2-to-OV}, there is a reduction from an $\ExactIP_{n, c\log n}$ instance to $2^{\eps \log (c/\eps) \log n} = n^{\eps \log(c/\eps)}$ many $\OV_{n,O(2^{c/\eps}\log n)}$ instances. We can set $\eps$ so that $\eps \log(c/\eps) < \delta / 2$. Note that $\eps$ only depends on $c$ and $\delta$, so it is still a fixed constant, which means (by assumption) that $\OV_{n,O(2^{c/\eps}\log n)}$ can be solved in $n^{2 - \delta}$ time. Applying the algorithm for $\OV$, we get an $n^{2 - \delta/2}$ time algorithm for $\ExactIP_{n, c\log n}$, which completes the proof. 
	\end{proofof}

	\subsection{Equivalences Between $\OV$ and Approximation Problems}
	
	Now we deal with approximation problems in Theorem~\ref{theo:equiv-class-OV}.
	
	\subsubsection*{\BCPp[p] and \FPp[p]} 
	
	We first show $\OV$ is equivalent to approximate $\BCPp[p]$, $\FPp[p]$ and additive approximate $\MaxIP$. One direction is already established in~\cite{Rub18BetterMA}.
	
	\begin{lemma}[Theorem 4.1 of~\cite{Rub18BetterMA}]\label{lm:OV-to-BCPp-FPp}
		If \BCPp[p] or $\FPp[p]$ can be approximated in truly subquadratic time for any $p \in [1,2]$ or $\MaxIP$ can be additively approximated in truly subquadratic time, then $\OV$ is in truly subquadratic time.\footnote{\cite{Rub18BetterMA} only discussed $\BCPp[p]$ and additive approximation to $\MaxIP$, but it is easy to see that the proof also works for $\FPp[p]$.}
	\end{lemma}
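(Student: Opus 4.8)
The plan is to prove all three implications by reducing $\OV$ to a single ``gap'' problem --- call it \emph{constant-gap $\MaxIP$} --- namely distinguishing, for sets $A',B'$ of $n$ vectors in $\{0,1\}^{D}$ with $D = n^{o(1)}$, whether $\MAX(A',B') \ge \alpha D$ or $\MAX(A',B') \le \beta D$, for fixed constants $\alpha > \beta > 0$. I would (i) reduce $\OV_{n,c\log n}$ to an OR of $n^{o(1)}$ instances of this gap problem, each of dimension $n^{o(1)}$, and then (ii) reduce the gap problem to each of $\eps D$-additive $\MaxIP_{n,D}$, $(1+\eps)$-approximate $\BCPp[p]$, and $(1+\eps)$-approximate $\FPp[p]$. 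Composing (i) with a truly-subquadratic algorithm for any of the targets --- where we are free to choose $\eps$ to be a constant smaller than the gap produced in (ii) --- then solves $\OV_{n,c\log n}$ in $n^{2-\delta'}$ time for every constant $c$, which is the lemma.

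For (i) I would use the distributed-PCP (Merlin--Arthur communication) framework. View a candidate pair $(a,b)$ of an $\OV$ instance as an instance of Set-Disjointness on the universe $[c\log n]$, and fix a computationally-efficient $\MA$ communication protocol for Set-Disjointness --- the Aaronson--Wigderson low-degree/sum-check protocol, or the Algebraic-Geometry-code protocol of~\cite{Rub18BetterMA} --- tuned so that on an $O(\log n)$-size universe Merlin's proof has $m = n^{o(1)}$ bits, the public coins are $r$ bits with $2^{r} = n^{o(1)}$, Alice and Bob exchange $\ell$ bits with $2^{\ell} = n^{o(1)}$, completeness is $1$, and soundness (after a few parallel repetitions) is at most $1/10$. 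Enumerating Merlin's proof $\pi$, for each fixed $\pi$ build for every $a\in A$ a vector $u^{\pi}_{a}$ and for every $b\in B$ a vector $v^{\pi}_{b}$, with coordinates indexed by a public-coin string together with a \emph{balanced} $\{0,1\}$-encoding of Alice's message (as in the proof of Theorem~\ref{theo:LSH-to-MaxIP-general}): $u^{\pi}_{a}$ records the message Alice sends on each coin string, and $v^{\pi}_{b}$ records the message Bob would need to see in order to accept. Then $\langle u^{\pi}_{a},v^{\pi}_{b}\rangle$ grows linearly with the number of coin strings on which the run of $\pi$ accepts, and --- because the encoding is balanced --- this inner product, the weights $|u^{\pi}_{a}|,|v^{\pi}_{b}|$, and the dimension $D = \Theta(2^{r}2^{\ell})$ are all $n^{o(1)}$ and of the same order. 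Completeness yields $\langle u^{\pi}_{a},v^{\pi}_{b}\rangle \ge \alpha D$ for the honest $\pi$ whenever $a\cdot b = 0$, and soundness yields $\langle u^{\pi}_{a},v^{\pi}_{b}\rangle \le \beta D$ for every $\pi$ whenever $a\cdot b \ne 0$; hence $\OV$ is a ``yes'' instance iff some $(A'_{\pi},B'_{\pi})$ is.

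For (ii): the gap problem is already the decision version of $\lceil(\alpha-\beta)D/2\rceil$-additive $\MaxIP_{n,D}$, an $\Omega(D)$-additive gap. To reach $\BCPp[p]$, pad both sides with two fresh disjoint blocks so that all red vectors acquire a common Hamming weight $w_A = \Theta(D)$ and all blue vectors a common weight $w_B = \Theta(D)$ without creating new red--blue agreements; then for $0/1$ vectors $\|u-v\|_{p}^{p} = \|u-v\|_{1} = w_A + w_B - 2\langle u,v\rangle$, so the $\MaxIP$ gap becomes a gap in $\|u-v\|_{p}^{p}$ whose two endpoints and whose difference are all $\Theta(D)$, and taking $p$-th roots (with $p$ constant) preserves a $1+\Omega(1)$ multiplicative gap in $\|u-v\|_{p}$. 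For $\FPp[p]$, additionally complement every (weight-equalized) blue vector, so that a large inner product becomes a large distance, and append one ``bichromatic separator'' block of size $\Theta(D)$ that is all-$1$ on the red side and all-$0$ on the blue side on one half and vice versa on the other half; this makes every red--blue distance strictly exceed every red--red and blue--blue distance --- which the balanced encoding keeps at $O(D)$, the reason a $\Theta(D)$-sized separator suffices and does not wash out the $1+\Omega(1)$ gap --- so the monochromatic furthest pair of the combined point set is red--blue and its value decides the gap problem. \cite{Rub18BetterMA} carries this out for $\BCPp$ and additive $\MaxIP$; the $\FPp$ case is a routine variant.

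The main obstacle is step (i): the $\MA$ protocol for Set-Disjointness must be chosen so that Merlin's proof length, the coin length, and the Alice--Bob communication are \emph{simultaneously} $n^{o(1)}$ on an $O(\log n)$-size universe --- which is exactly where the efficient polynomial and AG-code protocols of~\cite{AW09-algebrization,Rub18BetterMA} are needed --- and the balanced transcript gadget must make the completeness--soundness separation surface as a \emph{constant fraction of the ambient dimension} rather than a vanishing fraction; this last point is what lets a single reduction feed additive $\MaxIP$, approximate $\BCPp$, and approximate $\FPp$ simultaneously. The weight-equalizing padding and the complement/separator construction of step (ii), together with the OR over the $2^{m} = n^{o(1)}$ instances, are then routine.
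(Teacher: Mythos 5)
This statement is cited by the paper directly from Theorem~4.1 of~\cite{Rub18BetterMA}, with a footnote noting only that the $\FPp$ case is an easy variant; the paper gives no independent proof. Your proposal reconstructs Rubinstein's distributed-PCP argument — enumerate Merlin's proof in an efficient $\MA$ protocol for Set-Disjointness, encode the randomized Alice--Bob phase as a Boolean inner-product gadget, obtain an OR of $n^{o(1)}$ gap-$\MaxIP$ instances, then map gap-$\MaxIP$ into additive $\MaxIP$ trivially, into $\BCPp$ by weight-equalization, and into $\FPp$ by complementing blue vectors and adding a bichromatic-separator block — and this is exactly the argument the lemma cites, plus the ``easy to see'' $\FPp$ variant the footnote alludes to. So your route is the same as the paper's.

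One point of confusion is worth flagging, though it does not invalidate the proof. You insist that the transcript gadget must make the completeness--soundness separation a \emph{constant} fraction of the ambient dimension, independent of $c$, and you lean on a balanced $(0,1)/(1,0)$ encoding to get there. That is both stronger than what is needed and stronger than what the construction actually delivers. What Rubinstein's protocol gives on a $c\log n$-bit universe is a gap of $\Omega(1/\exp\{\WO(c)\})\cdot D$ — compare Lemma~\ref{lm:ExactIP-to-additive-MaxIP-Rub18} in this paper — a fraction that shrinks (exponentially) with $c$. This suffices because the definitions of ``additively approximated'' and ``approximated'' in truly subquadratic time quantify the exponent $\delta$ \emph{before} the accuracy $\eps$: once $c$ is fixed, you are free to choose a constant $\eps$ as small as the $c$-dependent gap, and the $n^{o(1)}$ blowup from Merlin's proof is absorbed by choosing the protocol parameters so that its exponent is less than $\delta/2$ (as in the proof of Lemma~\ref{lm:ExactIP-to-OV}). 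Getting a genuinely $\poly(1/c)$-sized gap requires modifying Rubinstein's $\MA$ protocol and replacing the brute-force transcript gadget — that is precisely what the present paper's Lemma~\ref{lm:ExactIP-to-additive-MaxIP} does — so you should not expect it to fall out of a simple balanced encoding, and you do not need it here. With that adjustment in mind, your sketch is a faithful account of the cited result.
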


	In the following we show the reverse also holds.
	
	\begin{lemma}\label{lm:BCPp-FPp-to-OV}
		If $\OV$ is in truly-subquadratic time, then for all $p \in [1,2]$, $\BCPp[p]$ and $\FPp[p]$ can be approximated in truly subquadratic time, and $\MaxIP$ can be additively approximated in truly subquadratic time.
	\end{lemma}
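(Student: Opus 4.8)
The plan is to derive all three claimed consequences from the general framework of Theorem~\ref{theo:LSH-to-MaxIP-general} together with the already-established reduction from $\ExactIP$ to $\OV$ (Lemma~\ref{lm:ExactIP-to-OV}), handling the additive $\MaxIP$ statement first and then using it as a black box for $\BCPp[p]$ and $\FPp[p]$. Throughout I would fix the hypothesis that some universal $\delta>0$ makes $\OV_{n,c\log n}$ solvable in $n^{2-\delta}$ time for every constant $c$; combining this with Lemma~\ref{lm:trivial-or-known} (which reduces exact $\MaxIP$ and $\MinIP$ in the $O(\log n)$-dimensional regime to $O(\log n)$ calls to $\ExactIP$, one per candidate target value) and Lemma~\ref{lm:ExactIP-to-OV} already yields truly-subquadratic exact algorithms for $\MaxIP_{n,c\log n}$ and $\MinIP_{n,c\log n}$ for all constant $c$ — these are the only exact primitives I need.

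\textbf{Additive approximation of $\MaxIP$.} Given $n$ red and $n$ blue vectors in $\{0,1\}^d$ with $d=n^{o(1)}$ and an error parameter $\eps>0$, I would sample a multiset $T$ of $s=\Theta(\eps^{-2}\log n)$ coordinates of $[d]$ uniformly with replacement. For any fixed pair $(a,b)$, $\frac{d}{s}\langle a|_T,b|_T\rangle$ is $\frac{d}{s}$ times a sum of $s$ i.i.d.\ $\{0,1\}$ variables with mean $\langle a,b\rangle/d$, so by a Chernoff bound it lies within $\eps d$ of $\langle a,b\rangle$ except with probability $\exp(-\Omega(\eps^2 s))\le n^{-3}$; a union bound over the $\le n^2$ red--blue pairs shows that with probability $\ge 1-1/n$ all of these rescaled estimates are simultaneously $\eps d$-accurate, and hence $\frac{d}{s}\cdot\MaxIP_{n,s}(A|_T,B|_T)$ is an $\eps d$ additive approximation to $\MAX(A,B)$. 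Since $s=c'\log n$ for the constant $c'=O(\eps^{-2})$, the inner call is to exact $\MaxIP_{n,c'\log n}$, which runs in $n^{2-\delta}$ time as above, while forming $A|_T,B|_T$ costs only $n^{1+o(1)}$; so $\MaxIP$ is additively approximable in randomized $n^{2-\delta}$ time, with the same universal $\delta$.

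\textbf{Approximating $\BCPp[p]$ and $\FPp[p]$.} Fix a constant $p\in[1,2]$ and $\eps\in(0,0.1)$. Lemma~\ref{lm:LSH-ell-p} says that for every scale $R$ the partial predicate ``$\le R$ vs.\ $\ge(1+\eps)R$'' for $\|\cdot\|_p$ admits a $(p_1,p_2)$-sensitive LSH family with $p_1-p_2\ge\Omega(\eps)$; feeding this into Theorem~\ref{theo:LSH-to-MaxIP-general} gives a randomized reduction from the decision problem ``is there a red--blue pair at $\ell_p$-distance $\le R$, or is every such pair at distance $\ge(1+\eps)R$?'' to an $\Omega(\eps)\cdot D$ additive approximation of $\MaxIP_{n,D}$ with $D=O(\eps^{-2}\log n)$, which by the previous paragraph runs in randomized $n^{2-\delta}$ time. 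To upgrade this to a $(1+\eps)$-approximation of $\BCPp[p]_n$ I would clear denominators so nonzero $\ell_p$-distances lie in $[1,M]$ with $\log M=n^{o(1)}$ (coordinates have bit complexity $n^{o(1)}$), run the decision routine at the $O(\eps^{-1}\log M)=n^{o(1)}$ scales $R\in\{(1+\eps)^i\}$, and report the smallest accepted scale; this pins the optimum to within a $(1+\eps)^{O(1)}$ factor (then rescale $\eps$), while coincident red--blue pairs are caught by a separate $O(n\log n)$ hashing pass. The $n^{o(1)}$ scale overhead keeps the total at $n^{2-\delta+o(1)}$, still truly subquadratic. For $\FPp[p]$ I would do the same with ``$\ge(1+\eps)R$'' in place of ``$\le R$'': the LSH embedding of Theorem~\ref{theo:LSH-to-MaxIP-general} outputs vectors with a fixed number of ones, so complementing one side turns the resulting $\MinIP$ question back into a $\MaxIP$ one (equivalently, invoke the furthest-pair half of Corollary~\ref{cor:LSH-to-MaxIP-metric} directly).

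\textbf{Main obstacle.} With the two frameworks and Lemma~\ref{lm:ExactIP-to-OV} in hand nothing here is deep; the real care is bookkeeping of two kinds. First, the definition of ``truly subquadratic'' demands a \emph{single} universal $\delta$ valid for \emph{all} $\eps$, so one must verify that shrinking $\eps$ inflates only the \emph{dimension constant} $c'=O(\eps^{-2})$ (harmless, since the $\OV$ hypothesis is uniform over constant dimensions) and never $\delta$, and that the various $n^{o(1)}$ factors — coordinate sampling, concatenating $O(\eps^{-2}\log n)$ hashes, scanning $n^{o(1)}$ scales — stay strictly below $n^{\delta}$. Second, the decision-to-optimization step for $\BCPp[p]$/$\FPp[p]$ must correctly handle the ``don't-care'' gap $R<\|x-y\|_p<(1+\eps)R$ of the partial predicate and the degenerate zero-distance case; these are routine but are exactly where a careless argument would slip. (The converse direction of the equivalence is Lemma~\ref{lm:OV-to-BCPp-FPp}.)
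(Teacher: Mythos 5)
Your proof matches the paper's own argument in all essentials: both derive the additive $\MaxIP$ approximation from random coordinate sampling plus a Chernoff bound reducing to exact $\MaxIP_{n,O(\eps^{-2}\log n)}$ (which is subquadratic via Lemma~\ref{lm:ExactIP-to-OV} and Lemma~\ref{lm:trivial-or-known}), and both get $\BCPp[p]$ and $\FPp[p]$ by combining Lemma~\ref{lm:LSH-ell-p} with Theorem~\ref{theo:LSH-to-MaxIP-general} (the paper packages this as Corollary~\ref{cor:LSH-to-MaxIP-metric}) and a search-to-decision reduction over scales. The only cosmetic difference is that you prove additive $\MaxIP$ first and feed it into the geometric reductions, while the paper handles the geometric problems via Corollary~\ref{cor:LSH-to-MaxIP-metric} directly; these are the same decomposition in different order, and your extra bookkeeping on the universal $\delta$ and the degenerate zero-distance case is fine.
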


	We are going to apply Theorem~\ref{theo:LSH-to-MaxIP-general} and will actually prove a much stronger result. We show that for any metric $\dist : \mathcal{X} \times \mathcal{X} \to \mathbb{R}_{\ge 0}$ which admits a \emph{Locality-sensitive hashing} (LSH) family, approximate $\BCP$ and $\FP$ with respect to $\dist$ can be efficiently reduced to $\OV$.
	
	In the following, we use $\BCP_{n,\dist}$ and $\FP_{n,\dist}$ to denote the corresponding problems with respect to the metric $\dist$. Now we are ready to give the reduction.

	\begin{reminder}{Corollary~\ref{cor:LSH-to-MaxIP-metric}}
		For a distance function $\dist : \mathcal{X} \times \mathcal{X} \to \mathbb{R}_{\ge 0}$ which admits an LSH family, $\BCP_{n,\dist}$ and $\FP_{n,\dist}$ can be approximated in truly subquadratic time if $\OV$ is in truly subquadratic time.
	\end{reminder}
	\begin{proof}
		Suppose $\OV$ is in truly subquadratic time. By Lemma~\ref{lm:trivial-or-known} and Lemma~\ref{lm:ExactIP-to-OV}, $\MaxIP$ and $\MinIP$ are also in truly-subquadratic time. In the following we only discuss $\BCP_{n,\dist}$; the reduction for $\FP_{n,\dist}$ is analogous (with $\MinIP$ in place of $\MaxIP$).
		
		Let $\eps > 0$ be an arbitrary constant. We want to approximate the minimum distance between two sets $A$ and $B$ of $n$ elements from $\mathcal{X}$ within a $(1+\epsilon)$ multiplicative factor. By a standard (simple) search to decision reduction that incurs only a negligible factor in the running time, we only have to consider the decision version, in which you are given a real $R$, and want to distinguish the following two cases: (1) $\min_{(a,b) \in A \times B} d(a,b) \le R$; (2) $\min_{(a,b) \in A \times B} d(a,b) \ge (1+\eps) \cdot R$. 
		
		By Theorem~\ref{theo:LSH-to-MaxIP-general}, this decision problem can be reduced to additive approximation to $\MaxIP_{n,O(\log n)}$, which is in truly-subquadratic time by Lemma~\ref{lm:ExactIP-to-OV}. This completes the proof.
	\end{proof}

	Now, from the LSH families for $\ell_p$-metric, Lemma~\ref{lm:BCPp-FPp-to-OV} follows directly.
 
	\begin{proofof}{Lemma~\ref{lm:BCPp-FPp-to-OV}}
		Assume $\OV$ is in truly-subquadratic time. It follows directly from Corollary~\ref{cor:LSH-to-MaxIP-metric} and Lemma~\ref{lm:LSH-ell-p} that for all $p \in [1,2]$, $\BCPp[p]$ and $\FPp[p]$ can be approximated in truly subquadratic time. 
		
		Also, by a simple random sampling method and a Chernoff bound (see e.g. Lemma 3.6 of~\cite{chen2018hardness}), computing an $\epsilon \cdot d$ additive approximation to $\MaxIP_{n,d}$ can be reduced to $\MaxIP_{n,O(\eps^{-2} \log n)}$, which can be solved in truly-subquadratic time by Lemma~\ref{lm:ExactIP-to-OV} and Lemma~\ref{lm:trivial-or-known}.
	\end{proofof}

	\subsection*{$\JaccardIndexPair$}
	Finally, we show the equivalence between $\OV$ and approximate $\JaccardIndexPair$.
	
	\begin{lemma}\label{lm:jidx-and-OV}
		$\OV$ is in truly-subquadratic time if and only if $\JaccardIndexPair$ can be additively approximated in truly-subquadratic time.
	\end{lemma}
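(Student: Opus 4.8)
The plan is to prove the two directions separately, in both cases reducing to results already established in the excerpt. For the ``only if'' direction ($\OV$ truly subquadratic $\Rightarrow$ $\JaccardIndexPair$ additively approximable in truly subquadratic time), I would invoke Corollary~\ref{cor:LSH-to-MaxIP-metric} and Theorem~\ref{theo:LSH-to-MaxIP-general} together with the MinHash LSH family of Lemma~\ref{lm:LSH-Jaccard-Index}. Concretely, to additively approximate $\max_{(S,T) \in A \times B} J(S,T)$ within $\eps$, first reduce to the promise (decision) problem of distinguishing ``some pair has $J(S,T) \ge p_1$'' from ``all pairs have $J(S,T) \le p_2$'' for $p_1 - p_2 = \Theta(\eps)$, via a grid of $O(1/\eps)$ thresholds $p_2 = 0, \eps/4, \eps/2, \dots$ (a standard search-to-decision step costing only a $\poly(1/\eps)$ factor). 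Lemma~\ref{lm:LSH-Jaccard-Index} says the function $f$ that is $1$ when $J(S,T) \ge p_1$ and $0$ when $J(S,T) \le p_2$ admits a $(p_1,p_2)$-sensitive LSH family; then Theorem~\ref{theo:LSH-to-MaxIP-general} reduces $f\SATPAIR_n$ to an $\eps'/8 \cdot d$ additive approximation of $\MaxIP_{n,d}$ with $d = O(\eps^{-2}\log n)$, and by Lemma~\ref{lm:ExactIP-to-OV} together with Lemma~\ref{lm:trivial-or-known}, $\MaxIP$ (hence its additive approximation, via the random-sampling reduction already cited in the proof of Lemma~\ref{lm:BCPp-FPp-to-OV}) is truly subquadratic. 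Since $L = 1$ for $\JaccardIndexPair$, an $\eps$ additive approximation is exactly what ``additively approximated in truly subquadratic time'' asks for.

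For the ``if'' direction ($\JaccardIndexPair$ additively approximable in truly subquadratic time $\Rightarrow$ $\OV$ truly subquadratic), I would give a direct gadget reduction from $\OV_{n,d}$, $d = c\log n$, to the Jaccard problem. Given $a \in \{0,1\}^d$, associate the set $S_a = \{ i : a_i = 1 \} \subseteq [d]$, and similarly $T_b$ for $b \in B$. Then $|S_a \cap T_b| = \langle a, b\rangle$ and $|S_a \cup T_b| = |S_a| + |T_b| - \langle a,b\rangle$, so $a \cdot b = 0$ iff $J(S_a, T_b) = 0$, while any pair with $a \cdot b \ge 1$ has $J(S_a, T_b) \ge 1/(2d)$. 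An additive approximation of accuracy $\eps < 1/(4d)$ would already separate these cases — but $d = \Theta(\log n)$ grows, so a fixed-accuracy additive approximation does not immediately suffice. The fix is the standard amplification trick: blow up the universe by taking $k := \Theta(\log n)$ disjoint copies, i.e. replace each coordinate $i$ by a block of $k$ fresh elements, so that $S_a$ becomes a set of size $k\cdot|S_a|$ and the ``gap'' in Jaccard index between an orthogonal pair ($J = 0$) and a non-orthogonal pair ($J \ge k / (2dk) = 1/(2d)$ — wait, this does not amplify). The correct amplifier is instead: pad every set with the same large number $M$ of ``dummy'' common elements? That kills the gap too. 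The right move is to reduce $\OV_{n,d}$ to the \emph{constant-dimensional} regime first: since $\OV$ truly subquadratic for all constant $c$ is what we want to conclude, and the hypothesis on $\JaccardIndexPair$ gives a single exponent $\delta$, it suffices to reduce $\OV_{n,c\log n}$ to a $\JaccardIndexPair$ instance with sets of size $n^{o(1)}$ and an additive accuracy that is a fixed constant — which we achieve by first applying the (already-in-the-equivalence-class) reduction $\OV_{n,c\log n} \to \MAMinIP$ or directly noting $\OV$ reduces to deciding whether $\MinIP = 0$, then encoding: the cleanest route is to observe that $\OV$ reduces to $\ExactIP$ with target $0$, and $\ExactIP$ is handled elsewhere — but to keep this self-contained I would instead argue that an additive approximation to $\JaccardIndexPair$ with accuracy $\eps$ on sets of size $s$ distinguishes $J = 0$ from $J \ge 1/s$ whenever $\eps < 1/(2s)$, and that $\OV_{n,d}$ with $d = c\log n$ maps to sets of size $\le d = c \log n = n^{o(1)}$, so we need accuracy $\eps < 1/(2c\log n)$, which is not constant. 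To obtain constant accuracy, replace each original element $i$ by a single element but additionally add, for the orthogonality-witness structure, a construction that forces the union size to be a fixed constant multiple of the intersection size on ``yes'' pairs: concretely, use the reduction through $\MAMaxIP$/$\MaxIP$ instead — since $\JaccardIndexPair$ hard instances can be built from $\MaxIP$ hard instances by the identity $J(S,T) = \langle a,b\rangle / (|S| + |T| - \langle a,b\rangle)$ after normalizing all vectors to have the same Hamming weight $w$ (pad with a disjoint ``balancing'' block), giving $J = \langle a,b\rangle/(2w - \langle a,b\rangle)$, a monotone function of $\langle a,b\rangle$, so an additive approximation to $J$ yields an additive approximation to $\MaxIP/w$, i.e. to $\MaxIP$ with accuracy $\eps w = \eps \cdot n^{o(1)}$; choosing the target gap appropriately recovers $\OV$.

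I expect the main obstacle to be exactly the last point: getting the quantifiers on the \emph{additive} accuracy to line up so that a single truly-subquadratic exponent for $\JaccardIndexPair$ yields truly-subquadratic $\OV$ for \emph{every} constant $c$. The clean way to organize this is: (i) normalize all vectors in the $\MaxIP$ instance to a common Hamming weight $w \le 2d = n^{o(1)}$ by appending a complementary block, so that $J(S_a,T_b)$ becomes the strictly increasing function $t \mapsto t/(2w-t)$ of $t = \langle a,b\rangle$; (ii) note this map sends an additive-$\eps$ approximation of $J$ to an additive-$O(\eps w)$ approximation of $\langle a,b\rangle$; (iii) use the random-sampling/amplification reduction (as in the proof of Lemma~\ref{lm:BCPp-FPp-to-OV}, citing Lemma 3.6 of~\cite{chen2018hardness}) in reverse — or rather, reduce $\MaxIP_{n,c\log n}$ to additive-$\eps d$ approximation of $\MaxIP_{n, O(\eps^{-2}\log n)}$ — to absorb the $w$-blowup into a fixed constant accuracy. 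Then $\OV \le \MaxIP \le \text{additive-approx }\MaxIP \le \text{additive-approx }\JaccardIndexPair$, and combined with the forward direction this gives the stated equivalence. Everything except this bookkeeping is routine; the reductions through $\MaxIP$ and the LSH framework are already in hand from Lemmas~\ref{lm:ExactIP-to-OV},~\ref{lm:trivial-or-known},~\ref{lm:BCPp-FPp-to-OV} and Theorem~\ref{theo:LSH-to-MaxIP-general}.
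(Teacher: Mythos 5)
Your forward direction matches the paper exactly: apply Theorem~\ref{theo:LSH-to-MaxIP-general} with the MinHash family of Lemma~\ref{lm:LSH-Jaccard-Index}, as in Corollary~\ref{cor:LSH-to-MaxIP-metric}. And your key gadget for the reverse direction --- normalize all vectors to a common Hamming weight $w$ by appending disjoint balancing blocks, so that $J(S_x,T_y) = \langle x,y\rangle/(2w - \langle x,y\rangle)$ is a monotone bijection of the inner product, whence an additive-$\Theta(\eps)$ approximation of $J$ yields an additive-$O(\eps w)$ approximation of $\langle x,y\rangle$ --- is precisely the paper's gadget (with $w=d$). So the core of the argument is correct.

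The gap is in how you close the reverse direction. Your final chain
\[
\OV \;\le\; \MaxIP \;\le\; \text{additive-approx } \MaxIP \;\le\; \text{additive-approx } \JaccardIndexPair
\]
has a broken link: there is no reduction from \emph{exact} $\MaxIP_{n,c\log n}$ to an additive-$\eps d$ approximation of $\MaxIP$ for fixed $\eps>0$. Resolving inner products in $\{0,\dots,c\log n\}$ to within $\eps \cdot d$ additive error would require $\eps < 1/(c\log n)$, which is not a constant; and the ``random-sampling reduction in reverse'' you invoke is not a thing --- Lemma~3.6 of~\cite{chen2018hardness} reduces additive-approx $\MaxIP$ \emph{to} exact low-dimensional $\MaxIP$, not the other way around. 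The citations at the end compound this: Lemma~\ref{lm:BCPp-FPp-to-OV} is the converse direction ($\OV$ fast $\Rightarrow$ additive-approx $\MaxIP$ fast) and is useless here. What you actually need is Lemma~\ref{lm:OV-to-BCPp-FPp} (Rubinstein's Theorem~4.1), which gives the nontrivial implication ``additive-approx $\MaxIP$ in truly subquadratic time $\Rightarrow$ $\OV$ in truly subquadratic time'' via AG codes/$\MA$ protocols. With that in hand the reverse direction is simply: your gadget shows additive-approx $\JaccardIndexPair$ fast $\Rightarrow$ additive-approx $\MaxIP$ fast (since $w\le d=n^{o(1)}$ and the derivative of $t\mapsto 2dt/(t+1)$ is $O(d)$, an $\eps/3$ additive error in $J$ gives an $\eps\cdot d$ additive error in the inner product, with all quantifiers lining up), and then Lemma~\ref{lm:OV-to-BCPp-FPp} finishes. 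This is exactly the route the paper takes. So: right gadget, wrong closing lemma --- swap Lemma~\ref{lm:BCPp-FPp-to-OV} for Lemma~\ref{lm:OV-to-BCPp-FPp} and delete the detour through exact $\MaxIP$.
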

	\begin{proof}
		For one direction, suppose $\OV$ is in truly subquadratic time. Using a similar argument as in Corollary~\ref{cor:LSH-to-MaxIP-metric}, from Lemma~\ref{lm:LSH-Jaccard-Index} and Theorem~\ref{theo:LSH-to-MaxIP-general} it follows that $\JaccardIndexPair$ can be additively approximated in truly-subquadratic time.
		
		For the other direction, suppose $\JaccardIndexPair$ can be additively approximated in truly subquadratic time. By Lemma~\ref{lm:OV-to-BCPp-FPp}, it suffices to show that $\MaxIP$ can be additively approximated in truly-subquadratic time. Given a $\MaxIP_{n,d}$ instance with sets $A,B$ consisting of $n$ vectors from $\{0,1\}^d$, suppose we want to compute an $\epsilon \cdot d$ approximation to it. In the following we show how to reduce it to a $\JaccardIndexPair$ instance.
		
		We begin by setting up some notation. For $t \in [d]$, we use $e^{[t]}$ to denote the Boolean vector $1^{t}0^{d-t}$ from $\{0,1\}^d$ (that is, the first $t$ coordinates are $1$, and the rest are $0$). For two vectors $a,b$, we use $a \circ b$ to denote their concatenation.
		
		\newcommand{\hx}{\widehat{x}}
		\newcommand{\hy}{\widehat{y}}
		
		For each $x \in A \subseteq \{0,1\}^d$ and $y \in B\subseteq \{0,1\}^d$, we create two vectors $\hx,\hy \in \{0,1\}^{3d}$, as follows:
		\[
		\hx = x \circ e^{[d - \|x\|_1]} \circ e^{[0]}, \hy = y \circ e^{[0]} \circ e^{[d - \|y\|_1]}.
		\]
		
		Interpreting $\hx$ and $\hy$ as indicator vectors, we create their corresponding sets $S_x,T_y \subseteq [3d]$. That is, for $i \in [3d]$, $\hx_i = 1$ if and only if $i \in S_x$ (the same holds for $\hy$ and $T_y$). Observe that
		\begin{equation}
		J(S_x,T_y) = \frac{|S_x \cap T_y|}{|S_x \cup T_y|} = \frac{\langle x,y \rangle}{2 d - \langle x,y \rangle}.\label{eq:JSxTy}
		\end{equation}
		
		\newcommand{\hA}{\widehat{A}}
		\newcommand{\hB}{\widehat{B}}
		
		Now we create $\hA$ and $\hB$ as the sets of all $S_x$ for $x \in A$ and $T_y$ for $y \in B$. Let $t = \max_{(S,T) \in \hA \times \hB} J(S,T)$ and $w = \max_{(a,b) \in A \times B} \langle a, b \rangle$. From Equation~\eqref{eq:JSxTy}, we can see $t = \frac{w}{2d - w}$ and $w = d \cdot 2 \cdot \frac{t}{t+1}$. Therefore, an $\eps/3$ approximation to $t$ is enough to obtain an $\eps \cdot d$ approximation to $w$, which completes the reduction.
	\end{proof}

	 And Theorem~\ref{theo:equiv-class-OV} follows from Lemma~\ref{lm:trivial-or-known}, Lemma~\ref{lm:ExactIP-to-OV}, Lemma~\ref{lm:OV-to-BCPp-FPp}, Lemma~\ref{lm:BCPp-FPp-to-OV} and Lemma~\ref{lm:jidx-and-OV}.
	 
	\section{Equivalences for Moderate Dimensional Problems}

In this section we prove our equivalence theorems for moderate dimensional Boolean vectors problems.

\subsection{$\OV$ and $\MAMinIP$}

We first show moderate dimensional $\OV$ and $\MAMinIP$ are equivalent.

\begin{reminder}{Theorem~\ref{theo:equiv-OV-MAMinIP-MD}}
	Moderate dimensional $\OV$ is in truly subquadratic time if and only if moderate dimensional $\MAMinIP$ is.
\end{reminder}

To prove Theorem~\ref{theo:equiv-OV-MAMinIP-MD}, we construct the following reduction.

\begin{lemma}\label{lm:reduction-MAMinIP-to-OV}
	For all integers $n,d$ and a parameter $\eps > 0$, an $\MAMinIP_{n,d}$ instance can be reduced to $n^{O(\eps)}$ $\OV_{n,d^{O(1/\eps)} \log n}$ instances. The reduction is randomized and succeeds with probability at least $2/3$, and it takes $n^{1+O(\eps)} \cdot d^{O(1/\eps)}$ time.
\end{lemma}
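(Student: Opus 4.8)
I would first reduce the approximation problem to $O(\log d)$ \emph{gapped decision} problems and then reduce each of those to $\OV$. For a threshold $t$, let $\mathsf{GapMin}_t$ be the promise problem of deciding whether $\MIN(A,B)\le t$ or $\MIN(A,B)\ge 2t$. By a routine search-to-decision reduction (first spend one $\OV$ call to test whether $\MIN(A,B)=0$; otherwise run $\mathsf{GapMin}_t$ for $t=1,2,4,\dots,d$ and output the smallest $t$ answered ``$\le t$''), an $\MAMinIP_{n,d}$ instance reduces to $O(\log d)$ instances of $\mathsf{GapMin}_t$, losing only a constant factor in the approximation ratio. (This step is robust to the promise being violated: ``$\mathsf{GapMin}_t$ says yes'' still implies $\MIN<2t$ w.h.p., and ``says no'' implies $\MIN>t$ w.h.p.) So it suffices to reduce one $\mathsf{GapMin}_t$ instance to $n^{O(\eps)}$ instances of $\OV$ in dimension $d^{O(1/\eps)}\log n$.

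\textbf{Step 2 (gap amplification by tensoring).} Next I would blow up the factor-$2$ gap into a huge one. Fix $q=\Theta(1/\eps)$ (with a large enough hidden constant) and replace every vector $a$ by $a^{\otimes q}\in\{0,1\}^{d^q}$, where $(a^{\otimes q})_{(i_1,\dots,i_q)}=\prod_{\ell} a_{i_\ell}$. Then the support-intersection $I^{(q)}_{a,b}:=\supp(a^{\otimes q})\cap\supp(b^{\otimes q})$ has size exactly $\langle a,b\rangle^{q}$, so in a yes-instance some pair has $|I^{(q)}|\le t^{q}$ while in a no-instance \emph{every} pair has $|I^{(q)}|\ge(2t)^q=2^q t^q$. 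The gap between the two regimes is now the enormous factor $2^q=2^{\Theta(1/\eps)}$, paid for by raising the dimension to $d^q=d^{O(1/\eps)}$.

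\textbf{Step 3 (Goldwasser--Sipser style subsampling and certificate guessing).} Now I would argue as in the Goldwasser--Sipser approximate-counting protocol. Set $K=2^{q/2}t^q$ and $L=\max\{1,\lceil C\log n/2^{q/2}\rceil\}$ for a suitable constant $C$, draw $L$ independent random coordinate subsets $C_1,\dots,C_L\subseteq[d^q]$ (each coordinate placed in $C_j$ independently with probability $1/K$), and for every $T\subseteq[L]$ with $|T|=\lceil L/2\rceil-1$ build one $\OV$ instance $J_T$ by sending each vector $a$ to the (zero-padded, common-dimension) concatenation $\bigoplus_{j\in[L]\setminus T}(a^{\otimes q})|_{C_j}$. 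A pair $(a,b)$ is a zero pair in $J_T$ exactly when $\{j:C_j\cap I^{(q)}_{a,b}\neq\emptyset\}\subseteq T$, so \emph{some} $J_T$ witnesses a zero pair for $(a,b)$ iff $Z_{a,b}:=\#\{j:C_j\cap I^{(q)}_{a,b}\neq\emptyset\}<\lceil L/2\rceil$. For a yes-pair, $\mathbb{E}[Z_{a,b}]\le L\cdot t^q/K=L\cdot 2^{-q/2}\le L/8$, so Markov gives $Z_{a,b}<L/2$ with probability $\ge3/4$; this yields completeness (after $O(\log\log d)$ independent repetitions and a union bound over the $O(\log d)$ thresholds). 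For a no-pair, $\Pr[C_j\cap I^{(q)}_{a,b}=\emptyset]=(1-1/K)^{|I^{(q)}|}\le e^{-2^q t^q/K}=e^{-2^{q/2}}$, which is \emph{doubly} exponentially small in $q$; a binomial tail bound then gives $\Pr[Z_{a,b}<\lceil L/2\rceil]\le(4e^{-2^{q/2}})^{L/2}\le n^{-3}$ by the choice of $L$, and a union bound over the $\le n^2$ no-pairs gives soundness. The number of $\OV$ instances is $\binom{L}{\lceil L/2\rceil-1}\le 2^{L}=n^{O(1/2^{q/2})}=n^{o(\eps)}$, each of dimension $\le L\cdot\max_j|C_j|\le d^{O(1/\eps)}\log n$; multiplying by the $O(\log d)$ thresholds keeps the instance count $n^{O(\eps)}$, and the running time is dominated by constructing the tensored/restricted vectors, which is $n^{1+O(\eps)}d^{O(1/\eps)}$.

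\textbf{The main obstacle.} The delicate point -- and the reason Steps 2 and 3 must be tuned together -- is the soundness union bound over all $\le n^2$ pairs. A single random coordinate subset of density $1/K$ empties a yes-pair's intersection with only constant probability, so one is forced to take $L$ trials and let a guessed set $T$ ``forgive'' the failed trials; but then a no-pair survives all unforgiven trials with probability roughly $\mu^{L/2}$, where $\mu$ is the per-trial survival probability of a no-pair, and the union bound over $n^2$ pairs demands $\mu^{L/2}\ll n^{-2}$. Naively $\mu$ is only a constant, which would force $L=\Theta(\log n)$ and hence $2^L=n^{\Theta(1)}$ guessed certificates -- far more than the allowed $n^{O(\eps)}$. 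The tensoring in Step 2 is exactly what rescues this: by making a no-pair's intersection exponentially (in $q$) larger than a yes-pair's, it drives $\mu$ down to $e^{-2^{\Theta(1/\eps)}}$, so that $L=O(\eps\log n)$ already suffices, keeping the total number of $\OV$ instances $n^{O(\eps)}$ while the dimension blow-up stays $d^{O(1/\eps)}$. Verifying that $q$, $K$, $L$ and the threshold grid for $t$ can all be set simultaneously so that completeness, soundness, the $n^{O(\eps)}$ instance bound and the $d^{O(1/\eps)}\log n$ dimension bound hold at once is the technical heart of the argument.
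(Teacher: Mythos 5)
Your proposal is correct, and it reaches the lemma by a genuinely different (and arguably cleaner) route than the paper's proof, although both are in the Goldwasser--Sipser spirit the authors mention as their inspiration.

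The paper's proof has a three-layer structure. First, it reduces to the same gap decision problem as you do. Second, to amplify the factor-$2$ gap, it draws $N=\Theta(1/\eps)$ random coordinate multisets $T_i$ of size $d/\tau$ each and observes that a single $T_i$ zeros out the restricted inner product with probability $\ge 0.25$ for a yes-pair but $\le 0.14$ for a no-pair; a Chernoff bound then gives a majority-type gap with error $2^{-\Theta(1/\eps)}$, and a ``tuple-encoding'' gadget (enumerating all $0.8N$-tuples of failing coordinates, dimension $(d/\tau)^{O(N)}$) turns this majority condition into a single orthogonality check. Third, it takes $m = \Theta(\eps\log n)$ i.i.d.\ copies of this gadget, uses a quantitative Chernoff bound (their Lemma~\ref{lm:bound}) to push the per-pair error down to $n^{-3}$, and enumerates the $2^m = n^{O(\eps)}$ subsets $S$ of $[m]$ with $|S|>m/2$ to encode the outer majority. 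Your proof collapses the middle layer: tensoring $a\mapsto a^{\otimes q}$ with $q=\Theta(1/\eps)$ already blows the intersection-size gap up to a factor $2^q$, after which a single round of random coordinate subsampling at rate $1/(2^{q/2}t^q)$ separates a yes-pair (intersection survives deletion with probability $\ge 1-2^{-q/2}$) from a no-pair (survives with probability $\le e^{-2^{q/2}}$). Your enumeration over forgiving sets $T\subseteq[L]$ is the exact analogue of the paper's enumeration over $S\subseteq[m]$. Two concrete consequences of your different parametrization: (i) because your per-trial no-pair survival probability $e^{-2^{q/2}}$ is \emph{doubly} exponential in $1/\eps$ while the paper's is only singly exponential, you can get away with $L = O(\log n/2^{q/2})$, so the number of $\OV$ instances you produce, $2^L = n^{O(2^{-q/2})}$, is actually well inside the required $n^{O(\eps)}$; (ii) because you use Markov for the yes-pair completeness, you get only $3/4$ success per threshold and need the $O(\log\log d)$ repetitions and union bound over the $O(\log d)$ thresholds that you correctly flag, whereas the paper's use of its sharper Lemma~\ref{lm:bound} gives per-pair error $n^{-3}$ directly and requires no boosting. (If you instead chose $L = \Theta(\log n / q)$ and applied a Chernoff bound for completeness rather than Markov, you could dispense with the boosting at the cost of $2^L = n^{O(\eps)}$ instances, matching the paper's bookkeeping more closely.) The remaining items you'd want to nail down for a full write-up --- concentration of the random $|C_j|$ so the dimension bound $d^{O(1/\eps)}\log n$ holds uniformly, and the edge case $L=1$ --- are routine and you have already anticipated the relevant union bounds.
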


Before proving Lemma~\ref{lm:reduction-MAMinIP-to-OV}, we show it implies Theorem~\ref{theo:equiv-OV-MAMinIP-MD}.

\begin{proofof}{Theorem~\ref{theo:equiv-OV-MAMinIP-MD}} 		Recall that $\MIN(A,B) := \min_{(a,b) \in A \times B} \langle a,b \rangle$.
	For the first direction, note that $\OV$ with two sets $A$ and $B$ essentially asks whether $\MIN(A,B) = 0$, and a $2$-approximation to $\MIN(A,B)$ is already enough to answer that question. Therefore, if moderate dimensional $\MAMinIP$ is in truly subquadratic time, then so is $\OV$.
	
	For the second direction, suppose there are constants $\eps_1,\delta_1 > 0$ such that $\OV_{n,n^{\delta_1}}$ can be solved in $n^{2-\eps_1}$ time. Let $\eps$ be a parameter to be set later, by Lemma~\ref{lm:reduction-MAMinIP-to-OV}, there are constants $c_1,c_2$ such that all $\MAMinIP_{n,n^\delta}$ instance can be efficiently reduced to $n^{c_1 \eps}$ $\OV_{n,n^{\delta c_2/\eps}}$ instances.
	
	We set $\eps$ such that $c_1 \eps = \eps_1 / 2$, and $\delta$ such that $\delta \cdot c_2 / \eps < \delta_1$. Then applying the algorithm for $\OV$, $\MAMinIP_{n,n^\delta}$ can be solved in $n^{2 - \eps_1 / 2}$ time, which completes the proof.
\end{proofof}

The following probability inequality will be useful in the proof of Lemma~\ref{lm:reduction-MAMinIP-to-OV}.
\begin{lemma}\label{lm:bound}
	Letting $\eps \in (0,0.1)$, and $\distr$ be a distribution on $\{0,1\}$ such that $\Ex_{X \sim \distr}[X] = \eps$, there is a universal constant $c$ such that for any integer $m$ and any $cm$ independent random variables $X_1,X_2,\dotsc,X_{cm}$ from $\distr$, we have
	\[
	\Pr\left[ \sum_{i=1}^{cm} X_i \ge \frac{1}{2} \cdot cm\right] \le \eps^{-m}.
	\]
\end{lemma}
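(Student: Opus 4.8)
The plan is to prove this by a textbook exponential‑moment (Chernoff) argument; the only substantive point is to check that a \emph{single} absolute constant $c$ works uniformly over $\eps \in (0,0.1)$. (I note in passing that the displayed right‑hand side $\eps^{-m}$ is $\ge 1$ and so the inequality as literally written is vacuous; the bound that is actually needed — and that I will establish — is $\Pr[\sum_{i=1}^{cm} X_i \ge \tfrac12 cm] \le \eps^{m}$.)

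First I would set $S := \sum_{i=1}^{cm} X_i$ and apply Markov's inequality to $e^{tS}$ for a parameter $t>0$ to be chosen: $\Pr[S \ge \tfrac12 cm] \le e^{-tcm/2}\,\Ex[e^{tS}]$. Since the $X_i$ are independent and each is $\{0,1\}$‑valued with $\Ex[X_i] = \eps$, we have $\Ex[e^{tX_i}] = 1-\eps+\eps e^{t}$, hence $\Ex[e^{tS}] = (1-\eps+\eps e^{t})^{cm}$ and $\Pr[S \ge \tfrac12 cm] \le \bigl(e^{-t/2}(1-\eps+\eps e^{t})\bigr)^{cm}$. Because $\eps < 1/2$ this is an upper‑tail large‑deviation event, so the bound is meaningful.

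The key choice is $t := \ln(1/\eps)$, which makes $\eps e^{t} = 1$ and therefore $1-\eps+\eps e^{t} = 2-\eps \le 2$; the bound collapses to $\bigl(2\,\eps^{1/2}\bigr)^{cm}$. Since $\eps < 1/10$ we have $2\eps^{1/2} < 1$, so this decays in $cm$, and it remains only to verify $\bigl(2\eps^{1/2}\bigr)^{cm} \le \eps^{m}$. Writing $u := \ln(1/\eps) > \ln 10$, this is equivalent to $c\bigl(\tfrac12 u - \ln 2\bigr) \ge u$, which holds for every $u \ge \ln 10$ once $c$ is a large enough absolute constant: for instance $c = 6$ works, since $\tfrac12 u - \ln 2 \ge \tfrac16 u$ whenever $u \ge 3\ln 2 = \ln 8$, and $\ln 10 > \ln 8$. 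The case $m = 0$ is trivial.

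There is no real obstacle here — it is a routine Chernoff estimate — but the one point I would be careful about is that $c$ must be fixed \emph{before} $\eps$, so I would check the inequality $c(\tfrac12 u - \ln 2) \ge u$ over the entire range $u \in [\ln 10,\infty)$ rather than merely in the limit $u \to \infty$ (which would only force $c \ge 2$); the worst case is the smallest allowed $u$, i.e.\ $\eps$ close to $0.1$, and the choice $c = 6$ comfortably covers it.
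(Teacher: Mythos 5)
Your proof is correct, and it is the same Chernoff-style exponential-moment argument the paper uses; the only real difference is packaging. You bound $\Ex[e^{tS}]$ directly and pick the classically optimal $t=\ln(1/\eps)$, which collapses the bound to $(2\sqrt{\eps})^{cm}$ in one line, whereas the paper invokes the named multiplicative Chernoff bound $\Pr[X>(1+\delta)\mu]<\bigl(e^{\delta}/(1+\delta)^{1+\delta}\bigr)^{\mu}$ with $\delta=\eps^{-1}/3$ and then estimates the exponent. Two things you did better: (i) you correctly flagged that the displayed bound $\eps^{-m}$ must be a sign typo for $\eps^{m}$ --- the paper's own chain of inequalities in fact ends at $e^{-\ln(\eps^{-1})\cdot cm/12}=\eps^{cm/12}$, so its final ``$=\eps^{-cm/12}$'' is the same slip; and (ii) your insistence on checking the constant inequality $c(\tfrac12 u-\ln 2)\ge u$ uniformly over $u\ge\ln 10$ is exactly the care the paper's write-up lacks: the paper's step $\delta\ln\delta-\delta\ge\tfrac12\delta\ln\delta$ requires $\ln\delta\ge 2$, i.e.\ $\eps\le 1/(3e^{2})\approx 0.045$, and so does not actually hold across the stated range $\eps\in(0,0.1)$. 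Your choice $c=6$ is both smaller than the paper's $c=12$ and verified on the full range.
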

The proof of Lemma~\ref{lm:bound} can be found in the appendix.

Finally, we prove Lemma~\ref{lm:reduction-MAMinIP-to-OV}.
\begin{proofof}{Lemma~\ref{lm:reduction-MAMinIP-to-OV}}
	
	Before presenting the reduction, we first introduce some notation. For a vector $x \in \{0,1\}^d$, and a subset $S \subset [d]$, $x_{|S} \in \{0,1\}^{|S|}$ denotes the projection of $x$ onto the coordinates of $S$. Similarly, for a sequence $T$ of integers from $[d]$, let $x_{|T} \in \{0,1\}^{|T|}$ denote the projection of $x$ on $T$, such that $\left( x_{|T} \right)_i := x_{T_i}$ for each $i \in [|T|]$. We also use the Iverson bracket notation: for a predicate $P$, $\left[ P \right]$ takes value $1$ when $P$ is true, and $0$ otherwise.
	
	\paragraph*{Reduction to a Decision Problem.} Our reduction will focus on a corresponding decision problem: given two sets $A,B$ of $n$ vectors from $\{0,1\}^d$ and an integer $\tau \le d/2$, we want to distinguish the following two cases: $\MIN(A,B) \ge 2\tau$ or $\MIN(A,B) \le \tau$ (the algorithm can output anything when $\tau < \MIN(A,B) < 2\tau$). It is easy to see that via a binary search, $\log d$ calls to this decision problem can be used to solve the original $\MAMinIP$ problem, and a factor of $\log d \le \log n$ can be ignored here.
	
	\paragraph*{One Step Reduction with $\distr_T$.} Now, suppose we pick a sequence of $d/\tau$ uniform random numbers from $[d]$ and let $\distr_T$ be its distribution. Then for $x,y \in \{0,1\}^d$, we have:
	
	\begin{itemize}
		\item If $\langle x, y \rangle \le \tau$:
		\begin{align*}
		\Pr_{T \leftarrow \distr_T}[\langle x_{|T}, y_{|T} \rangle = 0] \ge (1 - \tau/d)^{d/\tau} &\ge \left(1-\frac{1}{2}\right)^{2}\\ 
				&> 0.25.
		\end{align*}
		\item If $\langle x, y \rangle \ge 2\tau$:
		\[
		\Pr_{T \leftarrow \distr_T}[\langle x_{|T}, y_{|T} \rangle = 0] \le (1 - 2\tau/d)^{d/\tau} \le e^{-2} < 0.14.
		\]
	\end{itemize}
	
	The important observation is that there is a constant probability gap between the above two cases.	
	
	\paragraph*{A Micro Reduction to $\OV$.} Now, let $N$ be an integer and $\distr_T^{\otimes N}$ be the joint distribution of $N$ independent samples from $\distr_T$. We write $\{T_i\} \leftarrow \distr_T^{\otimes N}$ to denote that $(T_1,T_2,\dotsc,T_N)$ is a random sample from $\distr_T^{\otimes N}$. By a standard Chernoff bound, when $\{T_i\} \leftarrow \distr_T^{\otimes N}$, there is a constant $c_1$ such that:
	
	\begin{itemize}
		\item If $\langle x, y \rangle \le \tau$:
		\[
		\Pr \left[ \sum_{i=1}^{N} \left[\langle x_{|T_i}, y_{|T_i} \rangle = 0 \right] > 0.2N \right] \ge 1 - 2^{-c_1 N}.
		\]
		\item If $\langle x, y \rangle \ge 2\tau$:
		\[
		\Pr \left[ \sum_{i=1}^{N} \left[\langle x_{|T_i}, y_{|T_i} \rangle = 0 \right] < 0.2N \right] \ge 1 - 2^{-c_1 N}.
		\]
	\end{itemize}
	
	Now, for a fixed $\{T_i\}$, we can distinguish the above two cases via a reduction to a ``micro'' $\OV$ instance.
	
	Note that $\sum_{i=1}^{N} \left[\langle x_{|T_i}, y_{|T_i} \rangle = 0 \right] > 0.2N$ is equivalent to the condition that there is are $t = 0.8N$ pairs $(i_1,j_1),(i_2,j_2),\dotsc,(i_t,j_t) \in [N] \times [d/\tau]$ such that all $i_k$'s are distinct, and for all $k \in [t]$, $\left(x_{|T_{i_k}}\right)_{j_k} \cdot \left(y_{|T_{i_k}}\right)_{j_k} = 1$.
	
	With this observation, we can construct our reduction. There are 
	\[
	L = \binom{N}{t} \cdot (d/\tau)^{t} = (d/\tau)^{O(N)}
	\]
	possible $t$-tuples of pairs. We sort them in an arbitrary but consistent order. Now we construct a mapping $\phi_{\{T_i\}} : \{0,1\}^{d} \to \{0,1\}^L$ as follows:
	
	For each $\ell \in [L]$, let $(i_1,j_1),(i_2,j_2),\dotsc,(i_t,j_t)$ be the $\ell$-th $t$-tuple of pairs. For a vector $z \in \{0,1\}^{d}$, we set $\phi_{\{T_i\}}(z)_\ell = 1$, iff $\left(z_{|T_{i_k}}\right)_{j_k} = 1$ for all $k \in [t]$.
	
	Then for all $x,y \in \{0,1\}^d$, we have $\sum_{i=1}^{N} \left[\langle x_{|T_i}, y_{|T_i} \rangle = 0 \right] > 0.2N$ is further equivalent to $\langle \phi_{\{T_i\}}(x), \phi_{\{T_i\}}(y) \rangle = 0$. For convenience, we let $\distr_{\phi}$ denote the distribution of $\phi_{\{T_i\}}$ when $\{T_i\}$ is drawn from $\distr_T^{\otimes N}$ and we set $N = \eps^{-1} / c_1$.
	
	To summarize, we have:
	
	\begin{itemize}
		\item If $\langle x, y \rangle \le \tau$:
		\[
		\Pr_{\phi \leftarrow \distr_\phi} \left[ \langle \phi(x), \phi(y) \rangle = 0 \right] \ge 1 - 2^{-\eps^{-1}}.
		\]
		\item If $\langle x, y \rangle \ge 2\tau$:
		\[
		\Pr_{\phi \leftarrow \distr_\phi} \left[ \langle \phi(x), \phi(y) \rangle > 0 \right] \ge 1 - 2^{-\eps^{-1}}.
		\]
	\end{itemize}
	
	\paragraph*{The Final Reduction.} Finally, letting $c_2$ be the universal constant in Lemma~\ref{lm:bound}, we pick $m = 3 c_2 \cdot \eps \log n$ i.i.d. mappings $\phi_1,\phi_2,\dotsc,\phi_m$ from $\distr_\phi$. Applying Lemma~\ref{lm:bound}, we have:
	
	\begin{itemize}
		\item If $\langle x, y \rangle \le \tau$:
		\[
		\Pr_{\{ \phi_i \} \leftarrow \distr_{\phi}^{\otimes m}} \left[ \sum_{i=1}^{m} \left[\langle \phi_i(x), \phi_i(y) \rangle = 0 \right] > \frac{1}{2} \cdot m \right] \ge 1 - n^{-3}.
		\]
		\item If $\langle x, y \rangle \ge 2\tau$:
		\[
		\Pr_{\{ \phi_i \} \leftarrow \distr_{\phi}^{\otimes m}} \left[ \sum_{i=1}^{m} \left[\langle \phi_i(x), \phi_i(y) \rangle = 0 \right] < \frac{1}{2} \cdot m \right] \ge 1 - n^{-3}.
		\]
	\end{itemize}
	
	Now, we use our final reduction to distinguish the above two cases. Note that $\sum_{i=1}^{m} \left[\langle \phi_i(x), \phi_i(y) \rangle = 0 \right] > \frac{1}{2} \cdot m$ is equivalent to the condition that there is a subset $S \subseteq [m]$ with $|S| > \frac{1}{2} \cdot m$ such that $\langle \phi_i(x), \phi_i(y) \rangle = 0$ for all $i \in S$.
	
	We enumerate all possible such subsets $S$. For a vector $z \in \{0,1\}^{d}$, we define $\phi_S(z)$ to be the concatenation of $\phi_i(z)$'s for all $i \in S$. We set $A_S$ as the set of all $\phi_S(x)$'s for $x \in A$, and $B_S$ as the set of all $\phi_S(y)$'s for $y \in B$.
	
	Then we can see that $\sum_{i=1}^{m} \left[\langle \phi_i(x), \phi_i(y) \rangle = 0 \right] > \frac{1}{2} \cdot m$ is further equivalent to whether there is a subset $S$ with $|S| > \frac{1}{2} \cdot m$ and $(A_S,B_S)$ is a yes instance for $\OV$.
	
	\paragraph*{Summary.} Putting everything together, we have a randomized reduction to $T = 2^{O(\eps \log n)} = n^{O(\eps)}$ $OV_{n, (d/\tau)^{O(1/\eps)} \log n}$ instances with set-pairs $(A_1,B_1),(A_2,B_2),\dotsc,(A_T,B_T)$ such that, with probability at least $1 - 1/n$:
	
	\begin{itemize}
		\item If $\MIN(A,B) \le \tau$, then one of the $(A_i,B_i)$ is a yes instance for $\OV$.
		\item If $\MIN(A,B) \ge 2\tau$, all $(A_i,B_i)$'s are no instance for $\OV$.
	\end{itemize}
	
	The above completes the proof.
\end{proofof}

\subsection{$\ExactIP$, $\MaxIP$ and $\MinIP$} 

Now we proceed to show moderate dimensional $\ExactIP$, $\MaxIP$ and $\MinIP$ are equivalent.

\begin{reminder}{Theorem~\ref{theo:equiv-MaxIP-MinIP-ExactIP-MD}}
	For moderate dimensional $\MaxIP$, $\MinIP$ and $\ExactIP$, either all of them are in truly subquadratic time, or none of them are.
\end{reminder}

To prove the above theorem, we need the following two simple reductions, whose proofs can be found in the appendix.

\begin{lemma}\label{lm:MinIP-to-MaxIP}
	There are functions $\psirevx,\psirevy : \{0,1\}^* \to \{0,1\}^*$ such that for all integer $d$ and $x,y \in \{0,1\}^d$, 
	we have $\psirevx(x),\psirevy(y) \in \{0,1\}^{2d}$ and $\langle \psirevx(x), \psirevy(y) \rangle = d - \langle x, y \rangle$.
\end{lemma}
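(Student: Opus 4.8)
\textbf{Proof proposal for Lemma~\ref{lm:MinIP-to-MaxIP}.} The plan is to exploit the identity
\[
d - \langle x, y \rangle = \sum_{i=1}^{d} (1 - x_i y_i),
\]
and to realize each summand $1 - x_i y_i$ as the contribution of a size-$2$ coordinate block to an inner product of two Boolean vectors. So I would build $\psirevx$ and $\psirevy$ block by block: for each $i \in [d]$, the $i$-th block of $\psirevx(x)$ is the pair $(x_i,\, 1 - x_i)$, and the $i$-th block of $\psirevy(y)$ is the pair $(1 - y_i,\, 1)$. Concatenating the $d$ blocks gives $\psirevx(x), \psirevy(y) \in \{0,1\}^{2d}$, as required, and both maps are clearly coordinate-wise computable.

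The key step is to check that the $i$-th block contributes exactly $1 - x_i y_i$ to $\langle \psirevx(x), \psirevy(y)\rangle$. That contribution is $x_i\cdot(1 - y_i) + (1 - x_i)\cdot 1 = x_i - x_i y_i + 1 - x_i = 1 - x_i y_i$, which one can also verify by a three-line case analysis on $(x_i,y_i) \in \{0,1\}^2$: it equals $1$ unless $x_i = y_i = 1$, in which case it is $0$, matching $1 - x_i y_i$. Summing over $i \in [d]$ and using the identity above yields $\langle \psirevx(x), \psirevy(y)\rangle = \sum_{i=1}^{d}(1 - x_i y_i) = d - \langle x, y\rangle$, completing the argument.

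There is essentially no obstacle here; this is a folklore ``complementation'' gadget, and the only thing to be careful about is the asymmetry between the two sides (the $x$-side stores $(x_i, \bar x_i)$ while the $y$-side stores $(\bar y_i, 1)$), which is exactly what makes the linear term $x_i - x_i = 0$ cancel. The same template also immediately gives the reverse direction (writing $\langle x, y\rangle = d - (d - \langle x,y\rangle)$), which is what Theorem~\ref{theo:equiv-MaxIP-MinIP-ExactIP-MD} will use to interchange $\MaxIP$ and $\MinIP$; for $\ExactIP$ one simply shifts the target $m \mapsto d - m$.
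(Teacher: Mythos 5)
Your proposal is correct and is essentially the paper's proof: the paper's gadget maps $x_i \mapsto (\bar{x}_i, x_i)$ and $y_i \mapsto (1, \bar{y}_i)$, which is your construction with the two coordinates of each block swapped, and the verification $a\cdot(1-b) + (1-a)\cdot 1 = 1 - ab$ is the same calculation.
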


\begin{lemma}\label{lm:ExactIP-to-MinIP}
	For all integers $d$ and $0 \le m \le d$, there are mappings $\varphi^x_{d,m},\varphi^y_{d,m} : \{0,1\}^{d} \to \{0,1\}^{O(d^2)}$ and an integer $M_{d}$, such that for all $x,y \in \{0,1\}^d$:
	
	\begin{itemize}
		\item If $\langle x, y \rangle = m$, then $\langle \varphi^x_{d,m}(x), \varphi^y_{d,m}(y) \rangle = M_{d}$.
		\item Otherwise, $\langle \varphi^x_{d,m}(x), \varphi^y_{d,m}(y) \rangle > M_{d}$.
	\end{itemize}
	
\end{lemma}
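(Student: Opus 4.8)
The plan is to give an explicit encoding gadget that reduces $\ExactIP$ to a single $\MinIP$ query by turning the scalar $k := \langle x,y\rangle \in \{0,1,\dots,d\}$ into the value of a quadratic polynomial $p(k)$ that attains an integer minimum exactly at $k=m$. The basic building blocks are the quantities that we can realize as inner products of short Boolean vectors depending on $x$ (resp.\ $y$) alone: $k$ itself (dimension $d$, using $x$ and $y$), $d-k$ (dimension $2d$, via the maps $\psirevx,\psirevy$ of Lemma~\ref{lm:MinIP-to-MaxIP}), $k^2 = \langle x\otimes x,\, y\otimes y\rangle$ (dimension $d^2$), and any fixed nonnegative integer constant $C$ (dimension $C$, using $C$ coordinates that equal $1$ on both sides). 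Concatenating such blocks with \emph{nonnegative integer} multiplicities lets us realize any polynomial of the form $a k^2 + b k + c(d-k) + e$ with $a,b,c,e\in\mathbb{Z}_{\ge 0}$ as $\langle \varphi^x_{d,m}(x),\varphi^y_{d,m}(y)\rangle$.

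Next I would choose the polynomial. We would like to use $(k-m)^2 = k^2 - 2mk + m^2$, which is minimized at $k=m$, but its linear term has a negative coefficient, so it is not directly of the allowed form. To flip the sign, substitute $k = d-(d-k)$ in that term: $-2mk = -2md + 2m(d-k)$, giving $(k-m)^2 = k^2 + 2m(d-k) + (m^2 - 2md)$. The leftover constant $m^2 - 2md = -m(2d-m)$ is $\le 0$ for $0 \le m \le d$, so instead we realize $(k-m)^2 + d^2 = k^2 + 2m(d-k) + \bigl(d^2 - m(2d-m)\bigr)$, where $d^2 - m(2d-m) \ge 0$ since $m(2d-m)$ is maximized over $m\in\{0,\dots,d\}$ at $m=d$ with value $d^2$; choosing this particular padding makes the target value $M_d := d^2$ independent of $m$. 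Concretely, define $\varphi^x_{d,m}(x)$ to be the concatenation of $x\otimes x$, then $2m$ copies of $\psirevx(x)$, then $d^2 - m(2d-m)$ coordinates equal to $1$; define $\varphi^y_{d,m}(y)$ the same way with $y\otimes y$, $2m$ copies of $\psirevy(y)$, and $d^2 - m(2d-m)$ ones.

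It then remains to verify the stated properties, which is routine. By construction and Lemma~\ref{lm:MinIP-to-MaxIP}, with $k = \langle x,y\rangle$ we get $\langle \varphi^x_{d,m}(x),\varphi^y_{d,m}(y)\rangle = k^2 + 2m(d-k) + d^2 - m(2d-m) = (k-m)^2 + d^2$, which equals $M_d = d^2$ exactly when $k=m$ and is at least $d^2+1 > M_d$ otherwise. The output dimension is $d^2 + 2m\cdot 2d + \bigl(d^2 - m(2d-m)\bigr) = 2d^2 + 2md + m^2 \le 5d^2 = O(d^2)$, all entries are Boolean, and both maps are computable in time polynomial in $d$. I expect no genuine obstacle here; the only point needing care is eliminating the negative linear coefficient while keeping every block multiplicity nonnegative and keeping $M_d$ free of $m$, which is exactly what the $k\mapsto d-(d-k)$ rewrite and the $d^2$ padding handle. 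Combined with Lemma~\ref{lm:MinIP-to-MaxIP} (which interchanges $\MinIP$ and $\MaxIP$), this lemma yields Theorem~\ref{theo:equiv-MaxIP-MinIP-ExactIP-MD}: an $\ExactIP$ instance on $d$-dimensional vectors is decided by one $\MinIP$ (hence also $\MaxIP$) query on the transformed $O(d^2)$-dimensional instance, namely by checking whether the minimum inner product equals $M_d$, and the $d\mapsto O(d^2)$ dimension blowup is harmless in the moderate-dimensional regime $d=n^\delta$.
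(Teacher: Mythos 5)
Your proposal is correct and follows essentially the same route as the paper: both encode $(\langle x,y\rangle - m)^2$ by realizing $\langle x,y\rangle^2$ via the tensor block $x\otimes x$, rewriting the negative linear term $-2m\langle x,y\rangle$ as $2m(d-\langle x,y\rangle)$ via the reversal maps of Lemma~\ref{lm:MinIP-to-MaxIP} repeated $2m$ times, and then padding with shared $1$-coordinates to make the target independent of $m$. The only difference is cosmetic — you pad to $M_d=d^2$ while the paper pads to $M_d=5d^2$ — and both stay within $O(d^2)$ dimensions.
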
 

\begin{proofof}{Theorem~\ref{theo:equiv-MaxIP-MinIP-ExactIP-MD}}
	By Lemma~\ref{lm:MinIP-to-MaxIP}, one can easily reduce a $\MaxIP_{n,d}$ instance to a $\MinIP_{n,2d}$ and vice versa. Therefore, moderate dimensional $\MaxIP$ and $\MinIP$ are truly-subquadratic equivalent. We only need to show that moderate dimensional $\MinIP$ and $\ExactIP$ are equivalent.
	
	Assuming moderate dimensional $\ExactIP$ is in truly subquadratic time, so there are two constants $\eps$ and $\delta$ such that $\ExactIP_{n,n^\delta}$ can be solved in $n^{2-\eps}$ time. Let $\delta' = \min(\eps,\delta) / 2$. Given a $\MinIP_{n,n^{\delta'}}$ instance, by enumerating all possible inner products between $0$ and $n^{\delta'}$, we can reduce the instance to $n^{\delta'}$ instances of $\ExactIP_{n,n^{\delta'}}$. Applying the algorithm for $\ExactIP$, we then have an $n^{2-\varepsilon + \delta'} \le n^{2-\delta'}$ time algorithm for $\MinIP_{n,n^{\delta'}}$. Hence, moderate dimensional $\MinIP$ is also in truly-subquadratic time.
	
	Finally, assume moderate dimensional $\MinIP$ is in truly subquadratic time. Note that by Lemma~\ref{lm:ExactIP-to-MinIP}, an $\ExactIP_{n,d}$ instance can be reduced to a $\MinIP_{n,O(d^2)}$ instance, which immediately implies that moderate dimensional $\ExactIP$ is also in truly subquadratic time.
\end{proofof}
	\section{Tighter Connection Between $\MaxIP$, $\BCPp$ and \FPp}

In this section we establish the tighter connections between $\MaxIP$, $\BCPp$ and $\FPp$. 

In Section~\ref{sec:additive-Max-IP-Exact-IP}, we show tighter connections for $\MaxIP$, $\ExactIP$ and additive approximation to $\MaxIP$. And in Section~\ref{sec:additive-Max-IP-other-geometry}, we show similar connections for additive approximation to $\MaxIP$, $\BCPp$ and $\FPp$.

\subsection{Tighter Connection between $\ExactIP$, $\MaxIP$ and Additive Approximation to $\MaxIP$}\label{sec:additive-Max-IP-Exact-IP}

The following lemma is implicit in~\cite{Rub18BetterMA}, which is used to show $\BCPp[p]$ can not be approximated in truly-subquadratic time under $\SETH$. \cite{Rub18BetterMA} only states a reduction from $\OV$. However, the $\MA$ protocol in~\cite{Rub18BetterMA} works equally well for the Inner Product problem, so it actually gives a reduction from $\ExactIP$.

\begin{lemma}[Implicit in Theorem~4.1 of~\cite{Rub18BetterMA}]\label{lm:ExactIP-to-additive-MaxIP-Rub18}
	For all sufficiently large integers $n,c$ and a parameter $\eps > 0$, an $\ExactIP_{n,c\log n}$ instance can be reduced to $n^{O(\eps \log (c/\eps))}$ instances of computing $\Omega(1/\exp\{\WO(c/\epsilon)\}) \cdot d$ additive approximation to $\MaxIP_{n,d}$ for $d = n^{o(1)}$.
\end{lemma}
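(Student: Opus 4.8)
The plan is to follow Rubinstein's strategy from \cite{Rub18BetterMA}: build an efficient $\MA^{\cc}$ protocol for the Inner Product predicate and then feed it into a distributed-PCP-style reduction, i.e.\ the $\MA^{\cc}$ analogue of the transformation in the proof of Theorem~\ref{theo:Sigma2-to-OV}. First observe that an $\ExactIP_{n,c\log n}$ instance with target $m$ is exactly an $\IP_{c\log n,m}\SATPAIR_{n}$ instance, so it suffices to exhibit an $\MA^{\cc}$ protocol for $\IP_{N,m}$ with $N=c\log n$. The reason Rubinstein's argument, although literally stated for $\OV$ (Set-Disjointness), ``works equally well'' here is that his protocol is an additive sum-check in which the only input-dependent comparison is with the value $0$; replacing that value by $m$ yields a protocol for $\IP_{N,m}$ with identical proof length, randomness, and communication.

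Concretely, the protocol I would use partitions the $N$ coordinates of $x,y$ into $t:=\eps\log n$ blocks of size $b:=N/t=c/\eps$, so that $\langle x,y\rangle=\sum_{i=1}^{t}\langle x^{(i)},y^{(i)}\rangle$. Merlin sends the claimed per-block values $\psi\in\{0,1,\dots,b\}^{t}$, costing $m_1=t\lceil\log(b+1)\rceil=O(\eps\log(c/\eps)\log n)$ bits; Alice rejects unless $\sum_i\psi_i=m$ (this, and only this, uses $m$); otherwise a uniformly random block is checked for consistency using Rubinstein's algebraic-geometry-code construction over a constant-size field $\mathbb{F}_q$, which reduces the residual check to evaluating an inner product of two length-$O(b)$ vectors over $\mathbb{F}_{q^2}$, and that inner product is simulated \emph{exactly} by a ``brute-force'' truth-table gadget of size $q^{O(b)}=\exp(\WO(c/\eps))$. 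The net effect is an $\MA^{\cc}$ protocol with perfect completeness in which, on no-instances, every Merlin proof is refuted on at least one verifier coin toss, with $r$ coin tosses and $\ell$ bits of Alice--Bob communication satisfying $r+\ell=\WO(c/\eps)$.

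Next I would apply the $\MA^{\cc}$-to-additive-$\MaxIP$ reduction exactly as in the proof of Theorem~\ref{theo:Sigma2-to-OV}: enumerate all $2^{m_1}=n^{O(\eps\log(c/\eps))}$ possible Merlin strings $a$; for each $a$, build for every $x\in A$ a Boolean ``Alice-transcript'' vector and for every $y\in B$ a Boolean ``Bob-acceptance'' vector, both of dimension $d=2^{r+\ell}=\exp(\WO(c/\eps))$, whose inner product equals the number of coin tosses on which the pair is accepted with proof $a$. On a yes-instance some pair attains, for some $a$, the maximum possible value $M\le d$; on a no-instance every pair is at most $M-1$ for every $a$. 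Since $d=n^{o(1)}$ (in the regime $\exp(\WO(c/\eps))=n^{o(1)}$), an additive-$\epsilon' d$ approximation to all these $\MaxIP_{n,d}$ instances with $\epsilon'=\Theta(1/d)=\Omega(1/\exp(\WO(c/\eps)))$ distinguishes the two cases and hence decides $\ExactIP$; the running time is $n^{1+o(1)}\cdot\exp(\WO(c/\eps))$ because the protocol is computationally efficient. As a sanity check, the same statement (up to the $\WO(\cdot)$) also follows by composing the reduction underlying Lemma~\ref{lm:ExactIP-to-OV}, which produces $n^{O(\eps\log(c/\eps))}$ instances of $\OV_{n,D}$ with $D=\exp(\WO(c/\eps))$, with the trivial reduction from $\OV_{n,D}$ to computing a $\tfrac{1}{2D}\cdot D$-additive approximation to $\MaxIP_{n,D}$.

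The only genuinely nontrivial ingredient is Rubinstein's AG-code-based $\MA^{\cc}$ protocol itself: obtaining proof length $O(\eps\log(c/\eps)\log n)$ together with a gadget of size $\exp(\WO(c/\eps))$ requires a family of curves over a \emph{fixed} $\mathbb{F}_q$ with many rational points and the associated pole-order bounds making the consistency check go through, and this is where the real work sits; if one were reconstructing it from scratch this would be the main obstacle. Granting that construction, the two things one actually has to verify for the present lemma are routine: that the protocol's accept predicate depends on the inputs only through the target-free sum-check (so the Set-Disjointness protocol carries over verbatim to $\IP_{N,m}$), and the bookkeeping of how $m_1$, the completeness--soundness gap, and $d$ scale with $c$ and $\eps$.
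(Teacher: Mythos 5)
The paper does not give a proof of this lemma; it is cited verbatim as implicit in Theorem~4.1 of~\cite{Rub18BetterMA}, and the paper's own technical contribution in this section is the \emph{improvement} (Lemma~\ref{lm:ExactIP-to-additive-MaxIP}) that replaces the brute-force gadget with a polynomial-size encoding. Your reconstruction does capture the intended argument: take Rubinstein's $\MA^{\cc}$ protocol for Inner Product with parameter $T=c/\eps$, enumerate the $n^{O(\eps\log(c/\eps))}$ Merlin strings, and for each one turn the remaining randomized Alice--Bob interaction into a Boolean $\MaxIP$ instance via a brute-force truth-table gadget, using the completeness/soundness gap ($2^{R}$ vs.\ $\le 2^{R}/2$ accepting coin tosses) to make an additive approximation suffice. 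That is exactly the route the paper is gesturing at, and your final bookkeeping of the approximation ratio $\Omega(1/\exp(\WO(c/\eps)))$ is correct.

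That said, several details of your description of the protocol are inaccurate. The scheme you spell out---partition the raw $N$ coordinates into $t=\eps\log n$ blocks of size $b=c/\eps$, Merlin sends integer per-block sums, Alice checks $\sum_i\psi_i=m$, a random block is verified---is really the paper's $\Sigma_2^{\cc}$ protocol (Proposition~\ref{prop:Sigma2cc-for-IP}) with Megan's existential index replaced by a random one, not the actual protocol of Lemma~\ref{lm:MA-protocol-Rub18}, where Merlin's proof is a vector $z\in\mathbb{F}_{q^2}^{2^R}$ of claimed inner products of AG-encoded blocks. You also say the field $\mathbb{F}_q$ is of constant size, but in Rubinstein's protocol $q$ is the first prime exceeding $T=c/\eps$, so $q=\Theta(c/\eps)$; this is precisely where the $\WO$ in $\exp(\WO(c/\eps))=q^{O(T)}=2^{O((c/\eps)\log(c/\eps))}$ comes from, whereas a constant $q$ would give only $\exp(O(c/\eps))$. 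Your formula $d=2^{r+\ell}=\exp(\WO(c/\eps))$ also drops the $2^{R}=\Theta(\eps\log n)$ factor coming from the number of coin tosses; the true dimension is $d=2^R\cdot\exp(\WO(c/\eps))$, but since the completeness--soundness gap is $2^{R}/2$ the $\log n$ factor cancels in the ratio, so your conclusion survives. Finally, the ``sanity check'' is off for the same reason: Lemma~\ref{lm:ExactIP-to-OV} yields $\OV_{n,D}$ with $D=O(2^{c/\eps}\log n)$, not $D=\exp(\WO(c/\eps))$, so the trivial reduction gives an additive ratio $\Theta(1/D)$ that can be strictly smaller than $\Omega(1/\exp(\WO(c/\eps)))$ once $\log n$ dominates; this alternate path does not actually reprove the lemma as stated.
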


In order to prove our tighter connection, our goal here is to improve the additive approximation ratio from $\Omega(1/\exp\{\WO(c/\epsilon)\} )$ to $\Omega(1/\poly(c/\epsilon))$.

\subsubsection{A New $\MA$ Protocol for Inner Product}

\newcommand{\Piorig}{\Pi_{\sf orig}}
\newcommand{\Pinew}{\Pi_{\sf new}}

For that purpose, we need to modify the $\MA$ protocol from~\cite{Rub18BetterMA}. In the following, we first describe the $\MA$ protocol for Inner Product in~\cite{Rub18BetterMA} based on AG codes. Below we only summarize the relevant properties we need; readers can refer to~\cite{Rub18BetterMA} for the details of the protocol.

\newcommand{\veca}{\vec{a}}
\newcommand{\vecb}{\vec{b}}

\newcommand{\fin}{\mathbb{F}}
\newcommand{\fqtwo}{\fin_{q^2}}

\begin{lemma}[Theorem 3.1~\cite{Rub18BetterMA}]\label{lm:MA-protocol-Rub18}
	For every $T \in [2,N]$, there is a computationally-efficient $\MA$ protocol for Inner Product such that
	
	\begin{enumerate}
		\item Alice and Bob hold input $x,y \in \{0,1\}^{N}$ respectively, and want to decide whether $\langle x,y \rangle = m$ for a target integer $m$.
		
		\item Set $q$ to be the first prime larger than $T$ and a universal constant $c_1$, and set $R = \log(N/T) + O(1)$.
				
		\item Merlin sends Alice a vector $z \in \fin_{q^2}^{2^R}$, Alice rejects $z$ immediately if it doesn't satisfy some conditions.
		
		\item Alice and Bob then toss $R$ coins to get $r \in [2^R]$. Based on $x$ (or $y$) and $r$, Alice and Bob generate two vectors in $\mathbb{F}_{q^2}^{T}$, $\veca(x,r)$ and $\vecb(y,r)$ respectively, 
		
		\item Bob sends Alice $\vecb(y,r)$, and Alice calculates $u(x,y,r) =  \langle \veca(x,r), \vecb(y,r) \rangle$. Alice accepts if and only if $u(x,y,r) = z_r$.
	\end{enumerate}

	The protocol satisfies the following conditions:
	
	\begin{itemize}
		\item If $\langle x,y \rangle = m$, then there is a proof (the vector $z$) from Merlin such that Alice always accepts.
		
		\item If $\langle x,y \rangle \ne m$, then for all proofs from Merlin, Alice accepts with probability at most $1/2$.
	\end{itemize}
\end{lemma}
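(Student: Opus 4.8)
The plan is to realize the classical algebraic $\MA$ protocol for Inner Product (in the spirit of~\cite{AW09-algebrization}) but instantiated with Algebraic Geometry (AG) codes from the Garcia--Stichtenoth tower in place of Reed--Solomon codes; Reed--Solomon would force the field to have size at least the code length and would only cover $T\gtrsim N^{1/3}$, whereas AG codes over a quadratic field $\fqtwo$ admit codewords of length far exceeding $q^2$, which is exactly what makes the whole range $T\in[2,N]$ accessible. First I would arithmetize: partition $[N]$ into $T$ blocks of size $N/T$, writing $x=(x^{(1)},\dots,x^{(T)})$ and $y=(y^{(1)},\dots,y^{(T)})$ with $x^{(t)},y^{(t)}\in\{0,1\}^{N/T}$, so that $\langle x,y\rangle=\sum_{j=1}^{N/T}P_j$ where $P_j:=\sum_{t=1}^{T}x^{(t)}_j y^{(t)}_j\in\{0,1,\dots,T\}$. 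Because $q>T$, each $P_j$ has an unambiguous representative in $\fqtwo$, so the protocol's task reduces to certifying $\sum_j P_j=m$ over the integers.

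For the algebra, fix a curve from the tower over $\fqtwo$ with $n:=2^R=\Theta(N/T)$ rational points $P_1,\dots,P_n$ and genus $g$, and a divisor $D$ of degree $a=\Theta(n)$ chosen so that (i) $2a\le n/2$ and (ii) $k:=\dim L(D)=a-g+1\ge N/T$; both hold simultaneously because along the tower $g/n\to 1/(q-1)$, a constant made small enough by taking the universal constant $c_1$ (hence $q$) large enough. Encode each block $x^{(t)}$, zero-padded to length $k$, as a codeword of the evaluation code $\mathcal C=\mathrm{ev}_{P_1,\dots,P_n}(L(D))$, arranged to be systematic on its first $k$ coordinates, and set $\veca(x,r)\in\fqtwo^{T}$ to be the tuple of $r$-th coordinates of the $T$ resulting codewords; Bob builds $\vecb(y,r)$ the same way. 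The key point is that the coordinate-wise product of two codewords of $\mathcal C$ lies in the \emph{squared} code $\widetilde{\mathcal C}:=\mathrm{ev}_{P_1,\dots,P_n}(L(2D))$, so $u(x,y,r)=\langle\veca(x,r),\vecb(y,r)\rangle=\Psi^{\star}(P_r)$ for a single function $\Psi^{\star}\in L(2D)$, and by systematicity $u(x,y,j)=P_j$ for $j\le N/T$. Accordingly, Alice's conditions on a received $z$ are: (a) $z\in\widetilde{\mathcal C}$, a linear check; and (b) $\sum_{j=1}^{N/T}(\text{integer representative of }z_j)=m$.

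Completeness: if $\langle x,y\rangle=m$, honest Merlin sends $z=(u(x,y,r))_{r}$, which passes (a) and (b) and satisfies $z_r=u(x,y,r)$ for every $r$, so Alice always accepts. Soundness: if $\langle x,y\rangle\neq m$, any $z$ passing (a) is $\mathrm{ev}(\Psi)$ for some $\Psi\in L(2D)$, and passing (b) forces $\Psi\neq\Psi^{\star}$ (their message sums are $m$ and $\langle x,y\rangle$ respectively); then $\Psi-\Psi^{\star}$ is a nonzero function in $L(2D)$, hence vanishes on at most $\deg(2D)=2a\le n/2$ of the points, so $z$ agrees with $(u(x,y,r))_r$ on at most half the coordinates, and a (nearly) uniform $r$ sampled with $R=\log(N/T)+O(1)$ coins rejects with probability at least $1/2$. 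Everything is computationally efficient: encoding a fixed AG code, reading off one coordinate, and testing membership in a fixed linear code are all polynomial time, and Merlin's message has $2^R$ field elements while Alice and Bob exchange only $\vecb(y,r)\in\fqtwo^T$ plus the coin flips, matching the stated parameters.

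The main obstacle is the parameter balancing hidden in ``choose $D$ with $2a\le n/2$ and $a-g+1\ge N/T$ for a code of length exactly $n=\Theta(N/T)$ and constant rate over $\fqtwo$'': this must beat the Reed--Solomon barrier (alphabet $\ge$ length), which is precisely why AG codes over a quadratic field are needed, and obtaining such codes for \emph{every} value of $N/T$ from the tower --- whose curves have point counts that jump by a factor of $q$ --- needs a standard but delicate length-adjustment (e.g.\ concatenation with a short inner code, or careful puncturing), which is carried out in~\cite{Rub18BetterMA}. A secondary point that also needs care is arranging the encoding to be systematic so that step (b) is a legitimate and efficiently checkable condition, and observing that the hypothesis $q>T$ is exactly what prevents wrap-around when the $P_j$ are formed in the arithmetization step.
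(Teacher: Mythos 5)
The paper does not actually prove this lemma: it is imported verbatim as a black box from Rubinstein~\cite{Rub18BetterMA} (Theorem~3.1), with the text explicitly deferring to that paper for details. So there is no in-paper argument to compare against; what you have written is a reconstruction of Rubinstein's proof, and as such it is faithful to the right approach. The essential ingredients you identify --- the arithmetization that splits $\langle x,y\rangle = \sum_{j=1}^{N/T} P_j$ with $P_j \in \{0,\dots,T\}$ and $q>T$ to avoid wrap-around; replacing Reed--Solomon by an evaluation code on a Garcia--Stichtenoth curve over $\fqtwo$ so that the block length $n = \Theta(N/T)$ can vastly exceed $q^2$; the observation that the coordinate-wise inner product of $T$ pairs of codewords of $L(D)$ lands in $L(2D)$, giving a single function $\Psi^\star$ with $u(x,y,r)=\Psi^\star(P_r)$; Alice's check that Merlin's $z$ lies in the squared code and that its systematic coordinates sum to $m$; and soundness from the fact that a nonzero element of $L(2D)$ has at most $2\deg D \le n/2$ zeros --- are exactly the ingredients of Rubinstein's protocol. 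Your parameter bookkeeping ($a-g+1 \ge N/T$ and $2a \le n/2$ simultaneously, enabled by $g/n \to 1/(q-1)$ being small once $q$ exceeds a universal constant) is also correct in outline.

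The gaps you flag at the end are real and are indeed the places where Rubinstein has to work: (i) making the encoding systematic so that the coordinate check ``sum of the first $N/T$ integer representatives equals $m$'' is both well-defined and efficiently verifiable, (ii) dealing with the fact that the tower produces curves whose point counts jump multiplicatively, so that getting block length $n$ close to $2^R$ for every $N/T$ requires puncturing/padding, and (iii) the (nearly-)uniform sampling of $r$ over $[n]$ using $R$ coins. None of these invalidate the argument, but each would need to be made precise before this could stand as a full proof. One small imprecision worth fixing: you say ``passing (b) forces $\Psi\neq\Psi^\star$'' because the message sums differ --- strictly, what you know is that either $z\neq\mathrm{ev}(\Psi^\star)$ (in which case the distance argument kicks in) or $z=\mathrm{ev}(\Psi^\star)$ but then the sum over the systematic part is $\langle x,y\rangle \neq m$ and Alice rejects in step (b) already; stating it this way avoids any appeal to injectivity of the encoding on arbitrary $L(2D)$ elements.
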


\newcommand{\Pirred}{P_{\textsf{irred}}}

\paragraph*{Our Modified Protocol.} We make some minor modifications to the above protocol. First, note that an element from $\fqtwo$ can be treated as an element in $\fin_q[x] / (\Pirred(x))$, where $\Pirred(x) \in \fin_q[x]$ is an irreducible polynomial of degree $2$. In this way, we can interpret all elements in $\veca(x,r)$ and $\vecb(y,r)$ as degree $1$ polynomials in $\fin_q[x]$, which can in turn be interpreted as degree $1$ polynomials in $\mathbb{Z}[x]$. We denote these vectors of polynomials by $\vec{U}(x,r),\vec{V}(y,r) \in \mathbb{Z}[x]^{T}$, with coefficients from $\{0,1,\dotsc,q - 1\}$.

Next, we set $W(x,y,r) = \langle \vec{U}(x,r),\vec{V}(y,r) \rangle$, which is a degree $2$ polynomial in $\mathbb{Z}[x]$. Note that the coefficients of $W(x,y,r)$ are between $0$ and $O(q^2 \cdot T) = O(T^3)$.

Now, in the message from Merlin, for all possible $r \in [2^R]$, we also add a claimed description of $W(x,y,r)$. This takes $O\left(\frac{N \log T}{T}\right)$ bits, so it doesn't affect the message complexity from Merlin. Then, after Alice receives $\vecb(y,r)$ from Bob (from which she can obtain $\vec{V}(y,r)$), Alice computes $W(x,y,r)$ instead of $u(x,y,r)$, and rejects immediately if this $W(x,y,r)$ does not match the one given by Merlin. After that, she knows that $u(x,y,r) = W(x,y,r) / (\Pirred(x))$, and proceeds as in the original protocol.

It is easy to see that, when $\langle x,y \rangle = m$, if Merlin provides the correct $W(x,y,r)$'s, then Alice still always accepts (regardless of $r$). And when $\langle x,y \rangle \ne m$, since these $W(x,y,r)$'s only provide additional checks, Alice still accepts with probability at most $1/2$ for all proofs.

We use $\Piorig$ to denote the protocol from~\cite{Rub18BetterMA} (Lemma~\ref{lm:MA-protocol-Rub18}), and $\Pinew$ to denote our new protocol. In the following we utilize $\Pinew$ to give an improved reduction from $\ExactIP$ to additive approximation to $\MaxIP$.

Before that, we need the following encoding trick, whose proof can be found in the appendix.

\begin{lemma}\label{lm:encoding-trick}
	For all integers $d,r$ and $0 \le m \le dr^2$, there are mappings $\varphi^x,\varphi^y : \{0,1,\dotsc,r\}^{d} \to \{0,1\}^{O(d r^2 )^2}$ and an integer $0 \le M \le O(dr^2)^2$, such that for all $x,y \in \{0,1,\dotsc,r\}^{d}$:
	
	\begin{itemize}
		\item If $\langle x,y \rangle = m$, then $\langle \varphi^x(x),\varphi^y(y) \rangle = M$.
		\item Otherwise, $\langle \varphi^x(x),\varphi^y(y) \rangle < M$.
		\item Moreover, $M$ only depends on $d$ and $r$.
	\end{itemize}
	
\end{lemma}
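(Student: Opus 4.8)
The plan is to decompose the desired encoding into two independent transformations: first a \emph{multiplication gadget} that converts $\{0,1,\dotsc,r\}$-valued coordinates into Boolean coordinates while \emph{exactly} preserving the integer inner product, and then the Boolean ``exact inner product'' encoding already supplied by Lemma~\ref{lm:ExactIP-to-MinIP}, followed by a cosmetic flip of the inequality direction.

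\textbf{Step 1 (multiplication gadget).} Index a Boolean vector of length $D := dr^2$ by triples $(i,j,k) \in [d]\times[r]\times[r]$, and define $\rho^x, \rho^y : \{0,1,\dotsc,r\}^d \to \{0,1\}^{D}$ by $\rho^x(x)_{(i,j,k)} := [\,j \le x_i\,]$ and $\rho^y(y)_{(i,j,k)} := [\,k \le y_i\,]$ (so the $(i,\cdot,\cdot)$-block of $\rho^x(x)$ is $r$ copies of the unary vector $1^{x_i}0^{r-x_i}$, and symmetrically for $\rho^y$). Then
\[
  \langle \rho^x(x),\rho^y(y)\rangle = \sum_{i\in[d]} \Bigl(\sum_{j=1}^{r}[\,j\le x_i\,]\Bigr)\Bigl(\sum_{k=1}^{r}[\,k\le y_i\,]\Bigr) = \sum_{i\in[d]} x_i y_i = \langle x,y\rangle .
\]
Writing $u := \rho^x(x),\, v := \rho^y(y) \in \{0,1\}^D$, we have $0\le \langle u,v\rangle \le D$ and $0\le m\le D$, so it remains to encode the Boolean predicate ``$\langle u,v\rangle = m$''.

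\textbf{Step 2 (Boolean exact-IP encoding, flipped).} Apply Lemma~\ref{lm:ExactIP-to-MinIP} with dimension $D$ and target $m$ to get $\varphi^x_{D,m},\varphi^y_{D,m}:\{0,1\}^D\to\{0,1\}^{E}$ with $E=O(D^2)$ and an integer $M_D$ (depending only on $D$) such that $\langle\varphi^x_{D,m}(u),\varphi^y_{D,m}(v)\rangle$ equals $M_D$ when $\langle u,v\rangle=m$ and exceeds $M_D$ otherwise. I would first rigidify the Hamming weights: with $w:=\|\varphi^x_{D,m}(u)\|_1$ and $w':=\|\varphi^y_{D,m}(v)\|_1$, set $\widehat{\varphi}^x_{D,m}(u):=\varphi^x_{D,m}(u)\circ 1^{E-w}0^{w}\circ 0^{E}$ and $\widehat{\varphi}^y_{D,m}(v):=\varphi^y_{D,m}(v)\circ 0^{E}\circ 1^{E-w'}0^{w'}$ in $\{0,1\}^{3E}$; the two appended pairs of blocks are mutually orthogonal, so the inner product is unchanged, while now $\|\widehat{\varphi}^x_{D,m}(u)\|_1=\|\widehat{\varphi}^y_{D,m}(v)\|_1=E$. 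Finally set $\varphi^x(x):=\overline{\widehat{\varphi}^x_{D,m}(\rho^x(x))}$ (bitwise complement) and $\varphi^y(y):=\widehat{\varphi}^y_{D,m}(\rho^y(y))$, both in $\{0,1\}^{3E}$. Since $\|\varphi^y(y)\|_1=E$,
\[
  \langle \varphi^x(x),\varphi^y(y)\rangle = E - \langle \widehat{\varphi}^x_{D,m}(u),\widehat{\varphi}^y_{D,m}(v)\rangle = E - \langle \varphi^x_{D,m}(u),\varphi^y_{D,m}(v)\rangle ,
\]
which equals $M:=E-M_D$ exactly when $\langle x,y\rangle=\langle u,v\rangle=m$ and is strictly smaller otherwise. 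The image lies in $\{0,1\}^{3E}=\{0,1\}^{O(D^2)}=\{0,1\}^{O(dr^2)^2}$; since $M_D$ is an inner product of two length-$E$ Boolean vectors and the construction has zero coordinates (so $M_D<E$), we get $0\le M=E-M_D\le O(dr^2)^2$; and $M$ depends only on $E$ and $M_D$, each a function of $D=dr^2$ alone, hence only on $d$ and $r$, as required. All maps are polynomial-time computable.

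\textbf{Main obstacle.} The construction above is essentially the whole proof; the only delicate point is the inequality flip in Step 2: one must complement \emph{exactly one} of the two sides, which yields the intended ``$=M$ versus $<M$'' dichotomy only after the weight-rigidification step (complementing both sides merely shifts the value by a constant and preserves the direction of the inequality), and one must check that the additive shift $E$ keeps $M$ nonnegative and within $O(dr^2)^2$. The remaining work — verifying the multiplication-gadget identity and tracking the dimensions — is routine.
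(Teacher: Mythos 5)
Your proof is correct and follows essentially the same route as the paper. Step~1 is identical to the paper's: your $\rho^x,\rho^y$ are precisely the paper's unary embeddings $\psi_x,\psi_y$ (indexed by $(i,j,k)\in[d]\times[r]\times[r]$ rather than by concatenated $(i,j)\in[r]\times[r]$ blocks, but the same construction). Step~2 is the paper's composition of Lemma~\ref{lm:ExactIP-to-MinIP} with Lemma~\ref{lm:MinIP-to-MaxIP}; the only difference is that you hand-roll the inequality flip (weight-rigidify, then complement one side, a $3\times$ blowup) instead of citing Lemma~\ref{lm:MinIP-to-MaxIP}, which does the flip $\langle u,v\rangle\mapsto E-\langle u,v\rangle$ coordinate-wise with a $2\times$ blowup and no need to control Hamming weights. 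Both are correct; the cited lemma is just cleaner. One small point you should tighten: you assert that $E$ ``is a function of $D=dr^2$ alone,'' but in the construction of Lemma~\ref{lm:ExactIP-to-MinIP} the output dimension is $6D^2+2Dm+m^2$, which \emph{does} depend on the target $m$, so as written your $M=E-M_D$ would too, violating the lemma's final bullet. The fix is to first zero-pad $\varphi^x_{D,m},\varphi^y_{D,m}$ to a fixed length (say $9D^2$, which dominates $6D^2+2Dm+m^2$ for all $0\le m\le D$) before the flip; zero-padding does not change inner products, and afterward $M=9D^2-M_D$ depends only on $D$. The paper's own one-line ``applying Lemma~\ref{lm:ExactIP-to-MinIP} and Lemma~\ref{lm:MinIP-to-MaxIP} completes the proof'' leaves this equally implicit, so this is a shared gloss rather than a flaw particular to your argument.
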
 

\begin{lemma}\label{lm:ExactIP-to-additive-MaxIP}
	For all sufficiently large integers $n,c$ and a parameter $\eps > 0$, every $\ExactIP_{n,c\log n}$ instance can be reduced to $n^{O(\eps \log (c/\eps))}$ instances of computing an $\Omega((\eps/c)^6) \cdot d$ additive approximation to $\MaxIP_{n,d}$ for $d = n^{o(1)}$.
\end{lemma}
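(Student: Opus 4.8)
The plan is to follow the reduction template of Lemma~\ref{lm:ExactIP-to-additive-MaxIP-Rub18}, but to plug in the modified protocol $\Pinew$ in place of $\Piorig$, so that the ``gadget'' simulating one round of verification contributes only a $\poly(c/\eps)$-sized overhead rather than an $\exp(\WO(c/\eps))$-sized one. First I would set the parameter $T = (c/\eps)^{\Theta(1)}$ in Lemma~\ref{lm:MA-protocol-Rub18}, applied with $N = c\log n$, so that Merlin's message has $O\!\left(\frac{N\log T}{T}\right) = O(\eps\log(c/\eps)\cdot\log n)$ bits, Alice and Bob share $R = \log(N/T)+O(1) = O(\log(c/\eps))$ public random bits, and Bob sends Alice $\vec b(y,r)\in\fqtwo^{T}$, i.e.\ $O(T\log q) = \poly(c/\eps)\cdot O(1)$ bits per round. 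Enumerating Merlin's proof $z$ (together with the claimed descriptions of the $W(x,y,r)$'s added in $\Pinew$) yields $2^{O(\eps\log(c/\eps)\log n)} = n^{O(\eps\log(c/\eps))}$ subinstances, one per candidate proof; this matches the instance count in the statement.

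The core of the argument is the reduction of a single fixed-proof subinstance to additive-$\MaxIP$. Fix $z$ and the claimed $W$-descriptions. For each public-coin outcome $r\in[2^R]$, Alice's acceptance condition in $\Pinew$ is: the integer polynomial $W(x,y,r) = \langle \vec U(x,r), \vec V(y,r)\rangle \in \mathbb{Z}[x]$ (degree $2$, coefficients in $[0, O(T^3)]$) equals the value $W^z_r$ promised by Merlin. The key structural point is that $W(x,y,r)$ is itself an \emph{inner product} of two short integer vectors: writing out $\vec U(x,r),\vec V(y,r)\in\{0,\dots,q-1\}^T$ as degree-$1$ polynomials and expanding the product, each of the three coefficients of $W(x,y,r)$ is a bilinear form $\langle \alpha(x,r), \beta(y,r)\rangle$ with entries in $\{0,\dots,q-1\}$ and $O(T)$ coordinates. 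So ``for all $r$, $W(x,y,r) = W^z_r$'' is a conjunction of $2^R\cdot 3 = O(c/\eps)$ exact-inner-product conditions, each on vectors of length $O(T) = \poly(c/\eps)$ with entries in $\{0,\dots,r'\}$ where $r' = q = O(T) = \poly(c/\eps)$. I would concatenate Alice's vectors $\alpha(x,r)$ (over all $r$ and all three coefficient slots) into one vector $\hat x\in\{0,\dots,q-1\}^{D}$ with $D = O((c/\eps)\cdot T) = \poly(c/\eps)$, and similarly for Bob; then apply Lemma~\ref{lm:encoding-trick} with $d = D$, $r = q$ to obtain Boolean vectors $\varphi^x(\hat x),\varphi^y(\hat y)\in\{0,1\}^{\poly(c/\eps)}$ and an integer $M$ depending only on $D,q$, such that the inner product equals $M$ exactly when all the $W$-checks pass and is strictly less than $M$ otherwise. (One must be a little careful that Lemma~\ref{lm:encoding-trick} forces \emph{equality} of the full concatenated inner product to a fixed target $M$; since the targets $W^z_r$ vary with $z$ but not with $x,y$, and the ``strictly less than'' case is exactly ``some check fails,'' this is compatible with what Lemma~\ref{lm:encoding-trick} provides, possibly after padding each coordinate block so that the per-block target is uniform.) The gap between the yes case (value $=M$) and the no case (value $\le M-1$) over vectors of length $d' = \poly(c/\eps)$ translates into a multiplicative gap of $1 + 1/\poly(c/\eps)$, hence an additive gap of $\Omega(1/\poly(c/\eps))\cdot d'$; tracking the polynomial degrees ($T = \poly(c/\eps)$, three coefficients, $q = O(T)$, and the $O((dr^2)^2)$ blow-up in Lemma~\ref{lm:encoding-trick}) gives exponent $6$, i.e.\ an $\Omega((\eps/c)^6)\cdot d$ additive approximation to $\MaxIP_{n,d}$ with $d = \poly(c/\eps) = n^{o(1)}$ suffices to decide the subinstance.

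Finally I would assemble the pieces: $\ExactIP_{n,c\log n}$ is a yes instance iff there exist $x\in A, y\in B$ and a Merlin proof $z$ making $\Pinew$ always accept (completeness), and if it is a no instance then for every $z$ and every $(x,y)$ some round $r$ fails (soundness of $\Pinew$, inherited from Lemma~\ref{lm:MA-protocol-Rub18} since the added $W$-checks only restrict Alice's acceptance). Thus the original instance is a yes instance iff for some proof $z$ the corresponding additive-$\MaxIP$ subinstance has optimum $\ge M$, and we query all $n^{O(\eps\log(c/\eps))}$ subinstances. The main obstacle I anticipate is the bookkeeping in the middle step: verifying that the quantity $W(x,y,r)$ really does decompose as a fixed set of short integer inner products independent of $x,y$ in its \emph{index structure} (so that $\hat x$ depends only on $x$ and $\hat y$ only on $y$), and that Lemma~\ref{lm:encoding-trick}'s requirement of a single uniform target $M$ can be met by padding without disturbing the completeness/soundness gap — i.e.\ making the ``brute-force gadget for an $\mathbb{F}_{q^2}$ inner product'' genuinely polynomial-sized rather than exponential, which is precisely where \cite{Rub18BetterMA}'s reduction was wasteful.
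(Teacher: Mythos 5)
Your high-level strategy matches the paper: use $\Pinew$, observe that for each $r$ the condition $W(x,y,r)=W^z_r$ is a conjunction of three exact bilinear conditions over $\{0,\dots,q-1\}^T$, and convert these to Boolean inner-product gadgets via Lemma~\ref{lm:encoding-trick}. However, the \emph{order} in which you apply the encoding trick is where your argument breaks.

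You propose to first concatenate the raw integer vectors $\alpha(x,r)$ over all $r$ and all three coefficient slots into one long vector $\hat x$, and then apply Lemma~\ref{lm:encoding-trick} \emph{once} to $(\hat x,\hat y)$ with a single target. But Lemma~\ref{lm:encoding-trick} only certifies whether $\langle\hat x,\hat y\rangle$ equals that target, and $\langle\hat x,\hat y\rangle = \sum_r\langle\alpha(x,r),\beta(y,r)\rangle$ is a \emph{sum} over rounds. A pair $(x,y)$ with some rounds overshooting their targets and others undershooting can make the sum land exactly on $\sum_r W^z_r$ even though many individual rounds fail; in that case your gadget would (wrongly) report ``yes.'' Your parenthetical fix --- padding so the per-block target is uniform --- does not address this, because a uniform target does nothing to prevent cancellation across blocks. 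The step ``the strictly-less-than case is exactly `some check fails''' is the false claim: strictly-less-than corresponds to ``the sum misses,'' which is weaker than ``some block misses.'' The paper avoids this by applying Lemma~\ref{lm:encoding-trick} \emph{separately for each $r$}, producing per-$r$ Boolean blocks whose inner product is always $\le M$ and equals $M$ exactly when that round's check passes, and only then concatenating the encoded blocks. This one-sidedness is essential: the concatenated inner product is $2^R M$ in the completeness case, and in the soundness case at least half the blocks contribute $\le M-1$, so the total is $\le 2^R(M-\tfrac12)$, giving the $\Omega(1/q^6)$ relative gap on vectors of dimension $O(q^6\cdot 2^R)$. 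Relatedly, your soundness remark ``some round $r$ fails'' understates what you need: the MA protocol gives ``at least half of the rounds fail,'' and it is precisely this quantitative bound that makes the block-wise gadget work.

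A secondary slip: with $N=c\log n$ and $T=c/\eps$, you get $R=\log(N/T)+O(1)=\log(\eps\log n)+O(1)$, so $2^R=\Theta(\eps\log n)$, not $O(\poly(c/\eps))$ as you wrote. Consequently the final dimension is $d=O(q^6\cdot 2^R)=O((c/\eps)^6\cdot\eps\log n)$, not $\poly(c/\eps)$; both are $n^{o(1)}$ so the lemma's statement is unaffected, but the intermediate bookkeeping should be corrected.
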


\begin{proof}
	Consider an $\ExactIP_{n,c\log n}$ instance with sets $A$ and $B$, and integer $m$. Using our protocol $\Pinew$ for checking whether $\langle x,y \rangle = m$, we only need to figure out whether there is a pair $(x,y) \in A \times B$ and a proof from Merlin such that Alice always accepts.
	
	Let $N = c\log n$, and set $T = c/\eps$. Then the message complexity from Merlin is $O(\eps \log n \log(c/\eps))$ and the total number of random bits is $R = \log (N/T) + O(1) \le \log (\epsilon \log n) + O(1)$.
	
	We first enumerate all valid proofs $\psi$, which is a pair of $z \in \fin_{q^2}^{2^R}$ and $W \in \mathbb{Z}[x]^{2^R}$ such that for all $r \in [2^R]$, we have $z_r = W_{r} / \Pirred(x)$.
	
	Next, we want to determine whether there is a pair $(x,y) \in A \times B$, such that this proof $\psi$ makes Alice always accepts. Note we only need to distinguish the following two cases:
	
	\begin{itemize}
		\item For all $r \in [2^R]$, $\langle \vec{U}(x,r),\vec{V}(y,r) \rangle = W_{r}$.
		
		\item For at most half of $r \in [2^R]$, $\langle \vec{U}(x,r),\vec{V}(y,r) \rangle = W_{r}$.
	\end{itemize}

	Recall that $\vec{U}(x,r)$ and $\vec{V}(y,r)$ are vectors of $T$ degree $1$ polynomials from $\mathbb{Z}[x]$, with coefficients in $\{0,1, \dotsc,q-1\}$, and $W_{r}$ is a degree $2$ polynomial in $\mathbb{Z}[x]$, with coefficients in $\{0,1,\dotsc,O(q^{3})\}$. For a polynomial $P(x)$ in $\mathbb{Z}[x]$ and an integer $t$, let $[t]P(x)$ denote the coefficient of $x^t$ in $P(x)$. Then we can see $\langle \vec{U}(x,r),\vec{V}(y,r) \rangle = W_r$ is equivalent to the condition: for all $0 \le t \le 2$,
	\begin{equation}\label{eq:condition-W-r}
	\sum_{i=0}^{t} \sum_{k=1}^{T} [i] \vec{U}(x,r)_k \cdot [t-i] \vec{V}(y,r)_k = [t] W_r.
	\end{equation}
	
	Note that the left side of Equation~\eqref{eq:condition-W-r} is an inner product between two vectors from $\{0,1,\dotsc,q-1\}^{3 T}$. By Lemma~\ref{lm:encoding-trick}, we can construct three Boolean vectors $u_0,u_1,u_2 \in \{0,1\}^{O(q^6)}$ from $\vec{U}(x,r)$ and also $v_0,v_1,v_2 \in \{0,1\}^{O(q^6)}$ from $\vec{V}(y,r)$ and an integer $M$ (which only depends on $T$), such that:
	\begin{itemize}
		\item If Equation~\eqref{eq:condition-W-r} holds for all $t$, then $\sum_{i=0}^{2} \langle u_i, v_i \rangle = M$.
		\item Otherwise, $\sum_{i=0}^{2} \langle u_i, v_i \rangle < M$.
	\end{itemize}

	Now, we concatenate all these $u_0,u_1,u_2$ for all possibles $r$'s to form a single vector $u_x$, and construct $v_y$ similarly. We have:
	
	\begin{itemize}
		\item If for all $r \in [2^R]$, $\langle \vec{U}(x,r),\vec{V}(y,r) \rangle = W_{r}$, then $\langle u_x, v_y \rangle \ge 2^{R} \cdot M$.
		
		\item If for at most half of $r \in [2^R]$, $\langle \vec{U}(x,r),\vec{V}(y,r) \rangle = W_{r}$, then $\langle u_x, v_y \rangle \le 2^{R} \cdot (M - 1/2)$.
	\end{itemize}

	Now, let $A_\psi$ and $B_\psi$ be the collections of $u_x$ and $v_y$ with the proof $\psi$ respectively. Then we want to distinguish between the following two cases:
	
	\begin{itemize}
		\item There is a $\psi$ such that $\MAX(A_\psi,B_\psi) \ge 2^{R} \cdot M$.
		\item For all $\psi$, $\MAX(A_\psi,B_\psi) \le 2^{R} \cdot (M - 1/2)$.
	\end{itemize}

	Note that vectors in $A_\psi$ and $B_\psi$ are of dimension $ d = O(q^6 \cdot 2^{R})$, so the above can be solved by $2^{O(\eps \log n \log(c/\eps))} = n^{O(\eps \log (c/\eps))}$ calls to $\Omega(1/q^6) \cdot d = \Omega((\eps/c)^6) \cdot d$ additive approximation to $\MaxIP_{n,d}$, which completes the proof.
\end{proof}

Now we are ready to prove Theorem~\ref{theo:tighter-reduction-1}.

\begin{theo}\label{theo:tighter-reduction-1}
	The following are equivalent:
	\begin{enumerate}
		\item An $\eps \cdot d$ additive approximation to $\MaxIP_{n,d}$ is computable in $n^{2 - \eps^{o(1)}}$ time.
		\item $\MaxIP_{n,c\log n}$ is solvable in $n^{2 - 1/c^{o(1)}}$ time.
		\item $\ExactIP_{n,c\log n}$ is solvable in $n^{2 - 1/c^{o(1)}}$ time.
	\end{enumerate}
\end{theo}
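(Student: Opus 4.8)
The plan is to close the cycle of implications $(1)\Rightarrow(3)\Rightarrow(2)\Rightarrow(1)$; since these three statements are cyclically implied, they are all equivalent. The only substantive ingredient is Lemma~\ref{lm:ExactIP-to-additive-MaxIP}, which has already been established, so the theorem itself is a matter of choosing parameters and chasing the asymptotics.

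\textbf{$(1)\Rightarrow(3)$.} Fix $c$ and apply Lemma~\ref{lm:ExactIP-to-additive-MaxIP} with the choice $\eps := 1/c$: an $\ExactIP_{n,c\log n}$ instance reduces to $n^{O((\log c)/c)}$ instances of computing an $\Omega(c^{-12})\cdot d$ additive approximation to $\MaxIP_{n,d}$, with $d = n^{o(1)}$. Let $\mu$ be the function given by hypothesis~(1), so that an $\alpha\cdot d$ additive approximation to $\MaxIP_{n,d}$ is computable in time $n^{2-\mu(\alpha)}$ with $\mu(\alpha)=\alpha^{o(1)}$ as $\alpha\to 0$. Each of the $n^{O((\log c)/c)}$ instances is then solved in time $n^{2-\mu(\Omega(c^{-12}))}$, and since $\Omega(c^{-12})\to 0$ as $c\to\infty$ we get $\mu(\Omega(c^{-12}))=(\Omega(c^{-12}))^{o(1)}=1/c^{o(1)}$. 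Hence the total running time is $n^{O((\log c)/c)}\cdot n^{2-1/c^{o(1)}}$, and because $(\log c)/c = o(1/c^{o(1)})$ this equals $n^{2-1/c^{o(1)}}$, which is statement~(3).

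\textbf{$(3)\Rightarrow(2)$ and $(2)\Rightarrow(1)$.} For the first, note $\MAX(A,B)\in\{0,1,\dots,d\}$, so $\MaxIP_{n,d}$ is solved by calling $\ExactIP_{n,d}$ with each of the $d+1$ possible targets and returning the largest one answered ``yes''; for $d=c\log n$ this is $O(c\log n)$ calls to $\ExactIP_{n,c\log n}$ with no change to $n$ or to the dimension, so by~(3) the total time is $O(c\log n)\cdot n^{2-1/c^{o(1)}}=n^{2-1/c^{o(1)}}$. For $(2)\Rightarrow(1)$, I would use coordinate subsampling: given an additive $\eps d$ approximation instance of $\MaxIP_{n,d}$ with $A,B\subseteq\{0,1\}^d$, sample $k=O(\eps^{-2}\log n)$ coordinates of $[d]$ uniformly at random with replacement and project $A,B$ onto them to obtain $A',B'\subseteq\{0,1\}^k$ (if $d\le k$ the instance is already of the desired small dimension and is handled directly by~(2)). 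A Hoeffding bound together with a union bound over the $n^2$ pairs shows that, with probability at least $1-1/n$, every pair satisfies $|\langle a',b'\rangle/k-\langle a,b\rangle/d|\le\eps/4$; since $\max$ is $1$-Lipschitz, $(d/k)\cdot\MAX(A',B')$ is an $\eps d/2$ additive approximation to $\MAX(A,B)$. Thus the problem reduces to a single exact $\MaxIP_{n,k}$ instance, i.e.\ $\MaxIP_{n,c'\log n}$ for $c'=O(\eps^{-2})$, which by~(2) runs in time $n^{2-1/(c')^{o(1)}}=n^{2-\eps^{o(1)}}$ (using $(\eps^{-2})^{o(1)}=\eps^{-o(1)}$). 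The sampling may be derandomized by standard techniques if a deterministic statement is wanted.

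\textbf{Expected main obstacle.} The only delicate step is $(1)\Rightarrow(3)$: one must verify that a hypothesis whose running-time deficit is merely \emph{subpolynomially small} still yields a deficit of the required form $1/c^{o(1)}$ once composed with the \emph{polynomial}-loss reduction of Lemma~\ref{lm:ExactIP-to-additive-MaxIP}, while the $n^{O(\eps\log(c/\eps))}$ blow-up in the number of created instances stays negligible. This is exactly what forces the balancing choice $\eps\approx 1/c$, and it is the reason the exponential improvement of the approximation ratio from $\Omega(1/\exp\{\WO(c/\eps)\})$ (Lemma~\ref{lm:ExactIP-to-additive-MaxIP-Rub18}) to $\Omega((\eps/c)^6)$ (Lemma~\ref{lm:ExactIP-to-additive-MaxIP}) was needed: with the older exponential-type ratio, no choice of $\eps$ makes the deficit dominate the instance-count overhead, so that version of the reduction cannot establish the theorem. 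The two remaining implications are routine bookkeeping.
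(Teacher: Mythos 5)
Your proposal is correct and follows essentially the same route as the paper: the substantive implication $(1)\Rightarrow(3)$ invokes Lemma~\ref{lm:ExactIP-to-additive-MaxIP} with the same balancing choice $\eps = 1/c$ to get $n^{O((\log c)/c)}$ subproblems with additive-approximation parameter $\Omega(c^{-12})$, and the bookkeeping matches the paper's. The paper states only that $(1)\Rightarrow(3)$ need be shown and leaves the cycle closure implicit; you have simply made explicit the routine steps --- enumeration over the $d+1$ possible inner products for $(3)\Rightarrow(2)$, and coordinate subsampling with a Chernoff/Hoeffding bound for $(2)\Rightarrow(1)$ (the latter is what the paper cites as Lemma~3.6 of \cite{chen2018hardness}) --- which are correct and unremarkable.
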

\begin{proof}
	We only need to show that Item (1) implies Item (3). By Lemma~\ref{lm:ExactIP-to-additive-MaxIP}, there are constants $c_1,c_2$ such that for any constant $\eps_1 > 0$, every $\ExactIP_{n, c\log n}$ instance can be reduced to $n^{c_1 \eps_1 \log (c/\eps_1)}$ instances of $c_2 \cdot (\eps_1/c)^6 \cdot d$ additive approximations to $\MaxIP_{n,d}$ for $d = n^{o(1)}$.
	
	Suppose Item (1) holds, we set $\eps_1 = 1 / c$, then $\ExactIP_{n, c\log n}$ can be solved in
	\[
	n^{c_1\log (c^2) / c + 2 - (c_2 \cdot c^{-12})^{o(1)}} = n^{2 - 1 / c^{o(1)}}
	\]
	time, which completes the proof.
\end{proof}
	
\subsection{Tighter Connection Between Additive Approximation to $\MaxIP$ and Some Geometric Problems}\label{sec:additive-Max-IP-other-geometry}

Now we are ready to establish a similar connection between additive approximation to $\MaxIP$ and some geometric problems.

\begin{theo}\label{theo:tighter-reduction-2}
	The following are equivalent:
	\begin{enumerate}
		\item An $\eps \cdot d$ additive approximation to $\MaxIP_{n,d}$ is computable in $n^{2 - \eps^{o(1)}}$ time.
		\item An $\eps \cdot d$ additive approximation to $\MinIP_{n,d}$ is computable in $n^{2 - \eps^{o(1)}}$ time.
		\item A $(1+\eps)$ approximation to $\BCPp[p]$ is computable in $n^{2 - \eps^{o(1)}}$ time (for a constant $p \in [1,2]$).
		\item A $(1+\eps)$ approximation to $\FPp$ is computable in $n^{2 - \eps^{o(1)}}$ time (for a constant $p \in [1,2]$).
	\end{enumerate}
\end{theo}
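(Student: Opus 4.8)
The plan is to bolt items (2), (3), and (4) onto the equivalence class already established for item (1) in Theorem~\ref{theo:tighter-reduction-1}, which equates a fast $\eps\cdot d$ additive approximation to $\MaxIP_{n,d}$ with fast algorithms for $\ExactIP_{n,c\log n}$ and $\MaxIP_{n,c\log n}$. So it suffices to prove $(1)\Leftrightarrow(2)$, $(1)\Leftrightarrow(3)$, and $(2)\Leftrightarrow(4)$; throughout I will be cavalier about replacing the approximation parameter $\eps$ by $\Theta(\eps)$ or even by $\poly(\eps)$, since as long as the exponent is a fixed constant this leaves the form $n^{2-\eps^{o(1)}}$ (and $n^{2-1/c^{o(1)}}$) intact. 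The step $(1)\Leftrightarrow(2)$ is immediate from Lemma~\ref{lm:MinIP-to-MaxIP}: the maps $\psirevx,\psirevy$ send a $\MinIP_{n,d}$ instance to a $\MaxIP_{n,2d}$ instance on which $\langle\psirevx(x),\psirevy(y)\rangle=d-\langle x,y\rangle$, so an $\frac{\eps}{2}\cdot(2d)$ additive approximation of $\MAX$ reads off as an $\eps\cdot d$ additive approximation of $\MIN$, and the same map run the other way gives the reverse reduction.

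For $(1)\Rightarrow(3)$ I would route through the LSH framework. To $(1+\eps)$-approximate $\BCPp[p]$ for a constant $p\in[1,2]$, a standard search-to-decision reduction leaves us with the decision problem ``is $\min_{(a,b)}\|a-b\|_p\le R$ or $\ge(1+\eps)R$''; by Lemma~\ref{lm:LSH-ell-p}, $f^{\dist_p}_{R,(1+\eps)R}$ admits a $(p_1,p_2)$-sensitive LSH family with $p_1-p_2\ge\Omega(\eps)$, so Theorem~\ref{theo:LSH-to-MaxIP-general} reduces this to computing an $\Omega(\eps)\cdot d$ additive approximation to $\MaxIP_{n,d}$ with $d=O(\eps^{-2}\log n)$; running the assumed additive-$\MaxIP$ algorithm with parameter $\Theta(\eps)$ then takes $n^{2-\eps^{o(1)}}$ time. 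For $(2)\Rightarrow(4)$ the argument is identical except that, because a standard LSH family makes nearby points collide, the furthest pair corresponds to the bichromatic pair of \emph{smallest} inner product in the hashed space (just as in the proof of Corollary~\ref{cor:LSH-to-MaxIP-metric}), so approximate $\FPp[p]$ reduces to an $\Omega(\eps)\cdot d$ additive approximation to $\MinIP_{n,d}$ instead. (These directions also give $(4)\Rightarrow(1)$ via a detour through $(2)$, but I will instead get $(3)\Rightarrow(1)$ and $(4)\Rightarrow(1)$ directly below.)

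For the reverse directions I would invoke Lemma~\ref{lm:ExactIP-to-additive-MaxIP}: an $\ExactIP_{n,c\log n}$ instance reduces to $n^{O(\eps\log(c/\eps))}$ instances of $\Omega((\eps/c)^6)\cdot d$ additive approximation to $\MaxIP_{n,d}$ with $d=n^{o(1)}$. I then compose this with the folklore embedding of Boolean vectors into $\ell_p$: after padding each vector in disjoint ``private'' blocks so that all of them have weight exactly $d$, one has $\|\hat x-\hat y\|_p^p=2d-2\langle x,y\rangle$ for Boolean $x,y$ (using $|0-1|^p=1$), so $\MAX(A,B)$ is an affine function of the $p$-th power of the $\ell_p$-closest bichromatic distance, and a $(1+\gamma)$-approximation to that closest pair yields an $O(\gamma)\cdot d$ additive approximation to $\MAX(A,B)$. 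Taking $\eps=1/c$ in Lemma~\ref{lm:ExactIP-to-additive-MaxIP} makes the additive parameter $\Omega(c^{-12})$, so it suffices to take $\gamma=\Theta(c^{-12})=1/\poly(c)$; since the number of reduced instances is $n^{O((\log c)/c)}=n^{o(1)}$, a $(1+\gamma)$-approximate $\BCPp[p]$ algorithm running in $n^{2-\gamma^{o(1)}}$ time yields $\ExactIP_{n,c\log n}$ in $n^{2-1/c^{o(1)}}$ time, which is item (1) by Theorem~\ref{theo:tighter-reduction-1}. The $\FPp[p]$ case is the mirror image: the same padding gives $\|\hat x-\hat y\|_p^p=2d-2\langle x,y\rangle$, and after a standard bichromatic gadget (so the furthest pair is necessarily between the two color classes) the furthest distance controls $\MIN(A,B)$, so one routes $\ExactIP$ through the additive-$\MinIP$ analogue of Lemma~\ref{lm:ExactIP-to-additive-MaxIP} (equivalently, composes with step $(1)\Leftrightarrow(2)$) and concludes as before.

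The main obstacle is precisely the approximation factor in the directions $(3)\Rightarrow(1)$ and $(4)\Rightarrow(1)$: a naive reduction from $\ExactIP_{n,c\log n}$ to approximate $\BCPp[p]$ only reaches ratio $1+1/\exp(c)$, which is far too weak to conclude anything of the form $n^{2-1/c^{o(1)}}$. Everything hinges on pushing the additive parameter down to $1/\poly(c)$; that is, on Lemma~\ref{lm:ExactIP-to-additive-MaxIP}, which in turn rests on the modified $\MA$ protocol $\Pinew$ and its $\poly(q)$-dimensional (rather than $\exp(q)$-dimensional) gadget for simulating the $\fqtwo$ inner products. Once that lemma is granted, all that remains is to check that each composition above preserves the $n^{2-\eps^{o(1)}}$ / $n^{2-1/c^{o(1)}}$ shape, which is routine bookkeeping on the $\eps$- and $c$-dependencies.
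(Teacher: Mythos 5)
Your proof is correct and follows the same overall skeleton as the paper's: $(1)\Leftrightarrow(2)$ via Lemma~\ref{lm:MinIP-to-MaxIP}, and $(1)\Rightarrow(3)$, $(2)\Rightarrow(4)$ via Theorem~\ref{theo:LSH-to-MaxIP-general} together with Lemma~\ref{lm:LSH-ell-p}. The one place you diverge is in $(3),(4)\Rightarrow(1)$. The paper dispatches this in a single line by citing Lemma~\ref{lm:prev-work-tt} (Rubinstein's Theorem~4.1), which is precisely the statement that an $\eps\cdot d$ additive approximation to $\MaxIP$ reduces to a single $(1+\Theta(\eps))$-approximate $\BCPp[p]$ (or $\FPp[p]$) instance. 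You instead re-derive that reduction explicitly via the equal-weight padding trick $\|\hat x-\hat y\|_p^p=2d-2\langle x,y\rangle$, which is correct, but you also wrap it inside an unnecessary detour: you route $\ExactIP_{n,c\log n}$ through Lemma~\ref{lm:ExactIP-to-additive-MaxIP}, then through the embedding, then invoke Theorem~\ref{theo:tighter-reduction-1} to get back to item~(1). That works, and the bookkeeping (setting $\eps=1/c$ so the additive parameter is $\Omega(c^{-12})$ and the instance count is $n^{O((\log c)/c)}=n^{o(1)}$) checks out, but it is heavier machinery than needed. The clean route is the one the paper black-boxes: given an $\eps\cdot d$ additive $\MaxIP_{n,d}$ instance, apply the padding embedding directly to get one $(1+\Theta(\eps))$-approximate $\BCPp[p]$ instance of dimension $O(d)$, solve it, and read off the additive approximation; no mention of $\ExactIP$ or Lemma~\ref{lm:ExactIP-to-additive-MaxIP} is required. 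Everything else in your argument, including the observation that $\FPp[p]$ pairs with additive $\MinIP$ rather than $\MaxIP$ under the LSH reduction (matching Corollary~\ref{cor:LSH-to-MaxIP-metric}), and the need for a bichromatic gadget when going from monochromatic $\FPp[p]$ back to $\MinIP$, is sound.
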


One direction is simple, and already implicit in previous work.

\begin{lemma}[Theorem 4.1~\cite{Rub18BetterMA}]\label{lm:prev-work-tt}
	For any $p \in [1,2]$, if $\BCPp[p]$ or $\FPp[p]$ can be approximated in $n^{2 - \eps^{o(1)}}$ time, then there is an algorithm computing $\eps \cdot d$ additive approximation to $\MaxIP$ in $n^{2 - \eps^{o(1)}}$ time.
\end{lemma}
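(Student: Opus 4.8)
\noindent\textbf{Proof plan for Lemma~\ref{lm:prev-work-tt}.}
The plan is to exhibit, for each of $\BCPp[p]$ and $\FPp[p]$, a simple ``folklore'' encoding of a $\MaxIP$ instance as a point set in $\R^{n^{o(1)}}$ under which a $(1+\gamma)$-multiplicative approximation to the closest (respectively furthest) pair, with $\gamma = \Theta(\eps)$, yields an $\eps\cdot d$ additive approximation to $\MAX(A,B)$. First, by the standard random-sampling argument already used in the proof of Lemma~\ref{lm:BCPp-FPp-to-OV} (Lemma~3.6 of~\cite{chen2018hardness}), an $\eps\cdot d$ additive approximation to $\MaxIP_{n,d}$ reduces, at the cost of halving $\eps$, to the same task for $\MaxIP_{n,d'}$ with $d' = O(\eps^{-2}\log n) = n^{o(1)}$; so I may assume throughout that the input dimension is $n^{o(1)}$. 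The one arithmetic fact I will use repeatedly is that for Boolean $u,v \in \{0,1\}^{D}$ one has $\|u-v\|_p^p = \sum_i |u_i - v_i| = \lone{u} + \lone{v} - 2\langle u,v\rangle$, since each $|u_i - v_i| \in \{0,1\}$.

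For $\BCPp[p]$: pad every vector to weight exactly $d$ using two disjoint ``private'' blocks, sending $a \in A$ to $\hat a = a \circ 1^{d-\lone{a}} \circ 0^{d}$ and $b \in B$ to $\hat b = b \circ 0^{d} \circ 1^{d-\lone{b}}$, so that $\hat a,\hat b \in \{0,1\}^{3d}$, $\lone{\hat a} = \lone{\hat b} = d$, and $\langle \hat a,\hat b\rangle = \langle a,b\rangle$. Then $\|\hat a - \hat b\|_p^p = 2d - 2\langle a,b\rangle$, so the bichromatic $\ell_p$-closest pair has $p$-th power exactly $2d - 2\MAX(A,B)$. Given a $(1+\gamma)$-approximation $\widetilde D$ to this closest distance, $\widetilde D^{\,p} \le (1+\gamma)^p(2d - 2\MAX(A,B)) \le (2d - 2\MAX(A,B)) + 6\gamma d$, using $(1+\gamma)^p - 1 \le 3\gamma$ for $p \in [1,2]$ and $\gamma \le 1/2$; hence $d - \widetilde D^{\,p}/2$ lies in $[\MAX(A,B) - 3\gamma d,\, \MAX(A,B)]$, and $\gamma := \eps/3$ finishes this case.

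For $\FPp[p]$, which is monochromatic, I first pass from $\MaxIP$ to $\MinIP$ via Lemma~\ref{lm:MinIP-to-MaxIP} (which gives, in dimension $2d$, $\langle \psirevx(x),\psirevy(y)\rangle = d - \langle x,y\rangle$, so it suffices to additively approximate $\MIN$), and then simulate the bichromatic constraint with one extra coordinate. Concretely, pad the $\MinIP$ instance to common weight as above, obtaining $\hat a,\hat b \in \{0,1\}^{3d}$ with $\|\hat a - \hat b\|_p^p = 2d - 2\langle a,b\rangle$, and append a single ``separator'' coordinate equal to $0$ on every $A$-point and to $C := (3d)^{1/p}$ on every $B$-point. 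Since any within-$A$ or within-$B$ pair has $\ell_p^p$-distance at most $2d < C^p$, every furthest pair is bichromatic, with $p$-th power $C^p + 2d - 2\langle a,b\rangle = 5d - 2\langle a,b\rangle \in [3d, 5d]$; thus the furthest distance to the $p$-th power equals $5d - 2\MIN(A,B)$. A $(1+\gamma)$-approximation $\widetilde F$ satisfies $\widetilde F^{\,p} \le (1+3\gamma)\cdot 5d$, so $(5d - \widetilde F^{\,p})/2$ approximates $\MIN(A,B)$ within $O(\gamma d)$ additively, and $\gamma := \Theta(\eps)$ works; composing with Lemma~\ref{lm:MinIP-to-MaxIP} gives the claimed $\eps \cdot d$ additive approximation to $\MAX$.

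In every case the geometric instance has $n$ points in $\R^{O(d)} = \R^{n^{o(1)}}$, the encoding runs in $n^{1+o(1)}$ time, and the algorithm makes a single call to the assumed $(1+\Theta(\eps))$-approximation; since $(\Theta(\eps))^{o(1)} = \eps^{o(1)}$, the composed running time is $n^{2 - \eps^{o(1)}}$. The only step that needs care is the separator coordinate in the $\FPp[p]$ reduction: taking $C$ too large (e.g.\ $C = \poly(d)$) would inflate the furthest-pair value to $\poly(d)$ and degrade the multiplicative-to-additive conversion by a $\poly(d)$ factor, forcing $\gamma$ down to $\eps/n^{o(1)}$, which is too weak; choosing $C^p = \Theta(d)$, just large enough to dominate every monochromatic distance, keeps the value $\Theta(d)$ and the loss a constant factor. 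Everything else is routine Boolean encoding.
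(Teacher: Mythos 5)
The proposal is correct and uses the standard padding-and-encoding argument that one would expect to find behind the citation to \cite{Rub18BetterMA}: pad both sides to equal Hamming weight so that for Boolean $\hat a,\hat b$ one has $\|\hat a-\hat b\|_p^p = 2d-2\langle a,b\rangle$, then convert a $(1+\gamma)$-multiplicative approximation of the extreme distance into an $O(\gamma d)$-additive approximation of the extreme inner product. Since the paper itself only cites Theorem 4.1 of \cite{Rub18BetterMA} and gives no proof, there is no in-paper argument to compare against; your write-up supplies essentially the argument the citation points to for $\BCPp[p]$, and gives a concrete instantiation (the separator coordinate with $C^p=\Theta(d)$, preceded by the $\MIN\!\leftrightarrow\!\MAX$ switch via Lemma~\ref{lm:MinIP-to-MaxIP}) of the footnote remark that the same holds for the monochromatic $\FPp[p]$.

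Two small remarks. First, you re-use $d$ for the dimension after applying $\psirevx,\psirevy$, which doubles it; this only costs a constant factor in $\eps$ and does not affect the $n^{2-\eps^{o(1)}}$ bound, but it is worth flagging so the reader does not conflate the original and padded dimensions. Second, your observation that $C^p$ must be taken $\Theta(d)$ rather than $\poly(d)$, so that the multiplicative-to-additive loss in the $\FPp[p]$ case stays a constant factor, is exactly the step where a careless reduction would fail; stating it explicitly is the right call. Everything else (the Chernoff dimension reduction to $d'=O(\eps^{-2}\log n)$ and the $(1+\gamma)^p-1\le 3\gamma$ bound for $p\in[1,2]$, $\gamma\le 1/2$) checks out.
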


So it suffices to prove the other direction, we are going to apply Theorem~\ref{theo:LSH-to-MaxIP-general}.
\begin{proofof}{Theorem~\ref{theo:tighter-reduction-2}}
	The equivalence between Item (1) and (2) follows directly from Lemma~\ref{lm:MinIP-to-MaxIP}. By Lemma~\ref{lm:prev-work-tt}, Item (3) and (4) both imply Item (1). So it suffices to show Item (1) implies Item (3) and Item (4).
	
	We only consider $\BCPp[p]$ here; the case for $\FPp[p]$ are symmetric. Note that by a binary search (which incurs a negligible factor in the running time), we only need to consider the decision version, in which we are given a real $R$, and want to distinguish the two cases: (1) $\min_{(a,b) \in A \times B} \|a-b\|_p \le R$; (2) $\min_{(a,b) \in A \times B} \|a-b\|_p \ge (1+\eps) \cdot R$. 
	
	By Theorem~\ref{theo:LSH-to-MaxIP-general} and Lemma~\ref{lm:LSH-ell-p}, this decision problem can be reduced to computing an $\Omega(\eps \cdot d)$ approximation to $\MaxIP_{n,O(\eps^{-2} \log n)}$, which by assumption can be solved in $n^{2 - \eps^{o(1)}}$ time.
\end{proofof}

Finally, Theorem~\ref{theo:tighter-reduction} is a simple corollary of Theorem~\ref{theo:tighter-reduction-1} and Theorem~\ref{theo:tighter-reduction-2}.
	\section{Equivalence in the Data Structure Setting}\label{sec:data-structure}

In this section, we generalize our equivalence results to the data structure setting.

We first introduce the data structure versions of $\OV$ and $\MaxIP$, which are used as intermediate problems for the reductions.

\begin{itemize}
	\item \textbf{Online $\OV$:} Preprocess a database $\mathcal{D}$ of $n$ points in $\{0,1\}^d$ such that, for all query of the form $q \in \{0,1\}^{d}$, either report a point $x \in \mathcal{D}$ which is orthogonal to $q$ or report that no $x$ exists.
	
	\item \textbf{Online $\MaxIP$:} Preprocess a database $\mathcal{D}$ of $n$ points in $\{0,1\}^d$ such that, for all query of the form $q \in \{0,1\}^{d}$, find a point $x \in \mathcal{D}$ maximizing $\langle x, q \rangle$.
\end{itemize}

\begin{theo}\label{theo:eq-OnlineOV-OnlineMaxIP-NNS}
	The following are equivalent:		
	\begin{itemize}
		\item There is a $\delta > 0$ such that for all constant $c$, there is a data structure for Online $\OV$ with $d = c \log n$ uses $\poly(n)$ space and allows $n^{1-\delta}$ query time.
		
		\item There is a $\delta > 0$ such that for all constant $c$, there is a data structure for Online $\MaxIP$ with $d = c \log n$ uses $\poly(n)$ space and allows $n^{1-\delta}$ query time.
		
		\item There is a $\delta > 0$ such that for all $\epsilon > 0$, there is a data structure for approximate NNS in $\ell_p$ with approximation ratio $(1+\epsilon)$ uses $\poly(n)$ space and allows $n^{1-\delta}$ query time for a constant $p \in [1,2]$.
	\end{itemize}		
\end{theo}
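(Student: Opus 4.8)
The plan is to reuse the offline reductions of the previous sections and to observe that each of them is \emph{product-form}: an instance $(A,B)$ is mapped to $k$ sub-instances $(A_1,B_1),\dots,(A_k,B_k)$ of the target problem with $A_j=\{f_j(a):a\in A\}$ and $B_j=\{g_j(b):b\in B\}$, where $f_j$ does not depend on $B$ and $g_j$ does not depend on $A$, and $(A,B)$ is a yes-instance iff some $(A_j,B_j)$ is. Any such reduction lifts to the data-structure setting essentially for free: given a data structure for the target, preprocess $f_j(\mathcal D)$ for every $j$; on a query $q$, compute $g_j(q)$ and probe the $j$-th structure for each $j$, then aggregate the answers. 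Space grows by a factor $k$ and query time by a factor $k$ plus the cost of evaluating all $g_j(q)$, so as long as $k\le n^{\gamma}$ for a constant $\gamma$ smaller than the savings $\delta$ (or $k=n^{o(1)}$), the $g_j$'s are computable in $n^{o(1)}$ time, and the sub-instances have dimension within the admissible range ($O(\log n)$ for Online $\OV$ and Online $\MaxIP$, $n^{o(1)}$ for NNS), the resulting data structure still uses $\poly(n)$ space and $n^{1-\delta'}$ query time for a universal $\delta'>0$. With this principle I would prove the cycle (Online $\OV$) $\Rightarrow$ (Online $\MaxIP$) $\Rightarrow$ (approximate NNS) $\Rightarrow$ (Online $\OV$).

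For (Online $\OV$) $\Rightarrow$ (Online $\MaxIP$) I reduce Online $\MaxIP_{n,c\log n}$ to Online $\OV$ by running the $\Sigma_2^\cc$ protocol for $\IP_{c\log n,m}$ of Proposition~\ref{prop:Sigma2cc-for-IP} through the construction of Theorem~\ref{theo:Sigma2-to-OV}. Taking $\ell=\eps\log n$, the set of Merlin messages is $\{0,\dots,c/\eps\}^{\eps\log n}$; I build an Online $\OV$ structure on $\{R_x(\psi,\cdot):x\in\mathcal D\}$ for each such $\psi$, and answer a query $q$ by probing every structure with $R_q(\psi,\cdot)$: each hit in the structure for $\psi$ certifies a database point with inner product $\sum_i\psi_i$, so the largest such value over all hits equals $\MAX(\mathcal D,\{q\})$. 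Choosing $\eps=\eps(c,\delta)$ with $\eps\log(c/\eps)<\delta/2$ makes $k=n^{\eps\log(c/\eps)}\le n^{\delta/2}$ and keeps the $\OV$ dimension at $O(2^{c/\eps}\log n)=O_{c,\delta}(\log n)$; this step is deterministic. For (Online $\MaxIP$) $\Rightarrow$ (approximate NNS) in $\ell_p$ (any fixed $p\in[1,2]$), the online analogue of the search-to-decision step in the proof of Corollary~\ref{cor:LSH-to-MaxIP-metric} (an $n^{o(1)}$-factor loss) reduces to the fixed-radius decision problem, i.e.\ the online version of $f^{\dist_p}_{R,(1+\eps)R}\SATPAIR$; Lemma~\ref{lm:LSH-ell-p} provides a $(p_1,p_2)$-sensitive LSH family with $p_1-p_2=\Omega(\eps)$, and Theorem~\ref{theo:LSH-to-MaxIP-general} turns it into a product-form ($k=1$), randomized reduction to an $\Omega(\eps)\cdot d$ additive approximation of $\MaxIP_{n,d}$ with $d=O(\eps^{-2}\log n)$, which for constant $\eps$ has dimension $c'\log n$ for a constant $c'$, so an exact Online $\MaxIP$ query a fortiori supplies the required additive approximation. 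For (approximate NNS) $\Rightarrow$ (Online $\OV$) I invoke the reduction underlying Lemma~\ref{lm:OV-to-BCPp-FPp} (Theorem~4.1 of~\cite{Rub18BetterMA}): it is product-form, and with its internal parameter set to a small constant depending on $c$ and $\delta$ it produces $k=n^{\gamma}$ instances, $\gamma$ an arbitrarily small constant (in particular $\gamma<\delta$), of $(1+\beta)$-approximate $\BCPp$ in $\R^{n^{o(1)}}$ for a constant $\beta=\beta(c)>0$ and any fixed $p\in[1,2]$; feeding these to the NNS data structure (which works for every constant approximation ratio) closes the cycle.

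Two points need care. First, the reduction of Theorem~\ref{theo:LSH-to-MaxIP-general} is randomized and must be correct on \emph{every} query; I would amplify its per-query error from $1/n$ down to $n^{-\omega(1)}$ by a constant-factor increase in the number of hash repetitions, and then a union bound covers the polynomially many queries (the other two reductions are deterministic). Second, in each step one must check that $k$ stays below the savings and that all sub-instance dimensions remain admissible — this is exactly the parameter tuning already performed in the offline proofs of Lemma~\ref{lm:ExactIP-to-OV} and Lemma~\ref{lm:ExactIP-to-additive-MaxIP}, with the internal constant ($\eps$, or Rubinstein's $T$) depending only on $c$ and $\delta$. I expect the main obstacle to be the last implication: one must verify that the distributed-PCP reduction of~\cite{Rub18BetterMA} is genuinely product-form — the subtlety, exactly as in Theorem~\ref{theo:Sigma2-to-OV}, is that Merlin's proof depends on \emph{both} inputs, so the reduction must be split by enumerating proofs — and that the gap it creates is a constant bounded away from $1$ for constant $c$, so that a single constant approximation ratio of the NNS data structure is enough.
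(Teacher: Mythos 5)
Your proposal is correct and follows essentially the same strategy as the paper: observe that the offline reductions are product-form (each sub-instance produced depends on the left input only through maps $f_j$ and on the right input only through maps $g_j$), lift them to the online setting by building one data structure per sub-instance and per target value $k$, and tune parameters so that the number of sub-instances $m=n^{O(\eps\log(c/\eps))}$ stays below the query-time savings. The paper abstracts precisely this observation into Lemmas~\ref{lm:specific-construction-1} and~\ref{lm:specific-construction-2} and organizes the equivalence as two biconditionals (Online $\OV\Leftrightarrow$ Online $\MaxIP$, Online $\MaxIP\Leftrightarrow$ NNS), whereas you close a three-cycle $\OV\Rightarrow\MaxIP\Rightarrow\text{NNS}\Rightarrow\OV$ and for the last leg invoke Rubinstein's reduction directly rather than the paper's own tighter $\Sigma_2$/MA-based variant; since $c$ is a constant here the weaker $1/\exp(c)$ gap suffices, so this is an immaterial choice. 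One small caveat: your remark that amplifying the LSH error lets "a union bound cover the polynomially many queries" is slightly off for NNS, where the query space $\R^m$ is infinite; the standard (and paper-implicit) guarantee is per-query success with high probability over the preprocessing randomness, and a constant-factor increase in repetitions only improves that per-query probability.
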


Note that by~\cite{abboud2015more}, Online $\OV$ is equivalent to Partial Match, so the above theorem implies Theorem~\ref{theo:eq-partial-match-NNS}.

We also need the following two important observations from the proof of Lemma~\ref{lm:ExactIP-to-OV} and Lemma~\ref{lm:ExactIP-to-additive-MaxIP}.

\begin{lemma}[Implicit in Lemma~\ref{lm:ExactIP-to-OV}]\label{lm:specific-construction-1}
	Let $n$ be an integer, $c$ be a constant, $\epsilon > 0$ and $0 \le k \le c\log n$. There are two families of functions $f_1,f_2,\dotsc,f_{m}$ and $g_1,g_2,\dotsc,g_{m}$ from $\{0,1\}^{c\log n}$ to $\{0,1\}^{2^{O(c/\epsilon)} \log n}$ where $m = n^{O(\eps \log (c/\epsilon))}$, such that for all $x,y \in \{0,1\}^{c \log n}$, 
	$\langle x, y \rangle = k$ if and only if there is an $i \in [m]$ such that $ \langle f_i(x), g_i(y) \rangle = 0$. Moreover, functions $f_i$'s and $g_i$'s can be evaluated in $\polylog(n)$ time.
\end{lemma}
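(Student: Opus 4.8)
The plan is to extract, and then name, the per-point maps that are constructed implicitly inside the proof of Lemma~\ref{lm:ExactIP-to-OV}, and to check that they have exactly the claimed shape. Recall that an $\ExactIP_{n,c\log n}$ instance with target $k$ is precisely an $\IP_{c\log n,k}\SATPAIR_{n}$ instance, and that the proof of Lemma~\ref{lm:ExactIP-to-OV} feeds the $\Sigma_2^\cc$ protocol of Proposition~\ref{prop:Sigma2cc-for-IP} for $\IP_{c\log n,k}$, instantiated with group parameter $\epsilon\log n$, into the generic construction of Theorem~\ref{theo:Sigma2-to-OV}. So I would first record the parameters of that protocol: Merlin sends $m_1 = O(\epsilon\log(c/\epsilon)\cdot\log n)$ bits (so $2^{m_1} = n^{O(\epsilon\log(c/\epsilon))}$), Megan sends $m_2 = O(\log\log n)$ bits (so $2^{m_2} = O(\log n)$), and Alice and Bob exchange $\ell = O(c/\epsilon)$ bits.

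Next I would revisit the construction in the proof of Theorem~\ref{theo:Sigma2-to-OV}. For each fixed Merlin string $a\in\{0,1\}^{m_1}$ it defines $R_x(a,\cdot)\in\{0,1\}^{2^{m_2+\ell}}$ by concatenating, over all Megan strings $b$, the indicator vectors recording which rejecting transcripts are consistent with Alice's view, and symmetrically $R_y(a,\cdot)$ for Bob's view. The key observation — which is exactly what makes the statement ``one-sided'' — is that $R_x(a,\cdot)$ is a function of $x$ (together with the fixed $a$ and the fixed parameters $c,k,\epsilon,n$) only, never of $y$, and symmetrically for $R_y(a,\cdot)$. I would therefore set $m := 2^{m_1} = n^{O(\epsilon\log(c/\epsilon))}$, enumerate the Merlin strings as $a_1,\dots,a_m$, and define $f_i(x) := R_x(a_i,\cdot)$ and $g_i(y) := R_y(a_i,\cdot)$, with common codomain $\{0,1\}^{2^{m_2+\ell}} = \{0,1\}^{2^{O(c/\epsilon)}\log n}$, as desired.

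For correctness I would invoke Theorem~\ref{theo:Sigma2-to-OV} together with the soundness and completeness of the protocol in Proposition~\ref{prop:Sigma2cc-for-IP}: $\langle x,y\rangle = k$ iff $\IP_{c\log n,k}(x,y)=1$ iff there is a Merlin string $a_i$ such that for every Megan string $b$ Alice accepts, which by the transcript-indicator argument in the proof of Theorem~\ref{theo:Sigma2-to-OV} holds iff $\langle R_x(a_i,\cdot),R_y(a_i,\cdot)\rangle = 0$, i.e.\ iff $\langle f_i(x),g_i(y)\rangle = 0$ for some $i\in[m]$. For the evaluation-time claim I would note that, given the index $i$, the string $a_i$ is read off directly from the binary encoding of $i$, and computing $R_x(a_i,\cdot)$ amounts to looping over the $2^{m_2}\cdot 2^{\ell} = O(\log n)$ (transcript, Megan-string) pairs and, for each, running Alice's (computationally-efficient) decision function on an input of length $O(\log n)$; this is $\polylog(n)$ in total, and likewise for $g_i$.

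I do not expect a genuine obstacle: the mathematical content is already in Theorem~\ref{theo:Sigma2-to-OV} and Proposition~\ref{prop:Sigma2cc-for-IP}. The only points requiring care are (i) verifying that the per-point maps really are independent of the other player's input — so that they may be applied pointwise to $A$ and to $B$ separately, which is what the later data-structure reductions need — and (ii) the parameter bookkeeping that pins the codomain dimension at $2^{O(c/\epsilon)}\log n$, the number of maps at $n^{O(\epsilon\log(c/\epsilon))}$, and the evaluation time at $\polylog(n)$, all of which fall out of the parameter settings recorded above.
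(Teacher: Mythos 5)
Your proposal is correct and is essentially the paper's intended argument: since the lemma is stated as ``Implicit in Lemma~\ref{lm:ExactIP-to-OV},'' its proof is exactly the extraction you carry out, namely fixing the Merlin string $a_i$ and observing that $R_x(a_i,\cdot)$ and $R_y(a_i,\cdot)$ in the proof of Theorem~\ref{theo:Sigma2-to-OV} are per-point maps of $x$ and $y$ alone, with parameters read off from Proposition~\ref{prop:Sigma2cc-for-IP} instantiated at group size $\epsilon\log n$. The one point worth spelling out in a final write-up is the convention that makes item~(i) go through: for ``$w_i$ makes Alice reject'' to be a predicate on the transcript alone (so that $R_y(a,\cdot)$ genuinely does not depend on $x$), Alice's accept/reject bit must be included in (or determined by) the transcript, which costs one extra bit of communication and is harmless to the parameter bookkeeping.
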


\begin{lemma}[Implicit in Lemma~\ref{lm:ExactIP-to-additive-MaxIP} and \ref{lm:prev-work-tt}]\label{lm:specific-construction-2}	
	Let $p \in [1,2]$, $n$ be an integer, $c$ be a constant, $\epsilon > 0$ and $0 \le k \le c\log n$. There are two families of functions $f_1,f_2,\dotsc,f_{m}$ and $g_1,g_2,\dotsc,g_{m}$ from $\{0,1\}^{c\log n}$ to $\R^{n^{o(1)}}$ where $m = n^{O(\eps \log (c/\epsilon))}$, such that for all $x,y \in \{0,1\}^{c \log n}$, 	
	\begin{itemize}
		\item If $\langle x, y \rangle = k$, then there is an $i \in [m]$ such that $\| f_i(x) - g_i(y) \|_p \le 1 - \Omega((\epsilon/c)^6)$.
		\item Otherwise, for all $i \in [m]$, $\| f_i(x) - g_i(y) \|_p \ge 1$.
	\end{itemize}
	 Moreover, functions $f_i$'s and $g_i$'s can be evaluated in $n^{o(1)}$ time.
\end{lemma}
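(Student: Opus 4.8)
The plan is to treat Lemma~\ref{lm:specific-construction-2} as the \emph{per-element, per-instance} distillation of the composition ``$\ExactIP \to$ additive $\MaxIP \to (1+\Omega(1))$-approximate $\BCPp[p]$'' that is packaged algorithmically in Lemma~\ref{lm:ExactIP-to-additive-MaxIP} followed by Lemma~\ref{lm:prev-work-tt}. Concretely, I would reopen the proofs of those two lemmas, check that each of the two reductions maps the $A$-side and the $B$-side \emph{separately} once the instance index is fixed, and then simply compose the two maps.

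First I would apply Lemma~\ref{lm:ExactIP-to-additive-MaxIP} to the question ``$\langle x,y\rangle = k$''. Its proof enumerates $n^{O(\eps\log(c/\eps))}$ candidate Merlin proofs $\psi$; for each $\psi$ it builds $u^\psi_x\in\{0,1\}^{d'}$ from $x$ alone and $v^\psi_y\in\{0,1\}^{d'}$ from $y$ alone, where $d' = O(q^6\cdot 2^R)$, $q=\Theta(c/\eps)$, $R=O(\log(\eps\log n))$, so $d' = \polylog(n) = n^{o(1)}$; and it fixes a threshold $\Theta$ (independent of $x,y$) such that $\langle x,y\rangle = k$ forces $\langle u^{\psi}_x,v^{\psi}_y\rangle \ge \Theta$ for the honest proof, while $\langle x,y\rangle \ne k$ forces $\langle u^{\psi}_x,v^{\psi}_y\rangle \le \Theta - \Omega((\eps/c)^6)\cdot d'$ for \emph{every} $\psi$. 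Both maps $x\mapsto u^\psi_x$ and $y\mapsto v^\psi_y$ run in $\polylog(n)$ time per vector.

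Next I would apply the reduction that lives inside the proof of Lemma~\ref{lm:prev-work-tt} (Theorem~4.1 of~\cite{Rub18BetterMA}) in the direction ``compute additive $\MaxIP$ by querying approximate $\BCPp[p]$'': on Boolean vectors of dimension $d'$ it applies a small $\ell_p$ gadget, giving maps $u\mapsto\widehat f(u)$ and $v\mapsto\widehat g(v)$ into $\R^{O(d')}$ with $\|\widehat f(u)-\widehat g(v)\|_p$ strictly decreasing in $\langle u,v\rangle$, so that $\langle u,v\rangle\ge\Theta$ gives distance at most $R_0$ and $\langle u,v\rangle\le\Theta-\Omega((\eps/c)^6)d'$ gives distance at least $(1+\Omega((\eps/c)^6))R_0$ for a threshold $R_0$ depending only on $c,\eps,p$. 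A single global scaling of all points then normalizes these two bounds to ``$\le 1-\Omega((\eps/c)^6)$'' and ``$\ge 1$''. Setting $f_i := \widehat f\circ(x\mapsto u^{\psi_i}_x)$ and $g_i := \widehat g\circ(y\mapsto v^{\psi_i}_y)$ over the $m = n^{O(\eps\log(c/\eps))}$ proofs gives the desired families: the target dimension is $O(d') = n^{o(1)}$ (as $c,\eps$ are constants, $q^6=O(1)$ and $2^R = n^{o(1)}$), and each $f_i,g_i$ is a composition of $n^{o(1)}$-time element-wise maps, hence $n^{o(1)}$-time.

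I expect the only real obstacle to be \emph{extraction rather than construction}: Lemma~\ref{lm:prev-work-tt} is quoted as a statement about running times, so I must verify from its proof that the $\MaxIP\to\BCPp[p]$ step is genuinely element-wise and $n^{o(1)}$-time computable --- precisely the property that lets the reduction be carried over to the online / data-structure world (this is also why Lemma~\ref{lm:specific-construction-1} and Lemma~\ref{lm:specific-construction-2} are stated as ``implicit in'' earlier algorithmic lemmas rather than proved from scratch). A secondary, purely mechanical point is tracking the threshold $\Theta$ and choosing the global scaling factor so the final gap lands exactly at $1-\Omega((\eps/c)^6)$ versus $1$; the $\ell_1$ case needs no gadget at all, and for general $p\in[1,2]$ the gadget is the standard $p$-stable construction already used in~\cite{Rub18BetterMA,alman2016polynomial}.
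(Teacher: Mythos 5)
Your proposal is correct and is exactly the intended reading of ``implicit in'': the lemma is obtained by opening up Lemma~\ref{lm:ExactIP-to-additive-MaxIP}, observing that for each Merlin proof $\psi$ the vectors $u^\psi_x,v^\psi_y$ and the threshold $2^R M$ depend only on $x$ (resp.\ $y$) and $\psi$, and then composing with the element-wise Boolean-to-$\ell_p$ embedding inside Lemma~\ref{lm:prev-work-tt}; a global rescaling then places the two distance bounds at $1-\Omega((\eps/c)^6)$ and $1$. One small correction: no $p$-stable construction is needed (or wanted, since it would introduce randomness) for the $\MaxIP\to\BCPp[p]$ step --- because the vectors $u^\psi_x,v^\psi_y$ are Boolean, $|u_i-v_i|^p=|u_i-v_i|$ for every $p\in[1,2]$, so the same deterministic padding to equal Hamming weight (append $1$s on disjoint blocks) gives $\|\hat u-\hat v\|_p^p = 2d'-2\langle u,v\rangle$ simultaneously for all $p$, which is what you need.
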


\begin{proofof}{Theorem~\ref{theo:eq-OnlineOV-OnlineMaxIP-NNS}} 
		
	In the below we first show the equivalence between Online $\OV$ and Online $\MaxIP$, the equivalence between Online $\MaxIP$ and NNS is proved similarly, so we only sketch the main ideas.
	
	\paragraph*{Online $\OV$ $\Leftrightarrow$ Online $\MaxIP$.} The reduction from Online $\OV$ to Online $\MaxIP$ is trivial. For the other direction, suppose there is a $\delta > 0$ such that for all constant $c$, there is an algorithm for Online $\OV$ with $d = c \log n$ such that it uses $\poly(n)$ space and allows $n^{1-\delta}$ query time.
	
	Let $d = c \log n$ for a constant $c$, and $c_1$ be the constant hiding in the big-$O$ of $m=2^{O(\epsilon\log(c/\epsilon))}$ in Lemma~\ref{lm:specific-construction-1}. Suppose we are given a set $\mathcal{D}$ of $n$ points from $\{0,1\}^d$.
	
	We set $\epsilon$ such that $c_1 \cdot \epsilon \log (c/\epsilon) = \delta/2$ and apply Lemma~\ref{lm:specific-construction-1}. Now, for each $0 \le k \le d$, we build $n^{c_1 \cdot \epsilon \log (c/\epsilon)} = n^{\delta/2}$ data structures for Online $\OV$, the $i$-th data structure consists of the $f_i(x)$'s for all $x \in \mathcal{D}$. Note that the $f_i(x)$'s have length $2^{O(c/\epsilon)} \cdot \log n$, which is still $O(\log n)$ as $\epsilon$ is a constant.
	
	For each query $q \in \{0,1\}^{d}$, note that there is an $x \in \mathcal{D}$ such that $\langle x, q \rangle = k$ if and only if there is an $i$ such that the $i$-th Online $\OV$ structure contains an orthogonal point to $g_i(q)$. Therefore, by enumerating $k$ from $d$ down to $0$, $i$ from $\left[n^{\delta/2}\right]$, and making corresponding queries to the Online $\OV$ data structures, one can answer queries for Online $\MaxIP$ in $n^{1 - \delta/2} \cdot d$ time.
	
	\paragraph*{Online $\MaxIP$ $\Leftrightarrow$ Approximate NNS (Sketch).} Using Lemma~\ref{lm:specific-construction-2}, the reduction from Online $\MaxIP$ to Approximate NNS can be proved similarly as from Online $\MaxIP$ to Online $\OV$. 
	
	For the direction from approximate NNS to Online $\MaxIP$: suppose the approximation ratio is $(1+\epsilon)$. It suffices, for all $R$ of the form $(1+\epsilon/3)^{k}$ for an integer $k$, to construct a data structure which finds a point with distance smaller than $R \cdot (1+\epsilon/3)$ if the minimum distance is smaller than $R$, and reports a failure if the minimum distance is greater than $R \cdot (1+\epsilon/3)$ (its behavior can be arbitrary if neither case holds). Using the reduction implicit in proof of Theorem~\ref{theo:LSH-to-MaxIP-general}, this can be reduced to Online $\MaxIP$ with $d = O(\log n)$. 
	
\end{proofof}
	\section{Algorithms for $\MAMinIP$ and $\MAMaxIP$}
	
	In this section we give fast algorithms for $\MAMinIP$ and $\MAMaxIP$. Our algorithms make use of the polynomial method~\cite{abboud2015more}. 	For simplicity of exposition, we set the approximation factors in $\MAMinIP$ and $\MAMaxIP$ to be $2$, but our algorithms can be extended to work for any constant approximation factor $\kappa > 1$ easily.
	
	
	\subsection{Low Degree Probabilistic Polynomial Implies Fast Algorithms} 
	Abboud, Williams, and Yu~\cite{abboud2015more},  show that for a Boolean vector problem, a ``sparse'' probabilistic polynomial for the problem implies a fast algorithm. To state their result formally, we first introduce some notations.
	
	For our purposes, we will think of a \emph{probabilistic polynomial} $\mathcal{P}$ as a distribution over $\mathbb{F}_2$-polynomials (polynomials over the field $\mathbb{F}_2$), and the \emph{degree} of a probabilistic polynomial is the maximum degree of all polynomials in its support. For a function $f : D \to \{0,1\}$, we say $\mathcal{P}$ is an $\eps$-error probabilistic polynomial for $f$, if for every $x \in D$, $\Pr_{P \sim \mathcal{P}} [P(x) \ne f(x)] \le \eps$.
		
	Let us abstract out a key result from~\cite{abboud2015more}, for our use here:
	
	\begin{theo}[\cite{abboud2015more}]\label{theo:AWY15}
		Let $c$ be an integer and $d = c \log n$, let $f : D \to \{0,1\}$ with $D \subseteq \{0,1\}^{d} \times \{0,1\}^{d}$ be a function. 
        Suppose that:
		
		\begin{itemize}
			\item For any $\eps > 0$, there is an $\eps$-error probabilistic polynomial $\mathcal{P}$ for $f$ with degree $t = O(\log \eps^{-1})$.
			\item A sample from $\mathcal{P}$ can be generated in $\poly(\binom{d}{ \le t})$ time.\footnote{$\binom{n}{\le m}$ denotes $\sum_{i=0}^{m} \binom{n}{i}$.}
		\end{itemize}
		
		Then there is an algorithm $\alg$ such that:
		
		\begin{itemize}
			\item Given two sets $A$ and $B$ of $n$ vectors from $\{0,1\}^{d}$, $\alg$ runs in $n^{2 - 1/O(\log c)}$ time.
			
			\item If for every $(a,b) \in A \times B$, $f(a,b) = 0$, then $\alg$ outputs $0$ with probability at least $1 - 1/n$.
			
			\item If there is an $(a,b) \in A \times B$, $f(a,b) = 1$, then $\alg$ outputs $1$ with probability at least $1 - 1/n$.\footnote{If neither of the above two cases hold, the algorithm can output anything.}
		\end{itemize}
		
	\end{theo}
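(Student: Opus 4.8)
The plan is to run the polynomial method of Abboud--Williams--Yu: Theorem~\ref{theo:AWY15} is essentially a black-box repackaging of the main technical engine of~\cite{abboud2015more}, so the proof consists of plugging the hypothesized probabilistic polynomial for $f$ into that machinery and tracking parameters. First I would reduce to a batched problem: partition $A$ into $N = n/s$ groups $A_1,\dots,A_N$ of $s$ vectors each, and likewise partition $B$ into $N$ groups, where $s$ is a parameter fixed at the very end. The instance is a ``yes'' iff some group pair $(A_i,B_j)$ satisfies $g_{ij} := \bigvee_{(a,b)\in A_i\times B_j} f(a,b) = 1$.

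\noindent\textbf{A batched probabilistic polynomial.} For each group pair, I would build an $\mathbb{F}_2$ probabilistic polynomial $Q$ computing $g_{ij}$ as a function of the $2sd$ bits describing $(A_i,B_j)$, by composing the standard probabilistic polynomial for $\OR$ on $s^2$ inputs (the Razborov--Smolensky construction, each of whose layers is a random $\mathbb{F}_2$-linear form in its inputs and so contributes only a constant factor to the degree) with the $s^2$ copies $\{\mathcal{P}(a,b)\}_{(a,b)\in A_i\times B_j}$ of the hypothesized $\eps$-error polynomial $\mathcal{P}$ for $f$. Since $\deg \mathcal{P} = O(\log \eps^{-1})$, this gives $\deg Q = O(\log \eps^{-1})$, and by the sampling hypothesis a sample of $Q$ can be written down in $\poly\big(\binom{O(sd)}{\le \deg Q}\big)$ time. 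To control error I would take $O(\log n)$ independent samples of $Q$ and combine them per group pair, and brute-force-verify any group pair flagged as a ``yes'' (at cost $\poly(s,d)$ each), so that false positives are discarded and every genuine ``yes'' group pair is detected with probability $1-1/\poly(n)$; the parameters $s$ and $\eps$ are chosen so that the total verification cost stays within budget.

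\noindent\textbf{Evaluation by rectangular matrix multiplication.} For a fixed sampled $Q$, write $Q$ over $\mathbb{F}_2$ as $\sum_k \mu_k\cdot\nu_k$, where $\mu_k$ is a monomial in the bits of $A_i$ and $\nu_k$ a monomial in the bits of $B_j$; there are $K \le \binom{O(sd)}{\le \deg Q}$ such terms. Form the $N\times K$ matrix whose $(i,k)$ entry is $\mu_k$ evaluated on $A_i$ and the $K\times N$ matrix whose $(k,j)$ entry is the $k$-th coefficient times $\nu_k$ evaluated on $B_j$; their product over $\mathbb{F}_2$ has $(i,j)$ entry $Q(A_i,B_j)$. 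Choosing $s$ to be a small polynomial in $n$, together with $\eps$ and hence $\deg Q$, so that $K$ is a sufficiently small polynomial in $N$, this product is computed via fast rectangular matrix multiplication in time that --- using $d = c\log n$, which forces $\deg Q = O(\log c)$ and permits $N = n^{1 - 1/O(\log c)}$ --- works out to $n^{2 - 1/O(\log c)}$. Iterating over the $O(\log n)$ sampled copies and $\OR$-ing the verified results gives the claimed one-sided guarantees; neither the amplification nor the verification affects the exponent.

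\noindent\textbf{Main obstacle.} The heart of the argument is the simultaneous choice of the group size $s$, the inner error $\eps$, and hence the degree $\deg Q = O(\log \eps^{-1})$: we need $\deg Q$ small enough that the monomial count $K = \binom{O(sd)}{\le \deg Q}$ is a small enough polynomial in $N = n/s$ for rectangular matrix multiplication to beat $n^2$, while $s$ must be large enough (a genuine polynomial in $n$) that the speedup is polynomial, and at the same time the per-group-pair error of $Q$ must be small enough --- after amplification, or compensated by verification --- that the overall output is correct. Reconciling these three constraints is precisely where the hypotheses ``$t = O(\log \eps^{-1})$'' and ``$d = c\log n$'' enter, and it is what pins the savings exponent at $1/O(\log c)$. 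The remaining ingredients --- Chernoff bounds for amplification, the Razborov--Smolensky $\OR$ polynomial, and the bookkeeping for rectangular matrix multiplication --- are routine and already worked out in~\cite{abboud2015more}.
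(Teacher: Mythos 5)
Your high-level architecture (group pairing, composing the per-pair polynomial with an $\OR$, evaluating by rectangular matrix multiplication, amplifying, and verifying flagged pairs) is indeed the AWY framework, but the quantitative accounting has a genuine error that breaks the exponent. Since the outer $\OR$ must be correct on a group pair where \emph{all} $s^2$ inner values are $0$, the per-pair error $\eps$ must be taken of order $1/s^2$ to union-bound over the $s^2$ evaluations of $\mathcal{P}$; this forces $t = \deg\mathcal{P} = O(\log\eps^{-1}) = O(\log s)$, and hence $\deg Q = O(\log s)$, not $O(\log c)$. With $s = n^{1/\Theta(\log c)}$ and constant $c$, that is $\deg Q = \Theta(\log n/\log c) = \Theta(\log n)$. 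Plugging this into your monomial count $K \le \binom{O(sd)}{\le \deg Q}$ gives $\log K = \Theta(\deg Q \cdot \log(sd)) = \Theta((\log n)^2/(\log c)^2) = \Theta((\log n)^2)$, which is superpolynomial in $n$; the rectangular matrix multiplication step then does not fit in $n^{2-\Omega(1)}$ time, and the argument collapses. Nor can you escape this by keeping $\eps$ at $1/\poly(c)$ (so $\deg Q = O(\log c)$) and leaning on verification, since then a fixed `no' group pair is flagged with probability $\Theta(\min(1, s^2\eps))$, which is essentially $1$, so all $\approx N^2$ group pairs are flagged and the verification cost alone exceeds $n^2$.

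The missing idea is the \emph{structured} monomial bound, which exploits the compositional form of $Q$ rather than treating it as a generic degree-$\deg Q$ polynomial in $\Theta(sd)$ variables. Write $Q = g\big(\mathcal{P}(a_{i},b_{j})\big)_{i,j\in[s]}$ where the outer $\OR$-approximator $g$ has degree $k = O(1)$ (for constant error), hence at most $\binom{s^2}{\le k} = s^{O(1)}$ monomials in the $s^2$ intermediate values; each of those monomials is a product of at most $k$ copies of $\mathcal{P}$, each with at most $\binom{2d}{\le t}$ monomials in the input bits. Therefore $K \le s^{O(1)}\binom{2d}{\le t}^{O(1)}$ --- crucially with the \emph{per-pair} dimension $2d$ in the binomial, not $sd$. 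With $d = c\log n$, $t = O(\log s)$, and $s = n^{1/(C\log c)}$, one gets $\binom{2d}{\le t} \le (O(c\log n/\log s))^{O(\log s)} = (O(C\,c\log c))^{O(\log n/(C\log c))} = n^{O(1/C)}$, so $K = n^{O(1/C)}$ is a small polynomial in $N = n/s$ for $C$ a large constant; this (and only this) is what makes the rectangular matrix multiplication run in $\WO(N^2) = n^{2-1/O(\log c)}$ time. Your write-up should replace the bound $\binom{O(sd)}{\le\deg Q}$ with this compositional bound and correct $\deg Q$ from $O(\log c)$ to $O(\log s)$; the rest of your outline (the $O(\log n)$-fold repetition with majority vote, one-sided verification of flagged group pairs) then goes through.
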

	
	\newcommand{\GapMinIp}{\textsf{Gap-Min-IP}}
	\newcommand{\GapMaxIp}{\textsf{Gap-Max-IP}}
	
	\newcommand{\fgap}{f^{\textsf{gap}}}
	
	\subsection{$n^{2-1/O(\log c)}$ Time Algorithms for $\MAMinIP$ and $\MAMaxIP$}
	
	In order to apply the theorem above, we need to switch from $\MAMinIP$ and $\MAMaxIP$ to their closely related decision problems $\GapMinIp$ and $\GapMaxIp$.
	
	\begin{defi}
		For $n,d \in \mathbb{N}$, we define the problems:
		\begin{itemize}
			\item  $\GapMinIp_{n,d}$ : \emph{Given sets $A$ and $B$ of $n$ vectors from $\{0,1\}^d$ and an integer $\tau$, and the promise that either $\MIN(A,B) \le \tau$ or $\MIN(A,B) \ge 2\tau$, the task is to decide which.}
			\item  $\GapMaxIp_{n,d}$ : \emph{Given sets $A$ and $B$ of $n$ vectors from $\{0,1\}^d$ and an integer $\tau$, and the promise that either $\MAX(A,B) \le \tau$ or $\MAX(A,B) \ge 2\tau$, the task is to decide which.}
		\end{itemize}
		
		Moreover, for two vectors $x,y \in \{0,1\}^{d}$ and an integer $\tau$, we define the corresponding gap-deciding function:
		
		\[
		\fgap_{d,\tau}(x,y) = \begin{cases}
		1 &\qquad \text{$\langle x, y \rangle \ge 2\tau$,}\\
		0 &\qquad \text{$\langle x, y \rangle \le \tau$,}\\
		\text{undefined} &\qquad \text{otherwise.}
		\end{cases}
		\]
		
		When $d$ and $\tau$ are clear from the context, we omit them for simplicity.
	\end{defi}

	\begin{rem}\label{rm:fgap-formulation}
		$\GapMaxIp_{n,d}$ ($\GapMinIp_{n,d}$) is equivalent to determine whether there is an $(a,b) \in A \times B$ such that $\fgap(a,b) = 1$ ($\fgap(a,b) = 0$) or for all $(a,b) \in A \times B$ we have $\fgap(a,b) = 0$ ($\fgap(a,b) = 1$). 
	\end{rem}
	
	\newcommand{\micro}{\textsf{micro}}
	
	The following lemma is the key technical ingredient of this section.
	
	\begin{lemma}\label{lm:polynomial-for-fgap}
		For all $d,\tau \in \mathbb{N}$ and $\eps \in (0,1/10)$, there is a $t = O(\log \eps^{-1})$-error probabilistic polynomial $\mathcal{P}$ for $\fgap_{d,\tau}$. Moreover, a sample from $\mathcal{P}$ can be generated in $\poly(\binom{d}{\le t})$ time.
	\end{lemma}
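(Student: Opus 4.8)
The plan is to reduce $\fgap_{d,\tau}$ — which, writing $\langle x,y\rangle=\sum_i x_iy_i$, is the gapped threshold ``$\langle x,y\rangle\ge 2\tau$ vs.\ $\langle x,y\rangle\le\tau$'' — to \emph{approximating an $\OR$}, and then amplify in a way that keeps the degree linear in $\log\eps^{-1}$. First dispose of degenerate cases: if $2\tau>d$ the ``yes'' side is vacuous and the constant $0$ polynomial works; if $\tau=0$ then $\fgap=\OR_i(x_iy_i)$ on the promise domain, for which a degree-$O(\log\eps^{-1})$ probabilistic polynomial over $\mathbb{F}_2$ follows directly from the classical probabilistic polynomial for $\OR$ (random parities of random subsets). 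So assume $1\le\tau\le d/2$. Also note that $\fgap_{d,\tau}$ has a \emph{deterministic} $\mathbb{F}_2$-polynomial of degree $\le 2d$ (it is a symmetric function of the bits $x_iy_i$), so we may always assume the target degree $t\le 2d$.

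Next, the one-round reduction. Sample $S\subseteq[d]$ by keeping each coordinate independently with probability $p:=\tfrac{1}{2\tau}$, and let $g_S(x,y):=\OR_{i\in S}(x_iy_i)=[\,\langle x_{|S},y_{|S}\rangle\ge 1\,]$. Since $\Pr_S[g_S(x,y)=0]=(1-p)^{\langle x,y\rangle}$, the bound $(1-\tfrac{1}{2\tau})^{\tau}\ge \tfrac12$ for every integer $\tau\ge 1$ gives $\Pr_S[g_S(x,y)=1]\le 1-(1-p)^{\tau}\le \tfrac12$ whenever $\langle x,y\rangle\le\tau$, while $(1-\tfrac{1}{2\tau})^{2\tau}\le e^{-1}$ gives $\Pr_S[g_S(x,y)=1]\ge 1-(1-p)^{2\tau}\ge 1-e^{-1}>\tfrac35$ whenever $\langle x,y\rangle\ge 2\tau$. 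This is an absolute constant gap, uniform in $\tau$. Now $g_S$ is an $\OR$ of at most $d$ degree-$2$ monomials $x_iy_i$, so for the \emph{fixed constant} error $\eps_0:=\tfrac1{100}$ the Razborov--Smolensky construction yields an $\mathbb{F}_2$-polynomial $h_S$ with $\deg h_S=2\lceil\log\eps_0^{-1}\rceil=O(1)$ and $\Pr_{h_S}[h_S(x,y)\ne g_S(x,y)]\le\eps_0$. Folding together the randomness of $S$ and of $h_S$ gives a distribution $\mathcal{Q}$ over degree-$O(1)$ $\mathbb{F}_2$-polynomials with $\Pr_{Q\sim\mathcal{Q}}[Q(x,y)=1]\le\tfrac12+\eps_0$ when $\langle x,y\rangle\le\tau$ and $\ge\tfrac35-\eps_0$ when $\langle x,y\rangle\ge2\tau$.

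Finally, amplify by \emph{independent repetition capped with an exact threshold}. Let $r:=C\log\eps^{-1}$ for a sufficiently large absolute constant $C$, draw $Q_1,\dots,Q_r$ independently from $\mathcal{Q}$, and set $P:=\mathrm{Th}_{\ge 0.55r}(Q_1,\dots,Q_r)$ where $\mathrm{Th}_{\ge\theta}\colon\{0,1\}^r\to\{0,1\}$ is the exact threshold function. For any fixed $(x,y)$ in the promise, the bits $Q_1(x,y),\dots,Q_r(x,y)$ are mutually independent with the biases above, and the class means $\le(\tfrac12+\eps_0)r$ and $\ge(\tfrac35-\eps_0)r$ are $\Omega(r)$-separated from $0.55r$; so a Chernoff bound gives $P(x,y)=\fgap_{d,\tau}(x,y)$ with probability $\ge 1-\eps$ once $C$ is large enough. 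For the degree, every Boolean function on $r$ variables is an $\mathbb{F}_2$-polynomial of degree $\le r$, so $\mathrm{Th}_{\ge 0.55r}$ has degree $\le r$; substituting the degree-$O(1)$ polynomials $Q_j$ yields $\deg P\le r\cdot O(1)=O(\log\eps^{-1})=:t$. A sample of $\mathcal P$ is produced by drawing the $r$ subsamples $S_j$ and the $r$ Razborov--Smolensky polynomials $h_{S_j}$, writing $\mathrm{Th}_{\ge 0.55r}$ as an $\mathbb{F}_2$-polynomial in $2^{O(r)}=\eps^{-O(1)}$ time, and expanding the composition into monomials of degree $\le t$; since (w.l.o.g.\ $t\le 2d$) we have $\binom{d}{\le t}\ge 2^{\Omega(\min(t,d))}$ and $\eps^{-1}=2^{\Theta(t)}$, the whole cost is $\poly\!\big(\binom{d}{\le t}\big)$.

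The step I expect to be the crux is the amplification. The naive route — amplifying the constant probability gap to $1-\eps$ with an $\AND/\OR$ amplification tree — needs $\Theta(\log\eps^{-1})$ levels, and since composing polynomials multiplies their degrees this blows the degree up to $(\log\eps^{-1})^{\omega(1)}$, far past the target. The essential idea is to keep every ``inner'' $\OR$-approximation gate at \emph{constant} error, hence constant degree, and do all the amplification in a single outer layer of $\Theta(\log\eps^{-1})$ independent repetitions followed by an exact threshold — whose $\mathbb{F}_2$-degree is trivially $O(\log\eps^{-1})$ because it acts on only $O(\log\eps^{-1})$ bits. A secondary point to get right is the subsampling analysis: the rate $\tfrac1{2\tau}$ must yield a genuine constant gap for \emph{every} $\tau\ge 1$ (including the nearly degenerate $\tau=1$), which is exactly what $(1-\tfrac1{2\tau})^{\tau}\ge\tfrac12$ buys over the cruder estimate $e^{-1/2}$.
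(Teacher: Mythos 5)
Your proof is correct and matches the paper's construction in all essential respects: subsample coordinates at rate $\Theta(1/\tau)$ to convert the threshold gap into a constant probability gap, realize the subsampled test as a constant-degree $\mathbb{F}_2$-polynomial, then do all the error amplification in a single outer layer of $O(\log\eps^{-1})$ independent copies fed into an exact threshold. The only cosmetic differences are that the paper samples a sequence of $d/\tau$ coordinates with replacement and replaces the $\OR$ by a single random parity (degree $2$, error folded directly into the gap) rather than invoking Razborov--Smolensky at a fixed constant error $\eps_0$, and the threshold is placed at $0.4m$ rather than $0.55r$; neither changes the argument.
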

	
	Before proving Lemma~\ref{lm:polynomial-for-fgap}, we first show it implies Theorem~\ref{theo:fast-algo-MAMinIP} (restated below) together with Theorem~\ref{theo:AWY15}.
	
	\begin{reminder}{Theorem~\ref{theo:fast-algo-MAMinIP}.}
		There are $n^{2 - 1/O(\log c)}$ time randomized algorithms for $\MAMinIP_{n,c \log n}$ and $\MAMaxIP_{n,c \log n}$.
	\end{reminder}
	
	\begin{proofof}{Theorem~\ref{theo:fast-algo-MAMinIP}}
		We only consider $\MAMinIP$ here; the case for $\MAMaxIP$ is symmetric.
		By Lemma~\ref{lm:polynomial-for-fgap}, Theorem~\ref{theo:AWY15}, and Remark~\ref{rm:fgap-formulation}, there is a randomized algorithm $\alg$ for $\GapMinIp_{n,c\log n}$ in $n^{2 - 1/O(\log c)}$ time.
		
		\newcommand{\taumin}{\tau_{\textsf{min}}}
		
		Now we turn $\alg$ into an algorithm for $\MinIP$. We say $\alg$ outputs $1$ if it decides $\MIN(A,B) \le \tau$, and $0$ otherwise. We enumerate $\tau$ from $0$ to $d$, and let $\taumin$ be the smallest $\tau$ such that $\alg$ outputs $1$. Note that such $\tau$ exists, as $\alg$ must output $1$ when $\tau = d$.
		
		With probability at least $1 - (d+1)/ n \ge 2/3$, $\alg$ operates correctly on all enumerated $\tau$'s. We condition on that event in the following. 
		Since $\alg$ outputs $1$ with $\taumin$, we have $\MIN(A,B) < 2\taumin$ (otherwise it must output $0$). Similarly, as $\alg$ outputs $0$ with $\taumin - 1$ ($\taumin$ is the smallest), we have $\MIN(A,B) > \taumin -1$ (otherwise it must output $1$). Therefore, we can see $2 \taumin \in [\MIN(A,B),2 \cdot \MIN(A,B)]$ with probability at least $2/3$, and we obtain an $n^{1 - 1/O(\log c)}$ algorithm for $\MAMinIP$.
	\end{proofof} 

	
	Finally, we devote the rest of this section to the proof of Lemma~\ref{lm:polynomial-for-fgap}.
	
	\begin{proofof}{Lemma~\ref{lm:polynomial-for-fgap}}
		In the following, we assume $\tau \le d/2$ as the function becomes trivial otherwise.
		
		We begin by introducing some notation. Let $T$ be a sequence of integers from $[d]$, we use $x_{|T} \in \{0,1\}^{|T|}$ to denote the projection of $x$ on $T$, such that $(x_{|T})_i := x_{T_i}$ for $i \in [|T|]$. We also use the Iverson bracket notation: for a predicate $P$, $\left[ P \right]$ takes value $1$ when $P$ is true, and $0$ otherwise.
		
		\highlight{Construction of ``Micro'' Probabilistic Polynomial $\mathcal{P}_\micro$.} The first step is to construct a probabilistic polynomial $\mathcal{P}_\micro$ of degree $1$, such that for $x,y \in \{0,1\}^{d}$:
		
		\begin{itemize}
			\item If $\langle x, y \rangle \ge 2 \tau$: $\Pr_{P \sim \mathcal{P}_\micro} [ P(x,y) = 1 ] \ge c_1$ for a universal constant $c_1$.
			\item If $\langle x, y \rangle \le \tau$: $\Pr_{P \sim \mathcal{P}_\micro} [ P(x,y) = 1 ] \le c_2$ for a universal constant $c_2$.
			\item $c_1 > c_2$.
		\end{itemize}
		
		Let $k = \frac{d}{\tau}$. By our assumption, we have $k \ge 2$. Now a sample from $\mathcal{P}_\micro$ is generated as follows:	
		\begin{itemize}
			\item We pick a sequence $T$ of $k$ uniform random numbers from $[d]$ and a uniform random vector $z \in \{0,1\}^{k}$.
			
			\item We set $P(x,y) := \sum_{i=1}^{k} z_i \cdot (x_{|T})_i \cdot (y_{|T})_i$ (which is an $\mathbb{F}_2$ polynomial).
		\end{itemize}
		
		First, we make the following observations:
		
		\begin{itemize}
			\item If $\langle x, y \rangle \ge 2 \tau$:
			\begin{align*}
			\Pr_{T}[ \langle x_{|T}, y_{|T} \rangle > 0] &\ge 1 - \left(1 - \frac{2\tau}{d} \right)^{k} \\
											 &= 1 - \left(1 - \frac{2}{k} \right)^k \ge 1 - e^{-2} > 0.86.
			\end{align*}
			\item If $\langle x, y \rangle \le \tau$:
			\begin{align*}
			\Pr_{T}[ \langle x_{|T}, y_{|T} \rangle > 0] &\le 1 - \left(1 - \frac{\tau}{d} \right)^{k} \\
											 &= 1 - \left(1 - \frac{1}{k} \right)^k \le 1 - \frac{1}{4} = 0.75.
			\end{align*}
		\end{itemize}
		
		Note that when $\langle x_{|T}, y_{|T} \rangle = 0$, $P(x,y)$ is always $0$, and when $\langle x_{|T}, y_{|T} \rangle > 0$, $P(x,y) = 1$ with probability $1/2$. Therefore, we have:

		\begin{itemize}
			\item If $\langle x, y \rangle \ge 2 \tau$:
			\[
			\Pr_{P \sim \mathcal{P}_\micro} [ P(x,y) = 1 ] \ge (0.86) / 2 = 0.43.
			\]
			\item If $\langle x, y \rangle \le \tau$:
			\[
			\Pr_{P \sim \mathcal{P}_\micro} [ P(x,y) = 1 ] \le (0.75) / 2 = 0.375.
			\]
		\end{itemize}
		
		The above completes our construction of the ``micro'' probabilistic polynomial $P_\micro$.
		
		\highlight{Construction of the Probabilistic Polynomial $\mathcal{P}_{\textsf{final}}$.} Now, let $m = c_1 \cdot \log \eps^{-1}$ for a sufficiently large constant $c_1$. And let $P_{1},P_{2},\dotsc,P_{m}$ be $m$ i.i.d. samples from $\mathcal{P}_{\micro}$.
		By a simple Chernoff bound, we have:
		
		\begin{itemize}
			\item If $\langle x, y \rangle \ge 2 \tau$:
			\[
			\Pr_{P_1,P_2,\dotsc,P_m \sim \mathcal{P}_\micro} \left[ \sum_{i=1}^{m} \left[P_i(x,y) = 1\right] > 0.4 \cdot m \right] \ge 1 - \eps.
			\]
			\item If $\langle x, y \rangle \le \tau$:
			\[
			\Pr_{P_1,P_2,\dotsc,P_m \sim \mathcal{P}_\micro} \left[ \sum_{i=1}^{m} \left[P_i(x,y) = 1\right] < 0.4 \cdot m \right] \ge 1 - \eps.
			\]
		\end{itemize}
		
		\newcommand{\Pfinal}{P_{\textsf{final}}}
		
		Finally, we set
		\[
		\Pfinal(x,y) := \sum_{ S \subseteq [m], |S| > 0.4 m } \prod_{i \in S} P_i(x,y) \cdot \prod_{i \notin S} (1 - P_i(x,y)).
		\]
		
		Clearly, $\Pfinal(x,y) = 1$ if and only if $\sum_{i=1}^{m} \left[P_i(x,y) = 1\right] > 0.4 \cdot m$, and therefore its distribution $\mathcal{P}_{\textsf{final}}$ is the $\eps$-error probabilistic polynomial we want. And it is easy to see a polynomial from $\mathcal{P}_{\textsf{final}}$ can be sampled in the stated time.		
		
		\end{proofof}
	 
	 \subsection{A Fast Algorithm for Approximating ``Almost Satisfiable'' $\MAXSAT$ Instances}

	 Finally, we give an application of the algorithm for $\MAMinIP$ by proving Theorem~\ref{theo:fast-algo-MAXSAT-eps} (restated below).
	 
	 \begin{reminder}{Theorem~\ref{theo:fast-algo-MAXSAT-eps}}
	 	Let $\varphi$ be a $\MAXSAT$ instance on $n$ variables with $m$ clauses, and $\eps = 1 - \sat(\varphi)$. There is a $2^{n (1 - 1 / O(\log \eps^{-1}) )}$ time algorithm to find an assignment $x$ satisfying at least $(1 - 2 \eps) \cdot m$ clauses.
	 \end{reminder}
	 \begin{proof}
	 	We use the reduction from $\textsf{CNF-SAT}$ to $\OV$, from~\cite{Wil05}. For simplicity, suppose $2$ divides $n$. For an assignment $x$ to $\varphi$, we use $\val(\varphi,x)$ to denote the number of satisfied clauses of $\varphi$ by $x$, divided by $m$.
	 	
	 	First, we do a ``sparsification'' step: we pick $M = c_1 \cdot \eps^{-2} \cdot n$ clauses from $\varphi$ at uniformly random. Let $\psi$ be the $\MAXSAT$ instance with these randomly chosen clauses.
	 	
	 	By a standard Chernoff bound, with a sufficiently large universal constant $c_1$, for every assignment $x \in \{0,1\}^n$, we have
	 	\[
	 	\Pr\left[ | \val(\varphi,x) - \val(\psi,x) | \le \eps/3 \right] \le 1/2^{2n}.
	 	\]
	 	
	 	Therefore, by a union bound, with probability at least $1 - 1/2^{n}$, for all $x \in \{0,1\}^n$ we have $| \val(\varphi,x) - \val(\psi,x) | \le \eps/3$, and it follows that $|\sat(\varphi) - \sat(\psi)| \le \eps / 3$.
	 	So it suffices to consider $\psi$ now.
	 	
	 	Next, we split these $n$ variables into two groups 
	 	\[ 
	 	x_L := \{x_1,\dotsc,x_{n/2} \} \quad\text{ and }\quad x_R := \{ x_{n/2+1},\dotsc,x_{n} \}.
	 	\]		
	 	Let $\cla_1,\cla_2,\dotsc,\cla_M$ be all clauses in $\psi$. For each $a \in \{0,1\}^{n/2}$, interpreted as an assignment to variables in $x_L$, we construct a vector $u_a \in \{0,1\}^{M}$, such that $(u_a)_i = 1$ iff $\cla_i$ is not satisfied when setting variables in $x_L$ according to $a$. Similarly, for each $b \in \{0,1\}^{n/2}$, we interpret it as an assignment to variables in $x_R$, and construct a vector $v_b \in \{0,1\}^{M}$ in the same way.
	 	
	 	Next, for $a,b \in \{0,1\}^{n/2}$, $\langle (u_a)_i, (v_b)_i \rangle = 1$ if and only if $\cla_i$ is not satisfied by the joint assignment $(a,b)$. Therefore, $\langle u_a, v_b \rangle$ is the number of clauses that are not satisfied by the joint assignment $(a,b)$.
	 	
	 	Let $A$ be the set of all $u_a$'s for $a \in \{0,1\}^{n/2}$, and $B$ be the set of all $v_b$'s for $b \in \{0,1\}^{n/2}$. By Theorem~\ref{theo:fast-algo-MAMinIP}, there is an algorithm which finds a $(u_a,v_b) \in A \times B$ such that $ \langle u_a, v_b \rangle \in [\MIN(A,B),1.1 \cdot \MIN(A,B)]$.
	 	
	 	From the definition, we have $\MIN(A,B) := (1 - \sat(\psi)) \cdot M \le \frac{4}{3} \cdot \eps \cdot M$ (recall that $\eps = 1 - \sat(\varphi)$). Therefore, we have $\langle u_a, v_b \rangle \le 1.1 \cdot \frac{4}{3} \cdot \eps \cdot M \le 1.5 \cdot \eps \cdot M $.
	 	
	 	Let $x$ be the joint assignment $(a,b)$. We have $\val(\psi,x) \ge (1 - 1.5 \eps)$. Since $| \val(\psi,x) - \val(\varphi,x)| \le \eps/3$, $\val(\varphi,x) \ge (1 - 2\eps)$, which means $x$ is a valid answer.
	 	
	 	Finally, as $M = O(\eps^{-2}) \cdot n$, the algorithm runs in $\left( 2^{n/2} \right)^{2 - 1/O(\log \eps^{-1})} = 2^{n \cdot (1-1/O(\log \eps^{-1}))}$ time, which completes the proof.
	 \end{proof}
	
	\section*{Acknowledgments}
	We thank Virginia Vassilevska Williams for many comments on an early draft of this paper. We are grateful to Josh Alman, Jiawei Gao, Ofer Grossman, Kaifeng Lyu, Karthik {C. S.}, Ruosong Wang, Virginia Vassilevska Williams, Grigory Yaroslavtsev and Peilin Zhong for helpful discussions during the work. We are especially grateful to Aviad Rubinstein for several helpful discussions which inspired the discovery of the equivalence between $\BCP$ and $\OV$.
	
	\bibliographystyle{alpha}
	
	\newcommand{\etalchar}[1]{$^{#1}$}

	\appendix		
	\section{Missing Proofs}
	Here we give some missing proofs in the paper. 
	
	\begin{proofof}{Lemma~\ref{lm:bound}}
		Let $X = \sum_{i=1}^{cm} X_i$ and $\mu = \Ex[X] = \eps \cdot cm$. Set $\delta = \eps^{-1} / 3$. By the multiplicative Chernoff bound, we have
		\[
		\Pr[X > (1 + \delta) \cdot \mu] < \left( \frac{e^\delta}{(1+\delta)^{1+\delta}} \right)^{\mu}.
		\]
		
		Note that $(1 +\delta) \cdot \mu = (1 + \eps^{-1}/3) \cdot \eps \cdot cm < \frac{1}{2} \cdot c m$. Also, we have
		\begin{align*}
		\left( \frac{e^\delta}{(1+\delta)^{1+\delta}} \right)^{\mu} &= e^{-\mu \cdot \left[ (1+\delta) \ln(1+\delta) - \delta \right]}\\
		&\le e^{ - \mu \cdot \left[ \delta \ln \delta - \delta \right]}\\
		&\le e^{ - \eps \cdot cm \cdot \left[ \eps^{-1} / 3 \ln(\eps^{-1}/3) \right] / 2}\\
		&\le e^{ -\eps \cdot cm \cdot \left[ \eps^{-1} \ln(\eps^{-1}) \right] / 12}\\
		&\le e^{ - \ln \eps^{-1} \cdot cm / 12} = \eps^{-cm/12}.
		\end{align*}
		
		Therefore, we can set $c = 12$, and the proof is completed.
	\end{proofof}
	
	\begin{proofof}{Lemma~\ref{lm:MinIP-to-MaxIP}}
		We define two functions $\varphi_x,\varphi_y : \{0,1\} \to \{0,1\}^2$ such that:
		\[
		\varphi_x(0) := (1,0),\quad \varphi_x(1) := (0,1),\quad \varphi_y(0) := (1,1),\quad \varphi_y(1) := (1,0).
		\]
		
		It is easy to check that for $a,b \in \{0,1\}$, $a \cdot b = 1 -  \langle \varphi_x(a), \varphi_y(b) \rangle$. Then, for $x,y \in \{0,1\}^{d}$, we define $\psirevx(x) \in \{0,1\}^{2d}$ as the concatenation of $\varphi_x(x_i)$ for each $i \in [d]$, and similarly $\psirevy(y) \in \{0,1\}^{2d}$ as the concatenation of $\varphi_y(y_i)$ for each $i \in [d]$.
		
		Then we can see $ \langle \psirevx(x), \psirevy(y) \rangle = \sum_{i=1}^{d} \langle \varphi_x(x_i), \varphi_y(y_i) \rangle =  d -  \langle x, y \rangle$.
	\end{proofof}

	\begin{proofof}{Lemma~\ref{lm:ExactIP-to-MinIP}}		
		We remark the reduction here is essentially the same as the trick used in~\cite{Wil18}. For a vector $v \in \{0,1\}^*$, we use $v^{\otimes k}$ to denote the concatenation of $k$ copies of $v$. 
		
		Consider the following polynomial $P(x,y) := ( \langle x, y \rangle - m)^2$, we have
		\[
		P(x,y) = \langle x, y \rangle^2 - 2  m \langle x, y \rangle + m^2 = \langle x, y \rangle^2 + 2  m (d -  \langle x, y \rangle ) + m^2 - 2dm.
		\]
		
		For convenience, for a vector $z \in \{0,1\}^{d^2}$, we use $z_{(i,j)}$ to denote the $(i-1) \cdot d + j$-th coordinate of $z$. For $x,y \in \{0,1\}^{d}$, we construct $\WT{x},\WT{y} \in \{0,1\}^{d^2}$ such that $\WT{x}_{(i,j)} = x_{i} \cdot x_{j}$ and $\WT{y}_{(i,j)} = y_i \cdot y_j$. Then we can see 
		$$ 
		\langle \WT{x}, \WT{y} \rangle = \sum_{i=1}^{d}\sum_{j=1}^d x_i x_j \cdot y_i y_j = \left( \sum_{i=1}^{d} x_i y_i \right)^2 = \langle x, y \rangle^2.
		$$  Let $\psirevx$ and $\psirevy$ be the two functions from Lemma~\ref{lm:MinIP-to-MaxIP}. For $x,y \in \{0,1\}^d$, we define
		\[
		\varphi^x_{d,m}(x) := (\WT{x},\psirevx(x)^{\otimes (2m)}) 
		\qquad\text{ and }\qquad 
		\varphi^y_{d,m}(y) := (\WT{y},\psirevy(y)^{\otimes (2m)}).
		\]
		
		Then we have $ \langle \varphi^x_{d,m}(x), \varphi^y_{d,m}(y) \rangle = \langle x, y \rangle^2 + 2  m (d -  \langle x, y \rangle ) = P(x,y) + 2dm - m^2$. And we set $M_{d,m} = 2dm - m^2$.
		
		Now, if $ \langle x, y \rangle = m$, we have $P(x,y) = 0$, and therefore $ \langle \varphi^x_{d,m}(x), \varphi^y_{d,m}(y) \rangle = M_{d,m}$. Otherwise, $ \langle x, y \rangle \ne m$ and we have $P(x,y) > 0$, and hence $ \langle \varphi^x_{d,m}(x), \varphi^y_{d,m}(y) \rangle > M_{d,m}$. Note that $\varphi^x_{d,m}(x),\varphi^y_{d,m}(y) \in \{0,1\}^{d^2 + 4dm}$, we add $5d^2 - M_{d,m}$ dummy ones to the end of $\varphi^x_{d,m}(x)$ and $\varphi^y_{d,m}(y)$ and set $M_d = 5d^2$, which completes the proof.
	\end{proofof}

	\begin{proofof}{Lemma~\ref{lm:encoding-trick}}		
		We begin by the construction of two embeddings $\psi_x,\psi_y : \{0,1,\dotsc, r\} \to \{0,1\}^{r^2}$ such that for any $x,y \in \{0,1,\dotsc,r \}$, $ \langle \psi_x(x), \psi_y(y) \rangle = x \cdot y$.
		
		For convenience, in the following we use $z_{(i,j)}$ to denote the $(i-1) \cdot r + j$-th coordinate of $z$. Then we define $\psi_x(x)_{(i,j)}$ as $1$ when $i \le x$, and $0$ otherwise; similarly, we define $\psi_y(y)_{(i,j)}$ as $1$ when $j \le y$, and $0$ otherwise. We have
		
		\[
		 \langle \psi_x(x), \psi_y(y) \rangle = \sum_{i=1}^r\sum_{j=1}^r \psi_x(x)_{(i,j)} \cdot \psi_y(y)_{(i,j)} = \sum_{i=1}^{x}\sum_{j=1}^{y} 1 \cdot 1 =  \langle x, y \rangle.
		\]
		
		Slightly abusing notations, for $x,y \in \{0,1,\dotsc,r\}^{d}$, we define $\psi_x(x)$ and $\psi_y(y)$ as the concatenation of $\psi_x$ or $\psi_y$ applying on all coordinates of $x$ and $y$. Then we have $\psi_x(x),\psi_y(y) \in \{0,1\}^{d r^2}$, and $ \langle \psi_x(x), \psi_y(y) \rangle =  \langle x, y \rangle$. Then applying Lemma~\ref{lm:ExactIP-to-MinIP} and Lemma~\ref{lm:MinIP-to-MaxIP} completes the proof.
	\end{proofof}
		
	\section{More Applications of the $\Sigma_2^\cc$ Reduction Framework}
\label{app:moreapp}

To demonstrate the potential power of our $\Sigma_2^\cc$ framework. In the following we discuss some of its applications other than establishing our equivalence class. The first one is a very simple reduction from Hopcroft's problem in \emph{constant dimensions} to $\OV$ in \emph{polylogarithmic dimensions}. And the second one is a reduction from $\textsf{3-SUM}$ to 3-$\OV$.

\subsection{Integer Inner Product and Hopcroft's Problem}

The Hopcroft's problem is defined as follows: you are given two sets $A,B$ of $n$ vectors from $\mathbb{Z}^{d}$, and want to determine whether there is an $(a,b) \in A \times B$ such that $\langle a, b \rangle = 0$. In other words, it is the same as $\OV$ except for now vectors consist of integer entries. 

We use $\Hopcroft_{n,d}$ to denote this problem in $d$ dimensions for simplicity, and assume the integers in $\Hopcroft_{n,d}$ belong to $[-n^c,n^c]$ for a constant $c$, which is the most interesting case. Now we formally state our reduction.

\begin{theo}\label{theo:Hopcroft-to-ov}
	Let $c,d$ be two constants, a $\Hopcroft_{n,d}$ instance $I$ with entries in $[-n^c,n^c]$ can be reduced to an $\OV_{n,O(\log n)^{d+1}}$ instance $J$ in $n^{1+o(1)}$ time, such that $I$ is a yes instance if and only if $J$ is a yes instance.
\end{theo}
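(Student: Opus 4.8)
\begin{proofof}{Theorem~\ref{theo:Hopcroft-to-ov}}[Proof sketch / plan]
The plan is to instantiate the $\Sigma_2^\cc$ reduction framework (Theorem~\ref{theo:Sigma2-to-OV}) with a \emph{purely universal} protocol (no Merlin, $m_1 = 0$), so that the framework outputs a single $\OV$ instance. First, encode each integer vector in $\{-n^c,\dots,n^c\}^d$ as a Boolean string of length $D_0 = O(cd \log n) = O(\log n)$ (for constant $c,d$), so that $\Hopcroft_{n,d}$ is exactly $F\SATPAIR_n$ for the Boolean function $F : \{0,1\}^{D_0}\times\{0,1\}^{D_0}\to\{0,1\}$ with $F(\bar a,\bar b) = [\langle a, b\rangle = 0]$ (inner product over $\mathbb{Z}$). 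By Theorem~\ref{theo:Sigma2-to-OV} it then suffices to give a computationally-efficient $\Sigma_2^\cc$ protocol for $F$ with $m_1 = 0$ and $m_2 + \ell \le (d+1)(\log\log n + O(1))$.

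The protocol: fix $\mathcal{P}$ to be the set of all primes at most $x := \Theta_c(\log n)$, where $x$ is chosen (using Chebyshev's estimate $\prod_{p\le x} p = e^{\Theta(x)}$) so that $\prod_{p\in\mathcal{P}} p > 2 d n^{2c}$; then every $p \in \mathcal{P}$ has $p = O(\log n)$ and $|\mathcal{P}| = O(\log n)$. Since $|\langle a,b\rangle| \le d n^{2c} < \prod_{p\in\mathcal P}p$, the Chinese Remainder Theorem gives that $\langle a,b\rangle = 0$ iff $\langle a,b\rangle \equiv 0 \pmod{p}$ for every $p\in\mathcal{P}$. There is no Merlin. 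Megan (the universal prover) sends the index of a prime $p\in\mathcal{P}$, using $m_2 = \lceil \log|\mathcal{P}|\rceil = \log\log n + O(1)$ bits, which is meant to witness $\langle a,b\rangle\not\equiv 0\pmod p$. Both Alice and Bob receive $p$; Bob then sends $(b_1\bmod p,\dots,b_d\bmod p)$ to Alice, costing $\ell = d\lceil\log p\rceil = d(\log\log n + O(1))$ bits, and Alice — who computes $a_i\bmod p$ locally — forms $s := \sum_{i=1}^d (a_i\bmod p)(b_i\bmod p)\bmod p$. Alice accepts iff $s = 0$ (i.e.\ Megan's claimed witness fails), and she also accepts if Megan's message does not decode to a valid index. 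Correctness: if $\langle a,b\rangle = 0$, then $s = 0$ for every $p\in\mathcal{P}$, so Alice always accepts and $F(\bar a,\bar b) = 1$; if $\langle a,b\rangle\ne 0$, then by CRT some $p\in\mathcal{P}$ satisfies $\langle a,b\rangle\not\equiv 0\pmod p$, Megan sends that $p$, $s\ne 0$, Alice rejects, and $F(\bar a,\bar b)=0$. Alice's and Bob's response functions are clearly computable in $\poly(D_0) = \poly(\log n)$ time, so the protocol is computationally efficient.

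Applying Theorem~\ref{theo:Sigma2-to-OV} with $m_1 = 0$, $m_2 = \lceil\log|\mathcal{P}|\rceil$, and $\ell = d\lceil\log(\max_{p\in\mathcal{P}}p)\rceil$ produces a single $\OV$ instance $J = J_1$ of dimension $2^{m_2+\ell} \le (2\max_{p\in\mathcal{P}}p)^{\,d+1} = O(\log n)^{d+1}$, and $J$ is a yes instance iff the original $\Hopcroft_{n,d}$ instance $I$ is. The running time is $n\cdot 2^{O(m_1+m_2+\ell)}\cdot\poly(D_0) = n\cdot (\log n)^{O(d)}\cdot\poly(\log n) = n^{1+o(1)}$ since $c$ and $d$ are constants; precomputing $\mathcal{P}$ costs only $\poly(\log n)$ time by sieving.

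The only real subtlety — and the ``hard part,'' such as it is — is keeping the parameters tight enough to land on the exponent $d+1$ rather than merely $O(d)$: this forces the universal prover to send nothing more than a prime index ($\approx\log\log n$ bits) and forces the Alice--Bob phase to be just $d$ residues modulo an $O(\log n)$-size prime ($\approx d\log\log n$ bits total, with Alice sending nothing). Everything else is a direct instantiation of the framework. (We note in passing that the identical protocol, with Megan instead witnessing $\langle a,b\rangle\not\equiv m\pmod p$, gives the same kind of reduction for the integer-entry version of $\ExactIP$ in constant dimensions.)
\end{proofof}
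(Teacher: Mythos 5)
Your proof is correct and matches the paper's approach almost exactly: the paper likewise instantiates Theorem~\ref{theo:Sigma2-to-OV} with a Merlin-free ($m_1=0$) protocol for $\ZIP_d$ in which Megan names a prime $p=O(\log n)$ from a CRT family, Bob sends the $d$ residues $y_j\bmod p$, and Alice checks $\langle x,y\rangle\equiv 0\pmod p$, yielding $m_2+\ell=(d+1)\log\log n+O(1)$ and hence a single $\OV$ instance of dimension $O(\log n)^{d+1}$. The only cosmetic differences are that the paper takes "the first $t$ primes" rather than "all primes $\le x$," and omits the (standard) convention that Alice accepts on malformed Megan messages.
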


An immediate corollary is that if moderate dimensional $\OV$ is in truly subquadratic time, then $\Hopcroft_{n,d}$ is also in truly subquadratic time for all constant $d$.

\newcommand{\ZIP}{\textsf{Z-IP}}

Let $d$ be an integer, we define $\ZIP_{d} : \mathbb{Z}^{d} \times \mathbb{Z}^{d} \to \{0,1\}$ as the function that checks whether two $d$ dimensional vectors in $\mathbb{Z}^{d}$ are orthogonal. Note that $\Hopcroft_{n,d}$ is equivalent to $\ZIP_{d}\SATPAIR_{n}$.

Theorem~\ref{theo:Hopcroft-to-ov} is just a direct corollary of the following fast $\Sigma_2$ communication protocol for $\ZIP_{d}$ (it is in fact a $\coNP$ communication protocol, as Merlin sends nothing) and Theorem~\ref{theo:Sigma2-to-OV}.

\begin{lemma}\label{lm:com-protocol}
	Let $c,d$ be two constants, there is a $\Sigma_2^\cc$ protocol for $\ZIP_{d}$ with entries in $[-n^{c},n^{c}]$, in which Merlin sends nothing, Megan sends $\log\log(n) + O(1)$ bits and Alice and Bob communicate $d \cdot \log\log(n) + O(d)$ bits.
\end{lemma}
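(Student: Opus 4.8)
The plan is to exhibit a deterministic protocol of the required shape — which will in fact be a $\coNP^\cc$ protocol, since Merlin stays silent — and then feed it into Theorem~\ref{theo:Sigma2-to-OV}. The underlying idea is ``prime fingerprinting'': writing $N := \langle x, y\rangle$, the bound $x,y \in [-n^c,n^c]^d$ gives $|N| \le M := d\, n^{2c}$, so $\log_2 M = O(\log n)$ (as $c,d$ are constants). Let $p_1 < p_2 < \cdots$ enumerate the primes and let $k$ be least with $\prod_{i \le k} p_i > M$. By the elementary Chebyshev estimate $\theta(x) = \sum_{p \le x} \ln p = \Theta(x)$ this forces $p_k = O(\log M) = O(\log n)$, and hence also $k \le p_k = O(\log n)$; in particular every $p_j$ with $j \le k$ has $\lceil \log_2 p_j \rceil \le \log\log n + O(1)$, and an index $j \in \{1,\dots,k\}$ is described by $\lceil \log_2 k \rceil \le \log\log n + O(1)$ bits.

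The protocol is then: Merlin sends nothing; Megan sends an index $j \in \{1,\dots,k\}$ (meant to witness $N \not\equiv 0 \pmod{p_j}$); Bob computes the residues $y_i \bmod p_j$ for $i\in[d]$ and sends the vector $(y_1 \bmod p_j, \dots, y_d \bmod p_j)$ to Alice, using $d \lceil \log_2 p_j \rceil \le d\log\log n + O(d)$ bits; finally Alice computes $s := \sum_{i=1}^d (x_i \bmod p_j)(y_i \bmod p_j) \bmod p_j$ and accepts iff $s = 0$. All steps run in time polynomial in the input length, so the protocol is computationally efficient and meets the claimed message-length bounds.

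I would then check correctness against the $\Sigma_2^\cc$ acceptance rule (there is a single empty Merlin string, so the condition reduces to ``for every Megan message, Alice accepts''). If $N = 0$, then $N \equiv 0 \pmod{p_j}$ for every $j$, so $s \equiv N \equiv 0$ and Alice accepts regardless of Megan's message. If $N \ne 0$, then $N$ cannot be divisible by all of $p_1,\dots,p_k$ — otherwise $|N| \ge \prod_{i\le k} p_i > M$ — so some $j^\star$ satisfies $N \not\equiv 0 \pmod{p_{j^\star}}$, and when Megan plays $j^\star$, Alice gets $s \equiv N \not\equiv 0$ and rejects. Hence the protocol accepts-for-all-Megan exactly when $N = 0$, i.e.\ it computes $\ZIP_{d}$. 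Since $\Hopcroft_{n,d} = \ZIP_{d}\SATPAIR_{n}$, plugging $m_1 = 0$, $m_2 = \log\log n + O(1)$, $\ell = d\log\log n + O(d)$ into Theorem~\ref{theo:Sigma2-to-OV} yields a single $\OV_{n,2^{m_2+\ell}}$ instance with $2^{m_2+\ell} = O(\log n)^{d+1}$, produced in $n \cdot 2^{O(d\log\log n)} \cdot \poly(d) = n^{1+o(1)}$ time — this gives Lemma~\ref{lm:com-protocol}, and combined with Theorem~\ref{theo:Sigma2-to-OV} it gives Theorem~\ref{theo:Hopcroft-to-ov}.

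The one genuinely delicate point is the prime bookkeeping: we need the chosen moduli to have product exceeding $M$ (so that a nonzero $N$ is guaranteed to survive some modulus — this is exactly what makes Megan's witness sound) while keeping each modulus of size $O(\log n)$ (so that a residue, and the index itself, cost only $\log\log n + O(1)$ bits). The first $k$ primes together with $\theta(x) = \Theta(x)$ deliver both simultaneously; a reader wishing to avoid any prime-counting input can instead use the crude bound that the product of the first $t$ primes exceeds $2^t$, which forces moduli of size $O(\log n \log\log n)$ and an extra additive $O(\log\log\log n)$ per residue — still comfortably enough for the corollary that moderate-dimensional $\OV$ being in truly subquadratic time implies the same for $\Hopcroft_{n,d}$ for every constant $d$.
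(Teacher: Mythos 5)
Your proof is correct and takes essentially the same route as the paper: both use Chinese-remainder fingerprinting over the first $t$ primes whose product exceeds $d\cdot n^{2c}$, bound $p_t = O(\log n)$ via a Chebyshev-type estimate on the primorial, have Megan name a prime index and Bob send $y$'s coordinate-wise residues, and let Alice accept iff $\langle x,y\rangle \equiv 0$ modulo the chosen prime. The extra remark about avoiding the prime-counting input via the crude primorial bound is a nice fallback but not needed, and the plug-in to Theorem~\ref{theo:Sigma2-to-OV} matches the paper's derivation of Theorem~\ref{theo:Hopcroft-to-ov}.
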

\begin{proof}
	Let $x,y$ be two vectors from $[-n^c,n^c]^{d}$, we have $|\langle x, y \rangle| \le d \cdot n^{2c}$.
	
	Let $t$ be the smallest number such that the first $t$ primes $p_1,p_2,\dotsc,p_{t}$ satisfy $\prod_{i=1}^{t} p_i > d \cdot n^{2c}$. We first bound $t$ and the largest prime $p_t$. Clearly, $t \le \log (d \cdot n^{2c}) = O(\log n)$. Recall that $n\#$ denotes the product of all primes less than or equal to $n$ (the primordial function), and we have $n\# = e^{(1+o(1)) n}$. By the definition of $t$, it follows that $(p_t-1)\# = e^{(1+o(1)) \cdot (p_t-1)} \le d \cdot n^{2c} $ and $p_t = O(\log n)$.

	From our choice of $t$, we have $\langle x, y \rangle = 0$ if and only if $\langle x, y \rangle \equiv 0 \pmod{p_i}$ for all $i \in [t]$. So in the protocol, Merlin sends nothing. Megan sends an index $i \in [t]$, which takes $\log t = \log \log n + O(1)$ bits. After that, for each $j \in [d]$, Bob sends $y_j \bmod p_i$ to Alice, which takes $d \cdot \log p_i \le d \cdot \log \log n + O(d)$ bits, and Alice accepts if and only if $\langle x, y \rangle \equiv 0 \pmod{p_i}$. 
\end{proof}

\subsection{Sum-Check and 3-Sum}

\newcommand{\SATTRIPLE}{\textsf{-Satisfying-Triple}}

Next we discuss a reduction from $3$-SUM to $3$-$\OV$. $3$-$\OV$ is a generalized version of $\OV$, in which you are given three sets $A,B,C$, each of $n$ vectors from $\{0,1\}^{d}$, and want to determine whether there is an $(a,b,c) \in A \times B \times C$ such that $\sum_{i=1}^{d} a_i \cdot b_i \cdot c_i = 0$ (the generalized inner product of $a,b$, and $c$ is zero). We use $\thOV_{n,d}$ to denote the $\thOV$ problem with sets of $n$ vectors of $d$ dimensions.

\begin{theo}\label{theo:3-sum-to-3-ov}
	If $3$-$\OV$ is in truly-subquadratic time\footnote{This means there is an $\eps > 0$ such that for all constants $c$, $\thOV_{n,c\log n}$ can be solved in $n^{2-\epsilon}$ time.}, then so is $3$-$\textsf{SUM}$.
\end{theo}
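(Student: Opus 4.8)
The plan is to lift the $\Sigma_2^\cc$ reduction framework of Theorem~\ref{theo:Sigma2-to-OV} from two players (and $\OV$) to three players (and $\thOV$), and then to produce, for a sum-check predicate that encodes $3$-$\textsf{SUM}$, a parameterized three-party $\Sigma_2^\cc$ protocol whose blow-ups are tuned so that the resulting $\thOV$ instances have dimension $c\log n$ for a constant $c$. First I would set up the three-party analogue: for $F:\mathcal{X}^3\to\{0,1\}$, let $F\SATTRIPLE_n$ ask whether some $(x,y,z)\in A\times B\times C$ (each of size $n$) has $F(x,y,z)=1$, so that $\thOV_{n,d}$ is exactly $F\SATTRIPLE_n$ for the ``generalized inner product is zero'' predicate. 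The claim is that a computationally-efficient three-party $\Sigma_2^\cc$ protocol---Alice, Bob, Charlie holding $x,y,z$; provers Merlin and Megan sending $m_1,m_2$ bits; the players then writing $\ell$ bits on a blackboard, Alice deciding; $F(x,y,z)=1$ iff $\exists$ Merlin $\forall$ Megan Alice accepts---yields a reduction from $F\SATTRIPLE_n$ to $2^{m_1}$ instances of $\thOV_{n,2^{m_2+\ell}}$, in time $n\cdot 2^{O(m_1+m_2+\ell)}\cdot\poly(d)$. The proof would copy Theorem~\ref{theo:Sigma2-to-OV}: fixing prover messages $(a,b)$ and enumerating the $\le 2^\ell$ blackboard transcripts, attach to each player a vector whose $i$-th coordinate is $1$ iff transcript $w_i$ is consistent with that player's input and makes Alice reject; since $w_i$ is a function of $(x,y,z,a,b)$, exactly one transcript is consistent with all three inputs, so the generalized inner product of the three vectors equals the indicator that Alice rejects $(a,b)$. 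Concatenating over Megan's messages $b$ turns this into ``Alice accepts for every $b$'', and enumerating Merlin's $a$ gives the $2^{m_1}$ instances.

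Next I would set up the predicate. In the standard (three-set) formulation of $3$-$\textsf{SUM}$ we are given $A,B,C\subseteq[-U,U]$ with $U=n^{O(1)}$ and ask whether $a+b+c=0$ for some $(a,b,c)$; shifting each coordinate by $U$ turns this into $F_{3U}\SATTRIPLE_n$, where $F_{3U}(\alpha,\beta,\gamma)=[\alpha+\beta+\gamma=3U]$ on nonnegative integers of $B_0=O(\log n)$ bits. Fix a block width $w$ (the tuning parameter) and write each input in base $2^w$ with $k=O(B_0/w)$ digits. Then $\alpha+\beta+\gamma=3U$ holds iff there is a carry string $\delta\in\{0,1,2,3\}^{k+1}$ with $\delta_0=\delta_{k+1}=0$ such that for every digit $j$, $\alpha_j+\beta_j+\gamma_j+\delta_j=(3U)_j+2^w\delta_{j+1}$. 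This is directly a $\Sigma_2^\cc$ protocol: Merlin sends $\delta$ (so $m_1=2(k+1)=O(B_0/w)$, and Alice rejects at once if the two endpoint constraints on $\delta$ fail, which does not depend on any input); Megan sends a digit index $j$ (so $m_2=O(\log(B_0/w))$); then Bob and Charlie send $\beta_j$ and $\gamma_j$ to Alice (so $\ell=2w$), and Alice accepts iff the digit-$j$ equation holds. Correctness is immediate---the honest carry string works for every $j$ in the yes-case, and every endpoint-valid $\delta$ fails some digit equation in the no-case---and the protocol is clearly computationally efficient.

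Finally I would balance the parameters. By the three-party framework, $F_{3U}\SATTRIPLE_n$ reduces to $2^{m_1}=2^{O(B_0/w)}=n^{O(1/w)}$ instances of $\thOV_{n,D}$ with $D=2^{m_2+\ell}=O((B_0/w)\cdot 4^w)$. Taking $w$ to be a constant makes $D=c_w\log n$ for a constant $c_w$, while the number of instances is $n^{\eps_w}$ with $\eps_w=O(1/w)$. Given the hypothesis---a fixed $\eps>0$ with $\thOV_{n,c\log n}$ solvable in $n^{2-\eps}$ time for every constant $c$---I would pick $w$ a large enough constant that $\eps_w<\eps$; then running the $\thOV$ algorithm on all instances costs $n^{\eps_w}\cdot n^{2-\eps}=n^{2-(\eps-\eps_w)}$, and the reduction itself costs $n^{1+O(1/w)}\cdot\polylog(n)$, so $3$-$\textsf{SUM}$ is in truly subquadratic time. (The equivalence of the three-set and single-set versions of $3$-$\textsf{SUM}$, and of the various polynomial value ranges, is standard and would be invoked but not reproved.)

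The hard part will be the parameter trade-off just described: the carry chain forces Merlin's message (hence the instance count $2^{m_1}$) to have length $\Theta(B_0/w)$, pushing toward large $w$, whereas the blackboard exchange of two base-$2^w$ digits multiplies the $\thOV$ dimension by $4^w$, pushing toward small $w$. The point that makes it work is that $w$ need only be a large \emph{constant}: $1/w$ can be driven below the fixed speed-up $\eps$, while the $4^w$ factor is harmless because it only affects the constant in front of $\log n$ in the dimension, and the $\thOV$ hypothesis is quantified over all such constants. The routine computations I am deferring are the verification of the carry-string characterization of $\alpha+\beta+\gamma=3U$ (including the $\{0,1,2,3\}$ bound on carries) and the bookkeeping that the three-party framework's correctness argument goes through verbatim.
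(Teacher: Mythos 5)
Your proposal is correct and follows essentially the same route as the paper: the paper also (i) states a three-party analogue of the $\Sigma_2^\cc$-to-$\OV$ reduction (Theorem~\ref{theo:Sigma2-to-3OV}), (ii) uses a base-$2^T$ carry-chain protocol where Merlin guesses the carry sequence, Megan names a digit, and Bob/Charlie ship their $T$-bit digits to Alice (Theorem~\ref{theo:protocol-for-checking-zero}), and (iii) picks $T$ to be a large enough constant to balance the $n^{O(1/T)}$ instance count against the $2^{O(T)}\log n$ dimension. The only cosmetic difference is that the paper's carry alphabet is $\{0,1,2\}$ (the tight bound for summing three digits) rather than your $\{0,1,2,3\}$, which is harmless.
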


We remark that this reduction is not optimal, as it is conjectured that $3$-$\OV$ requires $n^{3 - o(1)}$ time (also implied by $\SETH$). We include it here only as an illustration of the applicability of our reduction framework. It would be very interesting to improve it to a reduction from $3$-SUM to $\OV$\footnote{In~\cite{CarmosinoGIMPS16}, it is shown that under the $\textsf{NSETH}$ (which is controversial, see~\cite{Williams16_MA-SETH}), there is no fine-grained reduction from $\OV$ to $3$-SUM. But there is no formal evidence against the other direction.}.

Note that $3$-$\OV$ is actually a \emph{Satisfying-Triple} problem\footnote{It is also called a \emph{Product Space Problem} in~\cite{karthik2017parameterized}.}, and Theorem~\ref{theo:Sigma2-to-OV} only works for Satisfying-Pair problems. Still, we can generalize Theorem~\ref{theo:Sigma2-to-OV} easily to get the same connection between a 3-party $\Sigma_2$ communication protocol and a reduction from a satisfying-triple problem to $3$-$\OV$.

Let $F : \left(\{0,1\}^{d}\right)^{3} \to \{0,1\}$ be a function. $F\SATTRIPLE_{n}$ is the problem that you are given sets $A,B,C$ of $n$ vectors from $\{0,1\}^d$, and want to determine whether there is an $(a,b,c) \in A \times B \times C$ such that $F(a,b,c) = 1$. A 3-party $\Sigma_2$ communication protocol can be defined similarly as in Definition~\ref{defi:Sigma-2-communication-protocol} with the third player named Charles (we omit it here). We have the following analogous theorem of Theorem~\ref{theo:Sigma2-to-OV}.

\begin{theo}\label{theo:Sigma2-to-3OV}
	Let $F : \left(\{0,1\}^{d}\right)^{3} \to \{0,1\}$ and $n$ be an integer, suppose $F$ admits a computationally-efficient $\Sigma_2^\cc$ protocol. In which Merlin sends $m_1$ bits, Megan sends $m_2$ bits, Alice, Bob, and Charles communicate $\ell$ bits.
	
	Then there is a reduction from an $F\SATTRIPLE_{n}$ instance $I$ to $2^{m_1}$ $\thOV_{n,2^{(m_2+\ell)}}$ instances $J_1,J_2,\dotsc,J_{2^{m_1}}$, such that $I$ is a yes instance if and only if one of the reduced instances is a yes instance. And the reduction takes $O(n \cdot 2^{O(m_1 + m_2 + \ell)} \cdot \poly(d))$ time.
\end{theo}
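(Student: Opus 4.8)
The plan is to mimic the proof of Theorem~\ref{theo:Sigma2-to-OV} almost verbatim, replacing the bilinear orthogonality test by the trilinear (generalized) inner product underlying $\thOV$. First, fix Merlin's string $a \in \{0,1\}^{m_1}$ and Megan's string $b \in \{0,1\}^{m_2}$, and enumerate all $2^\ell$ possible transcripts $w_1,\dots,w_{2^\ell}$ of the $\ell$-bit conversation among Alice, Bob, and Charles. For inputs $x, y, z$ (held by Alice, Bob, Charles), I would define three Boolean vectors $R_x(a,b), R_y(a,b), R_z(a,b) \in \{0,1\}^{2^\ell}$ as follows: $R_x(a,b)_i = 1$ iff $w_i$ is consistent with Alice's input $x$ (given $a,b$) and $w_i$ causes Alice to reject; $R_y(a,b)_i = 1$ iff $w_i$ is consistent with Bob's input $y$; $R_z(a,b)_i = 1$ iff $w_i$ is consistent with Charles's input $z$. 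Since the transcript is a deterministic function of $(x,y,z)$ once $(a,b)$ is fixed, exactly one index $i^\star$ has $w_{i^\star}$ consistent with all three inputs at once, so $\sum_{i} R_x(a,b)_i \cdot R_y(a,b)_i \cdot R_z(a,b)_i = 0$ iff Alice accepts $(a,b)$.

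Then, given an $F\SATTRIPLE_n$ instance with sets $A,B,C$, for each fixed $a$ I would concatenate $R_x(a,b)$ over all $b \in \{0,1\}^{m_2}$ to get a vector $R_x(a,\cdot) \in \{0,1\}^{2^{m_2+\ell}}$ (and likewise $R_y(a,\cdot), R_z(a,\cdot)$), and set $A_a = \{R_x(a,\cdot) : x \in A\}$, $B_a = \{R_y(a,\cdot) : y \in B\}$, $C_a = \{R_z(a,\cdot) : z \in C\}$, forming the $\thOV_{n,2^{m_2+\ell}}$ instance $J_a$. The claim to verify is that $I$ is a yes instance iff some $J_a$ is. For the forward direction: if $F(x,y,z) = 1$, the definition of a $\Sigma_2^\cc$ protocol gives an $a$ with Alice accepting $(a,b)$ for every $b$, so the generalized inner product of $R_x(a,\cdot), R_y(a,\cdot), R_z(a,\cdot)$ is a sum of per-block zero terms, hence $J_a$ is a yes instance. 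For the reverse direction: if $F \equiv 0$, then for every $a$ there is a $b$ on which Alice rejects, which contributes a positive term to the generalized inner product, so every $J_a$ is a no instance. Computational efficiency of the protocol makes each vector entry computable in $\poly(d)$ time, so the whole reduction runs in $O(n \cdot 2^{O(m_1+m_2+\ell)} \cdot \poly(d))$ time and produces $2^{m_1}$ instances, as claimed.

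I expect there is no genuine obstacle here --- the argument is a direct transcription of the two-party case --- but the one point that needs to be pinned down is the precise three-party communication model for $\Sigma_2^\cc$ protocols. I would need: (i) a candidate transcript $w_i$ is something each of Alice, Bob, Charles can check for consistency using only their own input (and $a,b$), and (ii) a fixed triple of inputs determines a unique transcript. Both hold in the natural blackboard model where, after Merlin's and Megan's messages, the conversation is a public string that each player's response function reads from and writes to; once this is fixed, the per-coordinate ``consistent and rejecting'' predicate is well-defined for each player separately, and uniqueness of the jointly-consistent transcript follows from determinism. With that in place the rest is routine.
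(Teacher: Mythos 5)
Your argument is correct and is exactly the approach the paper intends: the paper omits the proof with the remark that it is identical to that of Theorem~\ref{theo:Sigma2-to-OV}, and your write-up is a faithful transcription to three parties (placing the ``Alice rejects'' condition only on $R_x$ is logically equivalent to the paper's choice, since exactly one transcript is consistent with all inputs, so the generalized inner product just tests whether that unique transcript is rejecting). Your closing caveat about the three-party model is appropriate and resolves the same way the paper implicitly assumes: in the blackboard model each player can independently check per-coordinate consistency with its own input, and determinism gives a unique jointly-consistent transcript, after which the calculation is routine.
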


We omit its proof here as it is identical to that of Theorem~\ref{theo:Sigma2-to-OV}.

Note that we can assume the integers in the $3$-SUM instance are in $[-n^4/2,n^4/2)$ without loss of generality. In order to apply Theorem~\ref{theo:Sigma2-to-3OV}, we need an efficient $\Sigma_2^\cc$ protocol for checking whether 3 numbers sum to zero.

\newcommand{\Fsum}{F_{\textsf{zero}}}

\begin{theo}\label{theo:protocol-for-checking-zero}
	For an integer $n$, let $\Fsum : \left( \{0,1\}^{4\log n} \right)^{3} \to \{0,1\}$ be the function that treats its inputs as three numbers in $[-n^4/2,n^4/2)$ (via a natural encoding), and checks whether they sum to zero. For any $1 \le T \le 4 \log n$, $\Fsum$ admits a $\Sigma_2^\cc$ protocol in which Merlin sends $O(\log n/T)$ bits, Megan sends $\lceil \log(n/T) \rceil$ bits and Alice, Bob, and Charles communicate $O(T)$ bits.
\end{theo}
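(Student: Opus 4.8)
The plan is to combine the Chinese Remainder Theorem with a "group the digits" trick in the style of the $\Sigma_2^\cc$ protocol for $\IP_{n,k}$ (Proposition~\ref{prop:Sigma2cc-for-IP}). First I would reduce the sum-check to a collection of modular checks. Let $x,y,z\in[-n^4/2,n^4/2)$ be the three inputs; then $x+y+z$ lies in $(-3n^4/2,3n^4/2)$, so there is a set of primes $p_1,\dots,p_t$ with $\prod_i p_i > 3n^4$ and each $p_i=O(\log n)$, $t=O(\log n)$, such that $x+y+z=0$ if and only if $x+y+z\equiv 0\pmod{p_i}$ for every $i\in[t]$. This is exactly the device used in Lemma~\ref{lm:com-protocol}. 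However, this alone only gives a $\coNP^\cc$-style protocol where, once a bad modulus index $i$ is revealed by Megan, Bob and Charles must each send their residues, costing $d\cdot\log p_i = O(\log n\cdot\log\log n)$ communication, which is too much when we want only $O(T)$ bits.

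The key new ingredient, and the step I expect to be the main obstacle, is to compress the residue transmission to $O(T)$ bits rather than $O(\log\log n)$ bits. The idea is to let Merlin do most of the work. After Megan names a modulus $p:=p_i$, we still need to verify $x\bmod p + y\bmod p + z\bmod p \equiv 0\pmod p$. Rather than having Bob and Charles send their full residues, I would have Merlin send claimed partial sums that split the verification into $O(\log n / T)$ small pieces: conceptually, write $x = \sum_{j} x^{(j)} B^j$ in base $B\approx 2^{\Theta(T)}$ with $O(\log n/T)$ digits, have Merlin claim the value of each digit-wise contribution modulo $p$ (this is where the $O(\log n/T)$ bits from Merlin come from, since each claimed value is $O(\log p)=O(\log\log n)$ bits — I would need to choose $T$ so that $T\ge \log\log n$, or more carefully, absorb the $\log\log n$ factor by taking the base of the form $2^{T}$ and noting $\log n/T \cdot \log\log n$ is what Merlin actually sends; a cleaner route is to have Merlin claim $x\bmod p$ and $y\bmod p$ and $z\bmod p$ directly along with which single residue Megan will challenge, reducing Megan's message to $\lceil\log(n/T)\rceil$ and each verifier's message to $O(T)$). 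Then Alice checks the claimed residues sum to $0$ modulo $p$ herself, Megan challenges one of the three claimed residues (or one of the $O(\log n/T)$ digit-blocks), and the challenged party sends only the $O(T)$ bits needed to confirm that single block of its own input agrees with Merlin's claim. The soundness of the $\Sigma_2^\cc$ structure — for all of Megan's challenges Alice accepts iff every block is consistent, so a cheating Merlin on a nonzero triple is caught by some challenge — follows the same pattern as Proposition~\ref{prop:Sigma2cc-for-IP}.

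I would then tally the costs: Merlin sends $O(\log n/T)$ bits (the claimed block-residues, after choosing the block structure so that each block is $O(T)$ bits and the per-block overhead is absorbed), Megan sends $\lceil\log(n/T)\rceil$ bits to name a bad prime index and/or bad block, and Alice, Bob, Charles together communicate $O(T)$ bits to reveal and verify the single challenged block. Computational efficiency is immediate since all operations are residue arithmetic on $O(\log n)$-bit integers with $O(\log n)$-bit moduli. Finally, I would note this is designed precisely so that plugging into Theorem~\ref{theo:Sigma2-to-3OV} with $T$ a constant yields a reduction from $3$-SUM to $\thOV_{n,O(\log n)}$ and hence Theorem~\ref{theo:3-sum-to-3-ov}. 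The delicate part of the write-up will be making the base-$B$ block decomposition interact cleanly with the CRT step so that the Merlin message genuinely stays $O(\log n/T)$ and not $O(\log n\log\log n / T)$; if that turns out to be unavoidable, the honest statement is that Merlin sends $O((\log n/T)\log\log n)$ bits, which still suffices for the constant-$T$ application, but I would first try to shave the $\log\log n$ by having Merlin commit only to $O(1)$ residues total and letting Megan's $\lceil\log(n/T)\rceil$-bit message also select which digit-block of the input is probed.
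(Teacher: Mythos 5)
There is a genuine gap, and it lies exactly where you flag the ``main obstacle.'' The CRT layer you import from Lemma~\ref{lm:com-protocol} is a red herring here: once Megan names a prime $p$, someone still has to certify the residues $y \bmod p$ and $z \bmod p$, and each residue costs $\Theta(\log p) = \Theta(\log\log n)$ bits. All of the workarounds you sketch either (a) have Merlin send $O(\log n / T)$ many such claims, giving $O((\log n/T)\log\log n)$ bits from Merlin rather than $O(\log n/T)$, or (b) are underspecified (e.g.\ ``Merlin claims which single residue Megan will challenge'' is not well-defined since Megan is adversarial and sound Merlin strategies must withstand every challenge). You also assert the weaker Merlin bound $O((\log n/T)\log\log n)$ ``still suffices for the constant-$T$ application,'' but it does not: plugging it into Theorem~\ref{theo:Sigma2-to-3OV} produces $2^{m_1} = n^{\Theta(\log\log n)}$ instances of $\thOV$, which is super-polynomial and kills the subquadratic reduction for $3$-SUM.

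The missing idea is that no modular reduction is needed at all. The paper's proof shifts $x,y,z$ to be non-negative, writes each in base $2^T$ with $\ell = 4\log n / T$ digits, and has Merlin send the \emph{carry sequence} $c \in \{0,1,2\}^{\ell}$ of the three-term addition; since every carry of a three-summand base-$B$ addition lies in $\{0,1,2\}$, this is $2\ell = O(\log n/T)$ bits with no $\log\log n$ overhead. Megan then names a single digit position $i$, Bob and Charles reveal $y_i$ and $z_i$ ($O(T)$ bits total), and Alice checks the local equation $x_i + y_i + z_i + c_{i-1} = t_i + c_i \cdot 2^T$. Soundness holds because if $x+y+z \ne t$ then for every claimed carry sequence some position's local equation must fail, and Megan can pick that position. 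This is a purely positional/carry-propagation argument, orthogonal to the CRT idea you borrowed from the Hopcroft reduction; the reason CRT is apt there but not here is that there Alice already knows her full vector and only needs one short modular coordinate from Bob, whereas here the whole point is to avoid sending any residue that is $\omega(T)$ bits, which carries achieve and residues do not.
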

\begin{proof}
	Let $x,y,z$ be three input numbers. Suppose Alice holds $x$, Bob holds $y$, and Charles holds $z$. We add $n^{4} / 2$ to each of them so that they now belong to $[0,n^4)$, and we want to check whether they sum up to $t = n^{4} / 2 \cdot 3$. Assuming $T$ divides $4 \log n$ for simplicity, we treat $x,y,z$ as numbers in $2^T$ base. Let $\ell = 4 \log n/ T$, and $x_{1},x_2,\dotsc,x_{\ell}$ be the digits of $x$ (from the least significant one to the most significant one, $y_i$'s, $z_i$'s, and $t_i$'s are defined similarly).
	
	Suppose we add $x,y,z$ together as numbers in $2^{T}$ base. Let $c \in \{0,1,2\}^{\ell}$ be a sequence of carries. We can see $x + y + z = t$ with respect to the carry sequence $c$ if and only if
	\[
	x_i + y_i + z_i + c_{i-1} = t_i + c_i \cdot 2^{T} \quad\text{for $i \in [\ell + 1]$.}
	\]
	In the above we set $c_0 = x_{\ell+1} = y_{\ell + 1} = z_{\ell + 1} = 0$.
	
	Therefore, in the protocol, Merlin sends the carry sequence $c$, which takes $O(\ell) = O(\log n / T)$ bits. Megan sends an index $i \in [\ell + 1]$. After that, Bob and Charles send $y_i$ and $z_i$ to Alice, respectively, and Alice accepts if and only if the above equality holds. It is straightforward to verify the protocol works.
\end{proof}

Finally, we are ready to prove Theorem~\ref{theo:3-sum-to-3-ov}.

\begin{proofof}{Theorem~\ref{theo:3-sum-to-3-ov}}
	Suppose $3$-$\OV$ is in truly-subquadratic time. That is, there is a constant $\epsilon$ such that for all constant $c$, $\thOV_{n,c\log n}$ can be solved in $n^{2-\epsilon}$ time.
	
	Given a $3$-SUM instance with integer entries in $[-n^4/2,n^4/2)$, it is just an $\Fsum\SATTRIPLE_{n}$ instance. Let $c_1$ be the constant hiding in $O(\log n/T)$ of Theorem~\ref{theo:Sigma2-to-3OV}, then $\Fsum$ admits a $\Sigma_2^\cc$ protocol in which Merlin sends $c_1\log n/T$ bits, Megan sends $\lceil \log(n/T) \rceil$ bits and Alice, Bob, and Charles communicate $O(T)$ bits.
	
	Set $T = c_1 \cdot 2/\epsilon$, Theorem~\ref{theo:Sigma2-to-3OV} implies this $3$-SUM instance can be reduced to $n^{\epsilon / 2}$ $\thOV_{n,2^{O(1/\epsilon)}\log n}$ instances. Applying the algorithm for $\thOV$, 3-SUM can be solved in $n^{2-\epsilon/2}$ time, which completes the proof.
\end{proofof}

\end{document}